\titleformat*{\paragraph}{\bfseries}
\pgfplotsset{compat=1.17}
\definecolor[named]{ACMBlue}{cmyk}{1,0.1,0,0.1}
\definecolor[named]{ACMYellow}{cmyk}{0,0.16,1,0}
\definecolor[named]{ACMOrange}{cmyk}{0,0.42,1,0.01}
\definecolor[named]{ACMRed}{cmyk}{0,0.90,0.86,0}
\definecolor[named]{ACMLightBlue}{cmyk}{0.49,0.01,0,0}
\definecolor[named]{ACMGreen}{cmyk}{0.20,0,1,0.19}
\definecolor[named]{ACMPurple}{cmyk}{0.55,1,0,0.15}
\definecolor[named]{ACMDarkBlue}{cmyk}{1,0.58,0,0.21}
\crefname{ineq}{Inequality}{Inequality}
\crefname{sub}{Subsection}{Subsection}
\crefname{sdp}{SDP}{SDP}
\crefname{lp}{LP}{LP}
\crefname{ineq}{Inequality}{Inequality}
\crefname{sub}{Subsection}{Subsection}
\crefname{sdp}{SDP}{SDP}
\crefname{lp}{LP}{LP}
\newtheorem{theorem}{Theorem}[section]
\newtheorem{lemma}[theorem]{Lemma}
\newtheorem{informal theorem}[theorem]{Theorem (informal statement)}
\newtheorem{problem}[theorem]{Problem}
\newtheorem{proposition}[theorem]{Proposition}
\newtheorem{corollary}[theorem]{Corollary}
\newtheorem{claim}[theorem]{Claim}
\newtheorem{fact}[theorem]{Fact}
\newtheorem{definition}[theorem]{Definition}
\newcommand{\eqdef}{\coloneqq}
\newcommand{\lp}{\left}
\newcommand{\rp}{\right}
\newcommand\norm[1]{\left\| #1 \right\|}
\renewcommand\vec[1]{\mathbf{#1}}
\DeclareMathOperator*{\pr}{\mathbf{Pr}}
\DeclareMathOperator*{\E}{\mathbf{E}}
\DeclareMathOperator*{\Var}{\mathbf{Var}}
\def\d{\mathrm{d}}
\newcommand{\normal}{\mathcal{N}}
\newcommand{\bx}{\mathbf{x}}
\newcommand{\by}{\mathbf{y}}
\newcommand{\bv}{\mathbf{v}}
\newcommand{\R}{\mathbb{R}}
\newcommand{\Z}{\mathbb{Z}}
\newcommand{\N}{\mathbb{N}}
\newcommand{\eps}{\epsilon}
\newcommand{\poly}{\mathrm{poly}}
\newcommand{\sgn}{\mathrm{sign}}
\newcommand{\sign}{\mathrm{sign}}
\newcommand{\Ind}{\mathds{1}}
\newcommand{\1}{\Ind}
\newcommand{\x}{\vec x}
\newcommand{\normald}[1]{\mathcal{N}_{#1}}
\newcommand{\iid}{{i.i.d.}\ }
\newcommand{\abs}[1]{\lp| #1 \rp|}
\renewcommand\Pr{\pr}
\DeclarePairedDelimiter\floor{\lfloor}{\rfloor}
\def\nnewcolor{0}
\newcommand{\inote}[1]{\footnote{{\bf [[Ilias: {#1}\bf ]] }}}
\newcommand{\dnote}[1]{\footnote{{\bf [[Daniel: {#1}\bf ]] }}}
\newcommand{\tnote}[1]{\footnote{{\bf [[Thanasis: {#1}\bf ]] }}}
\newcommand{\snote}[1]{\footnote{\textcolor{orange}{sihan: #1}}}
\newcommand{\inote}[1]{}
\newcommand{\dnote}[1]{}
\newcommand{\nnote}[1]{}
\newcommand{\vnote}[1]{}
\newcommand{\snote}[1]{}
\newcommand{\tnote}[1]{}
\newcommand{\ngca}{\mathcal M_{A,\vec v}}
\newcommand{\ngcav}{\mathcal M_{A,\vec v}}
\newcommand{\multix}{
\vec x^{(1)}, \ldots, \vec x^{(n)}}
\newcommand{\multiy}{
\vec y^{(1)}, \ldots, \vec y^{(n)}}
\newcommand{\bmultix}{
\vec x^{(1:n)}}
\newcommand{\bmultiy}{
\vec y^{(1:n)}}
\newcommand{\interpolate}[1]{\vec x^{(1)}, \cdots, \bar {\vec y} + #1 \vec v, \cdots, \vec y^{(n)}
}
\newcommand{\origin}{\vec x^{(1)}, \cdots, \bar {\vec y} + \xi^* \vec v, \cdots, \vec y^{(n)}
}
\newcommand{\cg}{c_{g}}
\newcommand{\ct}{c_{\text{trunc}}}
\newcommand{\cE}{\mathcal{E}}
\newcommand{\lnormal}{\normal\lp(\vec 0, \vec I - \vec v \vec v^\top\rp)}
\newcommand{\usphere}{ \mathcal U\lp(  \mathcal{S}^{d-1} \rp) }
\newcommand{\alt}{\mathrm{alt}}
\renewcommand{\epsilon}{\varepsilon}
\renewcommand{\hat}{\widehat}
\newcommand{\baryi}{\bar {\vec y}^{(i)}}
\newcommand{\Id}{ {\vec I} }
\newcommand{\cB}{ \mathcal B }
\newcommand{\cN}{ \mathcal N }
\newcommand{\cD}{ \mathcal D }
\newcommand{\huber}{ \text{huber} }
\newcommand{\bmu}{\boldsymbol{\mu}}
\title{PTF Testing Lower Bounds for Non-Gaussian Component Analysis}
\author{
Ilias Diakonikolas\thanks{Supported by NSF Medium Award CCF-2107079, and an H.I. Romnes Faculty Fellowship.}\\
University of Wisconsin-Madison\\
{\tt ilias@cs.wisc.edu}\\
\and
Daniel M. Kane\thanks{Supported by NSF Medium Award CCF-2107547 and NSF Award CCF-1553288 (CAREER).}\\
University of California, San Diego\\
{\tt dakane@cs.ucsd.edu}\\
\and
Sihan Liu\thanks{Supported by NSF Medium Award CCF-2107547 and NSF Award CCF-1553288 (CAREER).}\\
University of California, San Diego\\
{\tt sil046@ucsd.edu}
\and
Thanasis Pittas\thanks{Supported by NSF Medium Award CCF-2107079.}\\
University of Wisconsin-Madison\\
{\tt pittas@wisc.edu}\\
}
\begin{document}

\maketitle

\begin{abstract}%
This work studies information-computation gaps for statistical problems. 
A common approach for providing evidence of 
such gaps is to show sample complexity lower bounds 
(that are stronger than the information-theoretic optimum)  
against natural models of computation.
A popular such model in the literature is 
the family of \emph{low-degree polynomial tests}. 
While these tests are defined in such a way that 
make them easy to analyze, 
the class of algorithms that they rule out is somewhat restricted.
An important goal in this context has been to obtain lower bounds 
against the stronger and more natural class 
of low-degree \emph{Polynomial Threshold Function (PTF) tests}, i.e., 
any test that can be expressed as comparing 
some low-degree polynomial of the data to a threshold. 
Proving lower bounds against PTF 
tests has turned out to be challenging. 
Indeed, we are not aware of any non-trivial PTF testing  
lower bounds in the literature. 

In this paper, we establish the first non-trivial 
PTF testing lower bounds for a range of statistical tasks. 
Specifically, we prove a near-optimal PTF testing lower bound 
for Non-Gaussian Component Analysis (NGCA). 
Our NGCA lower bound implies similar 
lower bounds for a number of other statistical problems. 
Our proof leverages a connection to 
recent work on pseudorandom generators for PTFs 
and recent techniques developed in that context. 
At the technical level, we develop several tools of independent interest,  
including novel structural results for 
analyzing the behavior of low-degree polynomials 
restricted to random directions. 
\end{abstract}

\thispagestyle{empty}

\newpage
\setcounter{page}{1}
\section{Introduction}

In classical statistical estimation, the focus has primarily 
been on determining the minimum amount of information 
required to estimate the parameters of an 
unknown distribution to a desired level of accuracy. 
Classical statistical theory provides general methodology 
to characterize this quantity for a range of 
estimation and inference tasks. When 
taking computational aspects into account, the situation 
becomes more subtle. Statistically optimal 
estimators often entail an exhaustive search. 
On the other hand, known computationally efficient estimators 
often require more data than is necessary. 
A fundamental question 
is whether these observed gaps are inherent. 
An information-computation gap describes 
a scenario where no computationally efficient method  
can achieve the information-theoretic limits.

A key question is how to formally establish the existence of 
an information-computation gap for a particular problem.
Traditional methods from complexity theory, such as NP-hardness, seem 
inadequate for this purpose; see, e.g.,~\cite{ABX08}. 
Over the past decade, a line of work in theoretical computer 
science has made progress in our understanding of this broad 
question. A prominent approach to establishing information-computation gaps involves showing unconditional lower bounds 
within natural (yet restricted) 
computational models---such as Statistical Query (SQ) 
algorithms~\cite{Kearns93,FGR+13}, 
low-degree polynomials 
(LDP)~\cite{hopkins2018statistical,kunisky2019notes}, 
and Sum-of-Squares (SoS) algorithms 
(see, e.g.,~\cite{BarakSteurerNotes}). 
These methodologies have provided rigorous evidence of information-computation tradeoffs 
for a range of fundamental and well-studied statistical tasks.

In this paper, we consider the class of algorithms based on low-degree Polynomial Threshold Functions (PTFs).
A degree-$k$ 
PTF $f: \R^N \to \{ 0, 1\}$ is a function of the form $f(\vec x) = \sgn(p(\vec x))$, where $p: \R^N \to \R$ is a polynomial of degree at most $k$ and $\sign(u)$ denotes the function which is equal to $1$ 
whenever $u \geq 0$ and $0$ otherwise. 
PTFs is a natural class of Boolean functions that has 
been extensively studied in complexity theory and machine learning 
over the past six decades; 
see, e.g.,~\cite{Rosenblatt:58, Chow:61, MTT:61, Dertouzos:65, MinskyPapert:88} for some early work 
and~\cite{DGJ+:10, DHK+:10, DKN10, meka2010pseudorandom, Kane11focs, Kane14b, DeS14, DeDFS14, DRST14, DKS18-nasty,DK19-chow,o2020fooling,kelley2022random,DKKL24}.\looseness=-1

We will use the term ``PTF tests'' for the associated class of 
algorithms. As we will explain below, PTF tests are strictly stronger 
than LDP tests. Perhaps surprisingly, prior to this work, no non-trivial information-computation gaps were known against this class.

\vspace{-0.3cm}

\paragraph{Background}
Before defining the family of PTF tests, we provide some background. We focus on hypothesis testing problems, as lower bounds for more complex tasks (like learning or parameter estimation) often stem from testing lower bounds. In particular, the null hypothesis is a single distribution $D_{\emptyset}$, and the alternative is sampled from a prior $\mu$ on a family $\mathcal D_{\alt}$ of distributions ($\mu, D_{\empty}$, and $\mathcal D_{\alt}$ are known to the testing algorithm).\looseness=-1

\begin{problem}[Hypothesis Testing]\label{def:testing_problem}
    We are given $n$ samples in $\R^d$ generated in one of two ways:
    \begin{itemize}[leftmargin=1.5em, itemsep=0pt, topsep=3pt, partopsep=0em]
        \item (Null Hypothesis) The samples are drawn \iid from a known distribution $D_{\emptyset}$.
        \item (Alternative Hypothesis) A member $D_{\alt}$ is sampled according to a known prior distribution $\mu$ on a family of alternative distributions $\cD_{\alt}$, and then the $n$ samples are drawn \iid from $D_{\alt}$.
    \end{itemize}
    Given the samples, the goal is to distinguish between the two cases with high constant probability.
\end{problem}

A natural class of tests is based on PTFs. 
As we review below, the well-studied class of Low-Degree Polynomial Tests consists of PTFs but they take a specifc form.

\vspace{-0.4cm}

\paragraph{Low-Degree Polynomial (LDP) Tests}
We now define the family of LDP tests and discuss existing testing lower bounds in this model.
Informally, the family contains tests of the following form: For a polynomial $p$ that satisfies 
a separation in terms of its expected values under the null and the alternative distributions, the test is the thresholded version of $p$ at the midpoint of these expected values.
These tests are usually parameterized by two numbers: the maximum degree $k$ of the polynomial, which quantifies the runtime of evaluating the test, and, the number $n$ of samples used. 
In slightly more formal language,
given a polynomial $p$, 
a null distribution $D_{\emptyset}$ and an alternative distribution family $\cD_{\alt}$ that we aim to distinguish, 
the ``advantage'' $\gamma$ of the polynomial is defined 
as the difference in expected values of $p$ under $D_{\emptyset}$ versus under a random distribution $D_{\alt}$ from $\cD_{\alt}$, relative to the variance of the polynomial. We will use the notation $\bmultix$ as a shorthand notation for $\multix$ throughout the paper.
\begin{restatable}[$\gamma$-advantageous polynomial]{definition}{GOODPOLYNOMIAL}\label{def:good-ld}
Let $\gamma > 0$, $p: \R^{ n \times d } \mapsto \R$ be a degree-$k$, $n$-sample polynomial. 

Let $D_{\emptyset}$ be a distribution in $\R^d$, $\cD_{\alt}$ be a family of distributions in $\R^d$, $\mu$ be a distribution on $\cD_{\alt}$, and $H$ be the hypothesis testing of distinguishing between $D_{\emptyset}$ and $\cD_{\alt}$ with prior $\mu$. 
We say that $p$ is a degree-$k$, $n$-sample, $\gamma$-advantageous polynomial with respect to the testing problem $H$ if:\footnote{Many works focusing on lower bounds,  use only the variance under $D_{\emptyset}$ in the RHS of the condition in \Cref{def:good-ld}. }
\vspace{-0.15cm}
\begin{align*}
\bigg|
\E_{\bmultix \sim D_{\emptyset}}
\lp[  p(\bmultix) \rp]
- \hspace{-8pt}
 \E_{\substack{D_{\alt} \sim \mu \\ \bmultiy \sim D_{\alt}}}
\hspace{-6pt}\lp[  p(\bmultiy) \rp]
\bigg| 
> 
\gamma \; \max \bigg( 
 \Var_{\bmultix \sim D_{\emptyset}}\lp[ 
p(\bmultix)
\rp]  
,
\hspace{-6pt}
\Var_{ \substack{D_{\alt} \sim \mu \\ \bmultiy \sim D_{\alt}}}\lp[ 
p(\bmultiy)
\rp]  
\bigg)^{1/2}\hspace{-12pt}.     
\end{align*}
\vspace{-0.45cm}
\end{restatable}

\noindent    The family of \emph{$n$-sample, $k$-degree polynomial tests} includes all tests $h: \R^{n\times d} \to \{0,1\}$ of the following form. For every polynomial $p: \R^{n\times d} \to \R$ of degree at most $k$, the family contains a test $h(\multix) = \sgn\left( p(\multix) - \kappa \right)$ that thresholds the polynomial at the point $\kappa$ which is defined to be the midpoint of the two expectations $\E_{\multix \sim D_{\emptyset}}
\lp[  p(\vec x^{(1)}, \cdots ,\vec x^{(n)}) \rp]$ and $\E_{D_{\alt} \sim \mu,\multiy \sim D_{\alt}}
\lp[  p(\vec y^{(1)}, \cdots, \vec y^{(n}) \rp]$.

It then follows immediately from Chebyshev's inequality that if there is a $\gamma$-advantageous polynomial $p$, then the test $h$ in the family corresponding to $p$ has bounded error probability:
\begin{align*}
    \Pr_{\multix \sim D_{\emptyset}}\big[ h\left(\multix \right) = 1 \big] + \Pr_{D_{\alt} \sim \mu,\multiy \sim D_{\alt}}\big[ h\left(\multiy \right) = 0 \big] \leq 8/\gamma^2 .
\end{align*}
That is, a polynomial 
with large advantage $\gamma$ directly translates to an effective tester 
with low error probability.
Conversely, an upper bound on $\gamma$ against 
all low-degree polynomials indicates that 
no tests (with high success probability) 
exist under this design framework. 

A convenient fact about \Cref{def:good-ld} is that 
the optimal advantage for a testing problem is relatively 
easy to analyze.\footnote{This holds for the variant 
where the RHS in \Cref{def:good-ld} includes 
only the variance with respect to $D_{\emptyset}$, 
which is usually a much simpler distribution. 
Using this variant suffices for the purpose of proving 
upper bounds on $\gamma$.} 
In particular, if one squares the defining equation 
for $\gamma$-advantage, it becomes a bound on the relative 
size of two explicit quadratic forms 
over the space of all degree at most $k$-polynomials, 
which can be explicitly optimized 
to find the optimal value of $\gamma$.
This yields a convenient framework for analyzing the power of Low-Degree Polynomial tests, in which many quantitatively 
tight lower bounds have been established; 
see, e.g., ~\cite{hopkins2017efficient,bandeira2019computational,kunisky2019notes,BBHLS21,mao2021optimal,schramm2022computational,ding2024subexponential} 
for a variety of statistical problems.\looseness=-1

\vspace{-0.2cm}

\paragraph{(General) Polynomial Threshold Function Tests}
Following the above discussion, one could see that
a key limitation of existing hardness results for the LDP tests family is that 
they only yield lower bounds against the \emph{specific} proof technique based on second-moment Chebyshev's inequality
while leaving the general power of all \emph{polynomial threshold function} tests, namely all tests of the form $\sgn( p( \vec x^{(1)}, \cdots, \vec x^{(n)} ) )$, where $p$ is an arbitrary low-degree polynomial, and $\vec x^{(i)}$ are the samples drawn, poorly understood; see the definition below for a formal definition of a low-degree PTF test.
\begin{definition}[$\beta$-good PTF test]\label{def:goodptf}

Let $D_{\emptyset}$ be a distribution in $\R^d$, $\cD_{\alt}$ be a family of distributions in $\R^d$, $\mu$ be a distribution on $\cD_{\alt}$, and $H$ be the hypothesis testing problem whose null distribution, alternative distributions family, and prior distribution are given by $D_{\emptyset}$, $\cD_{\alt}$, and $\mu$ accordingly. 
Let $h: \R^{n \times d} \mapsto \{0, 1\}$ be a polynomial threshold function of degree-$k$. 
We say $h$ is a $\beta$-good PTF test for $H$ if it satisfies that
\begin{align}
\label{eq:good-ptf-def}
\bigg| 
\E_{ \multix \sim D_{\emptyset} }
\lp[ 
h( \vec x^{(1)}, \cdots, \vec x^{(n)} )
\rp]
- 
\E_{ D_{\alt} \sim \mu, \multiy \sim D_{\alt} }
\lp[ 
h( \vec y^{(1)}, \cdots, \vec y^{(n)} )
\rp]
\bigg|
\geq \beta.    
\end{align}
\end{definition}

There is no clear reason to expect that the failure of this specific technique on a test problem would rule out all PTFs. 
Indeed, as we show in \Cref{sec:comparison}, there exist simple examples of pairs of null and alternative distributions for which no polynomial satisfies the separation condition of \Cref{def:good-ld}, yet PTF tests effectively solve the testing problem with high success probability.

Understanding the power of general PTF tests is considered 
as a prominent research direction within the relevant community~\cite{hopkins2024, wein2024}. Specifically, 
a recent workshop~\cite{hopkins2024lowdegree} 
on information-computation tradeoffs 
highlighted PTF tests as one of the main directions 
in the frontier of this field. In his new survey~\cite{WeinSurvey}, 
Wein writes: \emph{``[...] it is an interesting open problem to rule out other notions of success such as thresholding, but this seems beyond our current capabilities.''} 
In this work, we therefore ask the question:\looseness=-1
\begin{center}
\emph{Can we rigorously prove information-computation gaps within the family of all PTF tests?}
\end{center}
As our main contribution, we answer this question in the affirmative. 
In particular, we give the first  PTF testing 
lower bound for the fundamental task of \emph{Non-Gaussian Component Analysis}. As an immediate corollary, we obtain PTF lower bounds 
for a number of other statistical problems.

\vspace{-0.4cm}

\paragraph{Non-Gaussian Component Analysis}

Historically, NGCA
is a problem that originates from the signal processing literature \cite{blanchard2006search}, and has since
attracted much attention from the algorithmic statistics and theoretical machine learning communities; see \cite{sugiyama2008approximating,sasaki2016non,diederichs2013sparse,goyal2019non,dudeja2024statistical}. 
Informally, the problem corresponds to the task of
searching for a non-Gaussian direction of some high-dimensional distribution. 
The testing version of NGCA aims at 
distinguishing between a high dimensional standard Gaussian $\normal(\vec 0,\vec I)$ and a distribution that is non-Gaussian along an unknown direction, but behaves like standard Gaussian in every other orthogonal direction. 
The alternative distribution, commonly referred to as the hidden-direction distribution is defined below:
\begin{definition}[High-Dimensional Hidden Direction Distribution]\label{def:hidden_distr}
    For a distribution $A$ on $\R$ and a unit vector $v$ of $\R^d$, we denote by $\ngca$ the distribution of the random variable $\vec x + \xi \vec{v}$, where $\vec x \sim \normal(\vec 0, \vec I - \bv \bv^\top)$ and $\xi \sim A$. That is, $\ngca$ is the distribution which coincides with $A$ on the direction of $\bv$ and is standard Gaussian in the orthogonal subspace.
\end{definition}

The standard assumption is that the non-Gaussian component $A$ is also 
similar to $\normal(0,1)$ in the sense that their first $m$ moments 
match. The higher the $m$, the harder it becomes to distinguish 
$\ngca$, for random $\bv$, from the standard multivariate Gaussian. 
On the other hand, for algorithms to exist, one needs to assume that 
the $(m+1)$st moment differs by a non-trivial amount.
\begin{problem}[Non-Gaussian Component Analysis]\label{def:ngca_problem}
    Let $A$ be a distribution in $\R$ that matches the first $m$ moments with $\normal(0, 1)$.
    We are given $n$ samples generated in one of the following two ways:
    \begin{itemize}[leftmargin=1.5em, itemsep=0pt, topsep=3pt, partopsep=0em]
        \item (Null Hypothesis) The samples are drawn \iid from $D_{\emptyset} =\normal(\vec 0, \vec I)$.
        \item (Alternative Hypothesis) First, a unit vector $\vec v \in \R^d$ is drawn uniformly at random from the unit sphere, then the samples are drawn \iid from $D_{\alt} = \ngca$ from \Cref{def:hidden_distr}.
    \end{itemize}
    Given the samples, the goal is to distinguish between the two cases with high constant probability.
\end{problem}

A concrete motivation in studying NGCA is that it exhibits \emph{information-computation gaps} 
when $A$ is carefully constructed to match many moments with the standard Gaussian.
Information theoretically, it is known that the sample complexity of this problem is $O(d)$ under some mild assumptions on the distribution $A$ (see, e.g.,~\cite{vempala2011structure}).
Nonetheless, all known efficient algorithms require significantly larger resources (see e.g., \cite{dudeja2024statistical}).
Prior work has given formal evidence that
solving NGCA requires either access to a large amount of information (specifically, many \iid samples from the test distribution) 
or significant computational resources.
Concretely, this phenomenon has been established for several well-studied families of algorithms, including Statistical Query (SQ) algorithms \cite{DKS17-sq,diakonikolas2023sq}, Sum-of-Squares (SoS) algorithms \cite{diakonikolas2024sum}, and Low-Degree Polynomial tests \cite{mao2021optimal,BBHLS21}.
Since NGCA can be used to embed hard instances of several other statistical tasks~\cite{DKS17-sq}, a lower bound against 
NGCA directly implies lower bounds for several other problems.
In this work, we give the first lower bound of NGCA against the family of PTF tests.\looseness=-1

\vspace{-0.4cm}

\paragraph{Connection to Pseudorandom Generators}
There is a close connection between lower bounds for hypothesis 
testing and the theory of pseudorandom generators (PRGs). Specifically, 
for a test to effectively distinguish between the null and alternative 
hypotheses, the probability of accepting each hypothesis must differ significantly. 
In the language of PRGs, \Cref{def:goodptf} failing implies that 
$D_{\text{alt}}^{\otimes n}$ ``fools'' $h$ with respect to 
$D_\emptyset^{\otimes n}$ with error at most $\beta$. Thus, proving 
lower bounds against degree-$k$ PTF tests with $n$ samples reduces to 
showing that $D_{\text{alt}}^{\otimes n}$ fools low-degree PTFs with 
respect to $D_\emptyset^{\otimes n}$.\looseness=-1

There is an extensive literature on PRGs for PTFs. 
These works generally aim to construct low-entropy distributions that fool PTFs under structured high-entropy distributions such as Gaussians,  
see, e.g.,~\cite{DGJ+09,DKN10,meka2010pseudorandom,Kane11ccc,Kane11focs,Kane12subpoly,o2020fooling,kelley2022random}. Despite this distinction, 
some techniques 
developed in the PRG literature (in particular \cite{kelley2022random}) can be leveraged in our setting.

For our specific NGCA problem, the null distribution is precisely the standard Gaussian---for which some of the strongest PRGs for PTFs are known. Moreover, the alternative distributions are assumed to match many moments with the Gaussian, a condition that is essentially necessary for indistinguishability and one that is also required by SQ and Low-degree lower bounds for establishing information-computation gaps in the literature.\looseness=-1

\subsection{Our Result}
Our main theorem establishes that when the non-Gaussian component $A$ from \Cref{def:ngca_problem} matches the first $m$ moments with $\normal(0,1)$,  
there is no degree-$k$ PTF that satisfies \Cref{def:goodptf} with $\beta = 0.11$ for the NGCA hypothesis testing problem of \Cref{def:ngca_problem}, unless at least one of the following holds: The sample complexity $n$ is at least $d^{\Omega(m)}$ or the degree $k$ of the PTF is at least $d^{\Omega(1)}$ (suggesting a runtime of $(nd)^{ d^{\Omega(1)} }$).
Notably, this matches quantitatively with the computation-statistic tradeoff established for the weaker model of LDP tests.\looseness=-1

\begin{theorem}[Main Result]
\label{thm:main}
There exists a sufficiently large absolute constant $C^*$ such that the following holds.
For any $c^* \in (0, 1/4)$, $d,k,n,m \in \mathbb Z_+$ such that (i) $m$ is even, (ii) $\max(k,m) < d^{c^*/C^*}$, and (iii) $n < d^{ (1/4 - c^*) m}$,
we have that 
if $p: \R^{n \times d} \mapsto \R$ is a degree-$k$ polynomial,  and $A$ is a distribution on $\R$ that matches the first $m$ moments with $\normal(0,1)$, then:
\vspace{-3pt}
\begin{align}\label{eq:thm_conclusion}
\Bigg| \E_{  \substack{\vec v \sim \usphere\\ \multix \sim \ngcav }}
\lp[  \sgn(p(\vec x^{(1)}, \cdots, \vec x^{(n)})) \rp]
-  
\E_{ \multix \sim \normal(\vec 0, \vec I)}    
\lp[  \sgn(p(\vec x^{(1)}, \cdots, \vec x^{(n)})) \rp] \Bigg| 
\leq 0.11.
\end{align}
where $\ngcav$ denotes the hidden direction distribution from \Cref{def:hidden_distr}, and 
$\sgn : \R \to \{0,1\}$ is the sign function with $\sgn(x) = 1$ if and only if $x \geq 0$.
\end{theorem}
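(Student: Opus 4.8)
The plan is to rephrase the statement as a pseudorandomness claim --- that the $n$-fold alternative distribution ``fools'' every degree-$k$ PTF relative to $\normal(\vec 0,\vec I)^{\otimes n}$ in the sense of \Cref{def:goodptf} with $\beta=0.11$ --- and to establish it by combining two ingredients: the moment-matching of $A$ with $\normal(0,1)$, and the fact that a worst-case degree-$k$ polynomial becomes \emph{regular} once restricted to a random direction. Concretely, I would condition on the hidden direction $\vec v$ and decompose each sample as $\vec x^{(i)}=\vec z^{(i)}+t_i\vec v$ with $t_i=\la \vec x^{(i)},\vec v\ra$, as in \Cref{def:hidden_distr}; under both hypotheses the $\vec z^{(i)}$ are \iid $\lnormal$, while $t_i\sim\normal(0,1)$ under the null and $t_i\sim A$ under the alternative. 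It therefore suffices to show that, for all but (say) a $0.01$-fraction of $\vec v$, changing the law of $(t_1,\dots,t_n)$ from $A^{\otimes n}$ to $\normal(0,1)^{\otimes n}$ alters $\E[\sgn(p)]$ by at most $0.1$; since the endpoint of this change is exactly the null distribution (which does not depend on $\vec v$), averaging over $\vec v$ completes the reduction.

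For the core estimate I would run a hybrid argument over $t_1,\dots,t_n$, swapping one $t_i$ from $A$ to $\normal(0,1)$ at a time. Bounding a single swap uses two tools. The first, and technical heart, is a structural lemma: for $\vec v$ uniform on $\sphere$, with probability $\ge 0.99$ the polynomial $p$ is regular along the directions $\e_i\otimes\vec v$ simultaneously for all $i\in[n]$ --- writing $p=\sum_\alpha c_\alpha(\vec z^{(1:n)})\prod_i\he_{\alpha_i}(t_i)$, the ``$t_i$-influence'' $\sum_{\alpha:\alpha_i\ge1}\alpha!\,\snorm{2}{c_\alpha}^2$ is at most $d^{-\Omega(1)}\snorm{2}{p}^2$, together with the analogous bounds on the higher-order quantities the replacement step needs, and robustly so over the Gaussian randomness in the $\vec z$'s. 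I would prove this by expressing each such quantity as a contraction $\la T,\vec v^{\otimes j}\ra$ of the degree-$j$ directional-derivative tensors $T$ of $p$ and invoking concentration of low-degree polynomials of a random unit vector, drawing on the random-restriction machinery developed for Gaussian PRGs for PTFs (notably \cite{kelley2022random}), with the choice of $\vec v$ playing the role of a random restriction that smooths out $p$. The second tool is a replacement lemma: given the regularity, mollify $\sgn$ by a smooth $\psi_\lambda$ agreeing with it outside $[-\lambda,\lambda]$ with $|\psi_\lambda^{(\ell)}|\lesssim\lambda^{-\ell}$ (the replacement error being controlled by anti-concentration of $p$ --- Carbery--Wright on the null side, and a variant on the alternative side, which is Gaussian in all but one direction per sample with $A$ moment-bounded), then Taylor-expand $\psi_\lambda$ in the non-constant part $\tilde r$ of the univariate restriction $r(t)=p(\dots,\vec z^{(i)}+t\vec v,\dots)$; the moment-matching of $A$ annihilates every Taylor term whose Hermite degree in $t$ is at most $m$ (regularity forces $\tilde r$ to be essentially a small multiple of $t$, so the first surviving term is the $(m{+}1)$-st power of $\tilde r$), while regularity makes the surviving terms of size $d^{-\Omega(m)}$. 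Summing the $n$ swaps yields total error $n\cdot d^{-\Omega(m)}+(\text{mollification error})$, which is below $0.11$ once $n<d^{(1/4-c^*)m}$ and $k,m<d^{c^*/C^*}$.

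The main obstacle is the structural lemma and, alongside it, calibrating everything so the two ingredients compose. One must establish not just a second-moment/influence bound but the entire family of regularity properties that the Taylor expansion of the mollified sign uses, uniformly over all $n$ coordinates and robustly over the orthogonal Gaussian randomness; and one must choose the mollification scale $\lambda$ to make the anti-concentration error small while keeping the Taylor remainder under control, which is delicate because the Carbery--Wright bound degrades with the degree $k$ and because $A$ is only assumed to have bounded moments rather than being hypercontractive, so the anti-concentration and norm estimates on the alternative side require separate care.
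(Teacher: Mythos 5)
Your overall architecture coincides with the paper's: a hybrid argument over one coordinate at a time, a mollified sign function, a directional Taylor expansion whose first $m$ terms are killed by the moment matching of $A$, and derivative decay along a random direction obtained by contracting the derivative tensors of $p$ with $\vec v^{\otimes t}$ (this is \Cref{lem:derivative-decay}, proved essentially as you sketch, via the Kelley--Meka slow-growth fact plus moments of $\vec v^{\otimes t}$). The first genuine gap is your mollification step. You mollify $\sgn$ at a fixed scale $\lambda$, with $|\psi_\lambda^{(\ell)}|\lesssim\lambda^{-\ell}$, and charge the disagreement region to Carbery--Wright. For a degree-$k$ polynomial, Carbery--Wright only gives $\Pr[|p|\le\lambda]\lesssim k\,\lambda^{1/k}$, so keeping the mollification error a small constant forces $\lambda\le k^{-\Theta(k)}$; the surviving $(m{+}1)$-st order Taylor term then carries a factor $\lambda^{-(m+1)}=k^{\Theta(km)}$, which the $d^{-\Theta(m)}$ gain from the random direction can absorb only when $k=O(\log d)$ --- far short of $k<d^{c^*/C^*}$, especially since each replacement must also beat $1/n$ with $n$ as large as $d^{(1/4-c^*)m}$. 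This is precisely the obstruction discussed in \Cref{sec:technical-overview}; the paper avoids any global scale by building the mollifier from the \emph{relative} quantities $d^{c_g t}\|p^{[t],\vec v}\|_F^2/p^2$ (\Cref{eq:mollifier-def}), whose Gaussian failure probability is only $\poly(k)\cdot\eps$ (\Cref{fact:strong-anti-conc}, \Cref{lem:derivative-decay}) and whose derivatives scale like $d^{-c_g t/2}$ with no $\lambda^{-1}$-type blow-up (\Cref{lem:mollify-derivative-decay}).

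The second gap is at intermediate hybrids. Your structural lemma is certified ``robustly over the Gaussian randomness in the $\vec z$'s,'' but at step $i$ the coordinates $t_{i+1},\dots,t_n$ have already been switched to $A$, and $A$ is an arbitrary moment-matching distribution --- it can be purely atomic --- so neither your regularity statement nor the ``variant of Carbery--Wright on the alternative side'' that you invoke is available there. The paper resolves exactly this issue by making the directional derivative-decay check part of the mollifier itself, so it is enforced pointwise on every input (the mollified PTF is simply zero whenever the check fails), and by a one-sided sandwich $h\le\sgn(p)$ applied to both $p$ and $-p$ (\Cref{lem:sandwich}), so that closeness of $h$ to the PTF is needed only under the Gaussian; this is combined with truncation of the support of $A$ (\Cref{lem:truncation}) and propagation of derivative decay over the truncated interval (\Cref{lem:nearby-derivative-decay}) to prove the replacement step (\Cref{prop:replacement-step}) for every $i$. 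Without these two fixes, your plan yields the statement only for $k$ polylogarithmic in $d$, and it rests on an anti-concentration claim for the alternative distribution that is false in general.
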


One way of interpreting \Cref{thm:main} is as follows. 
Let $c^*$ be an arbitrary constant chosen from $(0, 1/4)$.
If the PTF test uses  $n < d^{(1/4 - c^*) m}$ many samples, then the degree of the tester $k$ must be at least some polynomial in the sample dimension $d$ in order for the tester to be effective.
The degree $k$ can be interpreted as a parameter controlling the runtime of the tester.
For an arbitrary polynomial $p$, 
the runtime for this computation 
is on the order of $\poly((nd)^k)$---as this is the space required 
for storing all relevant coefficients of $p$. 
Consequently, \Cref{thm:main} implies an inherent trade-off between the 
exponential runtime $(nd)^{d^{\Theta(c^*)}}$, and 
the sample complexity $d^{(1/4 - c^*) m}$ for the family of PTF tests.\looseness=-1

One might wonder to what extent is the tradeoff established in \Cref{thm:main} optimal, and how our result compares quantitatively to the hardness results shown in other restricted computation models.
For hardness results against Statistical Query algorithms
and LDP tests, existing lower bounds imply that there are no ``efficient'' testers within these families if the number of samples drawn is less than $d^{ (1/2 - c^*) m }$ (see \Cref{cor:low-deg-hardness-general-problem} and \Cref{thm:SQ}
for the formal statements).
Interestingly, the constant $1/2$ in the exponent of 
these results is better than the constant $1/4$ 
that appears in \Cref{thm:main}.
Surprisingly, this gap---rather than being an artifact of our proof technique---is inherent for the family of PTF tests.
In particular, if one fixes $m$ to be some constant, we show 
in \Cref{thm:optimality} that there exists a degree-$\Theta(\log d)$ PTF that draws only
$\tilde \Theta( d^{m/4} )$ many samples, and effectively solves the NGCA problem for some specific moment-matching distribution $A$.
As a sharp contrast, the SQ, and LDP test lower bounds predict that no such test should exist.
This suggests that the constant $1/4$ that appears in \Cref{thm:main} is indeed worst-case optimal, 
and that PTF tests are (slightly) more powerful 
than SQ and LDP tests for certain NGCA problem instances. 
We leave it as an interesting open question whether the lower bound against PTF tests can be improved by making further structural assumptions on the non-Gaussian component $A$ beyond the moment-matching condition.
\looseness=-1

Before we end this subsection, we briefly comment on the subtle condition that $m < d^{c^*/C^*}$.
This condition is commonly used in the literature for lower bounds  against NGCA (see \cite{DKS17-sq,diakonikolas2023sq}).
Informally, the necessity of such a condition can be seen as follows. 
If $m \approx d$, one cannot hope to prove a computation lower bound 
of $(nd)^{m} \approx 2^{ d \log d }$. 
There is always a simple test that runs in time $2^{ O(d) }$: 
construct an exponential size cover of all possible directions 
$\vec v$, project the distribution along each possible direction 
and reduce to a one-dimensional testing problem.

\vspace{-0.4cm}

\paragraph{Application to Other Statistical Tasks}
For many important statistical problems---such as robust mean estimation, list-decodable mean estimation, and learning Gaussian mixture models--- that are seemingly unrelated to each other, 
one can construct hard instances that can be encoded as 
NGCA instances; see Chapter 8 of \cite{diakonikolas2023algorithmic} 
or Section 1 of \cite{diakonikolas2024sum} for a more thorough treatment. This means that proving an information-computation gap 
for NGCA within a given computational model translates 
to information-computation gap for all these tasks. 
See \Cref{tab:sample_complexity} for a summary of some PTF 
testing lower bounds obtained as corollaries of \Cref{thm:main}.

\begin{table}[h]
\centering
\textbf{\small Information-Computation Gaps for PTF tests}
\scalebox{0.9}{
\begin{tabular}{@{}l|c|c@{}}
\toprule
\makecell{Statistical Estimation Task}                      & Information-Theoretic & 
\makecell{Sample Complexity \\for low-degree PTFs}
\\ 
\midrule
\makecell{Robust Mean Estimation up to $\ell_2$-error 
\\
$O(\tau\sqrt{\log(1/\tau)}/B^2)$   
with Isotropic Gaussians}
& $O_{\tau,B}(d)$             & $\Omega(d^{ 
 B(1-c^*)/4})$ \\ \hline
\makecell{Robust Mean Estimation up to $\ell_2$-error \\
$O(\frac{1}{m} \tau^{1 - 1/m})$  with bounded $m$-th moments}
& $O_{\tau}(d )$     \arraybackslash       & $\Omega(d^{m(1-c^*)/4})$ \\ \hline
\makecell{List-decodable Mean Estimation to error $O((m\tau)^{-1/m})$}  & $O_{\tau}(d)$         & $\Omega(d^{m(1-c^*)/4})$ \\ \hline
\makecell{Learning the mixture of $m$ Gaussians}                 &   $\widetilde{O}(m d)$         & $\Omega(d^{m(1-c^*)/2})$ \\ 
\bottomrule
\end{tabular}
}
\caption{\small Comparison of our PTF lower bounds with the information-theoretic sample complexity for various tasks. The parameter $\tau$ is the rate of contamination, and the parameter $c^*$ can be set to any arbitrarily small constant. See \Cref{sec:concrete_apps}
for the formal statements of the results that appear in the table.}\label{tab:sample_complexity}
\end{table}

\vspace{-0.4cm}

\paragraph{Comparison with Existing Lower Bounds}
In general, lower bounds against PTF tests are incomparable to SQ and SoS lower bounds, as they capture different structural limitations of learning algorithms.
For the specific problem of Non-Gaussian Component Analysis, SQ lower bounds are effectively equivalent to Low-Degree Polynomial (LDP) lower bounds as established in \cite{BBHLS21}. Our result therefore strengthens prior SQ and LDP lower bounds by demonstrating hardness under the more general PTF testing framework. 

\vspace{-0.2cm}

\paragraph{Future Directions}
Our work proves the first lower bounds against general PTF tests, a strenthening of the well-studied low-degree polynomial test family, for multiple statistical problems.
Several open questions remain for future research. 
A key technical question is whether our lower bounds can be quantitatively improved under additional structural assumptions on the non-Gaussian component distribution $A$ that arise naturally in learning-theoretic settings, like bounded chi-square distance between $A$ and $\normal(0,1)$. Identifying such refinements could lead to sharper hardness results and a deeper understanding of the limitations of PTF tests. 
Another important direction is to obtain PTF testing 
lower bounds for other fundamental statistical estimation problems 
not covered in this paper, 
such as planted clique \cite{BHKKMP16} 
and sparse principal component analysis \cite{zou2006sparse}.

\subsection{Technical Overview}\label{sec:technical-overview}
We will use the notation $\bmultix$ to denote the sequence of vectors $\multix$.
We want to show that for any low degree polynomial $p: \R^{n \times d} \mapsto \R$, we have that
\begin{align}
\label{eq:technique-main-objective}
\E[\sgn \lp( p( \vec x^{(1:n)} ) \rp)] 
\approx
\E\lp[\sgn \lp(p( \vec y^{(1:n)} ) \rp) \rp] 
\, , 
\end{align}
where each $\vec x^{(i)} \in \R^d$ follows an independent standard Gaussian, and each $\vec y^{(i)} \in \R^d$ follows our hidden direction distribution $\ngcav$
(see \Cref{def:hidden_distr}) using a fixed vector $\vec{v}$, which is sampled once uniformly at random from the unit sphere at the start.

We begin with a brief recap of the setup in the PRG literature, and present a high level comparison between these approaches and ours. 
These works consider a generator distribution $\bar {\vec y} = n^{-1/2} \sum_{i=1}^n \baryi$ in $\R^d$, where each $\baryi$ is chosen to be some low-entropy distribution whose low-degree moments match with the standard Gaussian.\footnote{In fact, each $Y_i$ is 
independently from a $k$-wise independent family of Gaussian distributions, which effectively satisfies the low-entropy and moment-matching conditions.
} 
Given an arbitrary low-degree polynomial $q: \R^d \mapsto \R$, the goal is then to show that
$
\E[  \sgn(q( \vec x  ) ) ]
\approx
\E\lp[  \sgn\lp(q\lp( n^{-1/2} \sum_{i=1}^n \baryi \rp) \rp) \rp] , 
$
where $\vec x$ is a standard Gaussian vector in $\R^d$.
Note that the standard Gaussian $\vec x$ 
can be alternatively written as $\vec x = n^{-1/2} \sum_{i=1}^n \vec x^{(i)}$,  where the $\vec x^{(i)}$'s are themselves independent standard Gaussian distributions. 
Under this setup, their objective can be alternatively formulated 
\vspace{-2pt}
\begin{align}
\label{eq:prg-main-objective}
\E\lp[  \sgn \lp(\bar p \lp( \vec x^{(1:n)}\rp) \rp) \rp]
\approx
\E\lp[  \sgn\lp(\bar p \lp( \bar {\vec y}^{(1:n)} \rp) \rp) \rp]  \quad \text{where} \quad \bar p( \vec z^{(1:n)}  ) = q\bigg(   \tfrac{1}{\sqrt{n}}\sum_{i=1}^n \vec z^{(i)} \bigg).
\end{align}
Under this new formulation, one can see that our objective \Cref{eq:technique-main-objective} looks particularly similar to theirs except for two subtle differences.  
First, in the PRG setup, the polynomial $\bar p: \R^{n \times d} \mapsto \R$ has the specific form of being a lifted version of a significantly \emph{lower-dimensional} polynomial $q: \R^d \mapsto \R$ (see \Cref{eq:prg-main-objective}). 
In fact, the structure turns out to be quite convenient for controlling the higher order derivatives of $\bar p$ in terms of each individual variable $\vec z^{(i)}$, which are subsequently exploited in the PRG literature to establish the left part of \Cref{eq:prg-main-objective}.
In our setup, the polynomial $p$ has a much more complex structure as it can be an \emph{arbitrary} polynomial with $n \times d$ many inputs. 
This makes the task seemingly intractable at a first glance. 
The twist lies in the second difference between our setups:
the underlying distribution of each $\vec y^{(i)}$ in our objective exhibits significantly more structure 
compared to the pseudorandom distributions of $\baryi$.
Specifically, instead of being an arbitrary distribution whose low-degree moments match with the standard Gaussian, the distribution of $\vec y^{(i)}$ is \emph{identical} to the Gaussian distribution in all but some \emph{randomly} chosen direction $\vec v$.
This turns out to be a valuable property that counteracts the complexity of the polynomial $p$ in our setup.
More concretely, our main insight is that the specific properties of the hidden direction distribution
essentially allow us to restrict our attention to higher order \emph{directional} derivatives of $p$ (projected onto the hidden direction $\vec v$).
By exploiting the fact that 
$\vec v$ is chosen randomly, we manage to show that the higher order directional derivatives of $p$ can be effectively bounded in a way that is qualitatively similar to (but quantitatively different from) the case in the PRG literature, allowing us to escape from the full complexity of the polynomial $p$ in all directions.

In what follows, we give a more detailed sketch of our arguments interleaved with technical comparisons with the prior work in the PRG literature.

\paragraph{Overall Framework: Hybrid Argument and Mollification}
The basic proof strategy is via the so called \emph{hybrid argument} developed in the PRG literature.
At a high level, the framework is motivated by the wishful thinking that the task may be significantly easier if 
one progressively replaces the samples $\vec x^{(1:n)}$ into $\vec y^{(1:n)}$.
In particular, in a single ``replacement step'', 
the goal is to show the following intermediate approximation steps for all $i=1,\ldots,n$:
\begin{align}
\E\lp[\sgn \lp( p \lp (\bx^{(1:i-1)},\bx^{(i)},\by^{(i+1:n)}\rp) \rp)\rp]    
\approx 
\E\lp[\sgn \lp( p \lp (\bx^{(1:i-1)},\by^{(i)},\by^{(i+1:n)}\rp) \rp)\rp] \pm o(1/n).
\label{eq:technique-single-replace-1}
\end{align}
If so, applying the triangle inequality $n-1$ times would complete the proof of \Cref{eq:technique-main-objective}.

In both the PRG and the NGCA settings, 
one would like to exploit the assumption that the alternative distribution matches many moments with the standard Gaussian.
Hence, a natural attempt to show \Cref{eq:technique-single-replace-1} would be to rewrite the expected values in terms of the moments of the corresponding distributions. 
If the PTF function were a smooth function, this can be achieved by considering the Taylor expansion of the function.
However, it is not hard to see that the function is discontinuous when the polynomial $p$ evaluates to $0$, making Taylor's theorem not directly applicable.

To circumvent the issue, 
the PRG literature uses the idea of \emph{mollification}.
Specifically, 
 we would like to construct a sufficiently smooth function $h: \R^{n \times d} \mapsto [0, 1]$ that approximates the PTF $\sgn(p( \cdot ))$ well.
As we have said, the PTF function $\sgn(p(\cdot))$ is discontinuous when $p(\vec x) = 0$.
Hence, we cannot hope to have a smooth and pointwise close approximation near the zeros of the polynomial. 
Instead, we want $h$ to have the same behavior as the PTF when the polynomial is ``large'' (in a technical sense that will be specified later), and then smoothly interpolate in the other case.
On the one hand, this effectively ensures the smoothness property of $h$ globally.
On the other hand, $h$ indeed approximates the PTF well under the Gaussian distribution\footnote{
A technical detail we omit here is that naively it seems like one also needs to show a similar approximation result under the distribution of $\vec y^{(1:n)}$, which is technically challenging.
Fortunately, this can be solved by a standard sandwich trick that reduces the task into constructing two mollified PTFs $h_+$, $h_-$ such that
(i) $h_+\lp( \cdot \rp) \geq \sgn\lp( p\lp(\cdot  \rp) \rp) \geq h_-\lp( \cdot \rp)$, 
and (ii) $h_+$, $h_-$ approximates the PTF function well under the Gaussian distribution. 
This saves us from the trouble of showing how well the mollified PTF approximates the PTF under the less structured distributions of $\vec y^{(1:n)}$.
We refer the reader to \Cref{sec:framework}, and more specifically \Cref{lem:sandwich} for more detail on this.
}
as the only disagreement region is when the polynomial $p$ is small, whose probability mass can be bounded by the Gaussian anti-concentration properties or some variants (that will be discussed later on).
For convenience, we refer to this smoothed approximation $h$ of the PTF as the \emph{mollified} PTF, and our task is now reduced to showing 
\begin{align}
\E\lp[h \lp( p \lp (\bx^{(1:i-1)},\bx^{(i)},\by^{(i+1:n)}\rp) \rp)\rp]    
\approx 
\E\lp[h \lp( p \lp (\bx^{(1:i-1)},\by^{(i)},\by^{(i+1:n)}\rp) \rp)\rp] \pm o(1/n).
\label{eq:technique-single-replace}
\end{align}

It then remains for us to (1) construct a smooth mollified PTF $h$ that closely approximates $\sgn(p(\cdot))$ under Gaussian inputs, and (2) show a single replacement step (\Cref{eq:technique-single-replace}) for this smooth function $h$ (the formal version of this is in \Cref{prop:replacement-step}).

\vspace{-0.4cm}

\paragraph{Mollification by Strong Anti-Concentration  }
The difficulty of (1) is as follows: since the PTF function is discontinuous and $h$ is smooth, we cannot expect $\lp|h\lp( \vec x^{(1:n)} \rp)-\sgn \lp( p\lp( \vec x^{(1:n)} \rp) \rp)\rp|$ to be small for all inputs. 
Alternatively, one must rely on some type of \emph{anti-concentration} result saying that the probability of $ \vec x^{(1:n)} $ lying in the disagreement region between $h$ and $\sgn(p(\cdot))$ is small under the Gaussian distribution.
A natural idea to do so would be to construct 
a smooth approximation of the indicator function 
$ g( \vec x^{(1:n)} ) \approx
\mathbbm 1\{ |p( \vec x^{(1:n)} )| > \eps \}$, and define
the mollified PTF to be the product $g\lp(\vec x^{(1:n)} \rp) \sgn\lp( p\lp(\vec x^{(1:n)}\rp)\rp)$.
On the one hand, 
this ensures that the function will smoothly interpolate between $0$ and $1$ for inputs $\vec x^{(1:n)}$ near the zeros of $p$.
On the other hand, 
the disagreement region will be roughly the same as the 
set $\{ \vec x^{(1:n)} \mid p( \vec x^{(1:n)} ) < \eps \}$, 
which is guaranteed to have small mass by the famous Gaussian anti-concentration theorem from \cite{CW:01}. 
However, for an arbitrary polynomial $p$, the anti-concentration property decays rapidly
as the degree of the polynomial increases, i.e., 
there exists a degree-$k$ polynomial $p$ such that $\Pr[ p(\vec x^{(1:n)}) < \eps ] \approx \eps^{1/k}$. 
In the context of the PRG literature, this leads to an exponential dependency on the seed length\footnote{This was indeed the case for the early work \cite{meka2010pseudorandom} based on such naive mollification procedures.} while in our context the approach might break entirely once the polynomial degree $k$ becomes larger than the logarithm of the sample dimension $d$.
Needless to say, such an assumption would significantly weaken the lower bound on $k$ compared to the target 
of $k = d^{ \Omega(1) }$ in \Cref{thm:main}.

To tackle the issue, we borrow the ideas from \cite{o2020fooling,Kane11focs,kelley2022random} that take advantage of the \emph{strong anti-concentration} properties of polynomials. In particular, 
strong anti-concentration (see \Cref{eq:slow-growth}) is a relative notion of anti-concentration on the sizes of the derivatives of the polynomials:  given a polynomial $p$ of degree $k$, 
for any $0 \leq t \leq k-1$ and $\eps \in (0, 1)$, it holds that $ \|  \nabla^t p\lp( \vec x^{(1:n)} \rp) \|_F > \eps \|  \nabla^{t+1} p\lp( \vec x^{(1:n)} \rp) \|_F$ with probability at least  $1 - O(k^2 \eps)$, where $ \nabla^t p\lp( \vec x^{(1:n)} \rp) $ denotes the tensor containing all $t$-th order partial derivatives of $p$.
Notably, unlike the Carbery-Wright anti-concentration theorem, the failure probability here is only a polynomial in $k$, and it is precisely this polynomial dependency that makes the stronger lower bound on $k$ from \Cref{thm:main} possible.
By setting $\eps = k^{-2.1}$,  applying the union bound, and chaining the inequalities obtained, we obtain that\looseness=-1
\begin{align}
\label{eq:derivative-decay-intro}
\lp | p\lp( \vec x^{(1:n)} \rp) \rp| > k^{-2.1} \lp\| \nabla p \lp( \vec x^{(1:n)} \rp) \rp \|_F 
> k^{-6.2} \left\| \nabla^2\lp( \vec x^{(1:n)} \rp) \right\|_F > \ldots > k^{-2.1k}  \left\| \nabla^k p \lp( \vec x^{(1:n)} \rp) \right\|_F \, ,
\end{align}
with high constant probability.
Note that $\nabla^k p$ is a constant tensor.
Thus, 
as long as \Cref{eq:derivative-decay-intro} holds approximately (within a constant factor),
we can infer that $p\lp( \vec x^{(1:n)} \rp)$ must be non-zero and the PTF $\sgn\lp( p\lp( \vec x^{(1:n)} \rp)  \rp)$ will be smooth.
Therefore, it suffices to modify the function on inputs $\vec x^{(1:n)}$ where \Cref{eq:derivative-decay-intro} is not approximately true.
In particular, we can define a function $\rho: \R \mapsto [0, 1]$ that serves as a smooth approximation of the indicator function $\mathbbm 1\{ |z| \leq 1 \}$, and set the mollified PTF to be\looseness=-1
\begin{align}
\label{eq:first-attempt-mollify}
\tilde h \lp(  \vec x^{(1:n)} \rp) := 
    \prod_{t=1}^k \rho \lp(   
\frac{ k^{-\Theta(1)}  \lp\| \nabla^{t} p( \multix ) \rp \|_F^2 }{ \lp\| \nabla^{t-1} p( \multix ) \rp \|_F^2} \rp)
\; \sgn\lp( p\lp( \multix \rp)  \rp).
\end{align}
When \Cref{eq:derivative-decay-intro} holds, 
$\tilde h$ is simply the same as the smooth part of the original PTF function.
As the value of $| p( \mathbf x^{(1:n)} ) |$ approaches $0$, $\frac{ k^{-\Theta(1)}  \lp\| \nabla^{t} p( \multix ) \rp \|_F^2 }{ \lp\| \nabla^{t-1} p( \multix ) \rp \|_F^2}$ for some $t$ must begin to exceed $1$.
Consequently, the function $\rho(\cdot)$, which we define to be a smooth approximation of the indicator function $\mathbbm 1\{|x| \leq 1\}$ will decay smoothly until it reaches $0$, ensuring the smoothness of the function $\tilde h$ globally.
Moreover, since \Cref{eq:derivative-decay-intro} holds with high constant probability, this ensures $\tilde h$ approximates $\sgn(p(\cdot))$ up to a small constant error over Gaussian inputs.

As promising as it may seem, there are still substantial technical difficulties in showing the replacement step (\Cref{eq:technique-single-replace}) for this particular mollified PTF $\tilde h$.
At a high level, the difficulty stems from a design flaw in which \Cref{eq:first-attempt-mollify} fails to leverage the hidden directional structure of the underlying distribution of $\vec y^{(1:n)}$.
In the rest of the subsection,
we will present a natural attempt to prove the replacement step for this mollified PTF $\tilde h$, 
illustrate the difficulty encountered, and present a simple modification on top of $\tilde h$ to obtain our actual mollified PTF $h$ (see \Cref{eq:actual-mollifier}).\looseness=-1

\vspace{-0.4cm}

\paragraph{Replacement Step by Taylor Expansion}
Consider the following natural attempt in showing the $i$-th replacement step (\Cref{eq:technique-single-replace})
for the mollified PTF $\tilde h$.
Thanks to the smoothness property of $\tilde h$, we can rewrite $\tilde h$ using its Taylor expansion in terms of the variable $\vec z$ to be replaced: 
\begin{align}
\label{eq:intro-full-taylor-expansion}
\tilde h \lp( \vec x^{(1:i-1)}, \vec z,  
\vec y^{(i+1:n)} \rp)=  
\sum_{ t=0 }^{\infty}
 \lp \langle \nabla^{t}_i \tilde h\lp( \vec x^{(1:i-1)}, \vec 0,  
\vec y^{(i+1:n)} \rp)  \, , \, 
\vec z^{\otimes t} \rp \rangle \, ,
\end{align}
where $\nabla^{t}_i \tilde h$ denotes the tensor containing all $t$-th order partial derivatives of $\tilde h$ with respect to its $i$-th argument.
Suppose that the first degree-$m$ moments of $\vec z= \vec x^{(i)} \sim \normal(\vec 0, \vec I)$ and $\vec z= \vec y^{(i)} \sim \ngcav$ match exactly. We then have that the difference 
$ \E_{ \vec z \sim \normal(\vec 0, \vec I)} 
\lp[ 
\tilde h \lp( \vec x^{(1:i-1)}, \vec z,  
\vec y^{(i+1:n)} \rp) \rp]
- \E_{ \vec z \sim \ngcav} 
\lp[ 
\tilde h \lp( \vec x^{(1:i-1)}, \vec z,  
\vec y^{(i+1:n)} \rp) \rp]
$
comes only from the higher order terms (the terms with $t > m$ from \Cref{eq:intro-full-taylor-expansion}). 
As we have said, the function $\tilde h$ is carefully constructed to be as smooth as the polynomial $p$.
Concretely, one can show by some straightforward computation that $\lp\| \nabla^{t}_i \tilde h\lp(\vec x^{(1:n)}\rp) \rp\|_F$ should be roughly the same as  
$\lp\| \nabla^{t}_i p \lp(\vec x^{(1:n)}\rp) \rp\|_F
/ \lp| p\lp(\vec x^{(1:n)}\rp) \rp|$.
In the PRG literature, the particular polynomial $p$ showing up is of the form
$ p\lp(  \vec x^{(1:n)} \rp) = q\lp(  \tfrac{1}{\sqrt{n}} \lp( 
\vec x^{(1)} + \cdots + \vec x^{(n)}\rp) \rp) $.
Due to the specific form of $p$, one can see that $p$ has only a mild dependence on the $i$-th sample. 
Consequently,
it is not hard to show that $\lp\| \nabla^{t}_i p \lp(\vec x^{(1:n)}\rp) \rp\|_F
/ \lp| p\lp(\vec x^{(1:n)}\rp) \rp|$ is at most $ n^{-\Theta(t)}$. 
If we were to have the same bound on higher-order derivatives in our case, we could then consider the Taylor expansion truncated to the first degree-$m$ terms, which gives that
\begin{align*}
\tilde h \lp( \vec x^{(1:i-1)}, \vec z,  
\vec y^{(i+1:n)} \rp)=  
\sum_{ t=0 }^{m-1}
 \lp \langle \nabla^{t}_i \tilde h\lp( \vec x^{(1:i-1)}, \vec 0,  
\vec y^{(i+1:n)} \rp)  \, , \, 
\vec z^{\otimes t} \rp \rangle \, 
+ 
 \lp \langle \nabla^{m}_i \tilde h\lp( \vec x^{(1:i-1)}, \hat {\vec z},  \vec y^{(i+1:n)} \rp)  \, , \, 
\vec z^{\otimes m} \rp \rangle \, ,
\end{align*}
where $\hat {\vec z}$ is some vector that lies in the line between $\vec z$ and $\vec 0$.
After that, we note that the expected values of the first $(m-1)$ terms match exactly under $\normal(\vec 0, \vec I)$ and $\ngcav$ while the last term is on the order of $n^{-\Theta(m)} \ll o(1/n)$.
This would then readily conclude the proof of the replacement step.

However, in our case, since $p$ can be an arbitrary low-degree polynomial, we cannot say that $p$ and subsequently $\tilde h$ have only a weak dependence on the $i$-th sample.
In particular, the best bound on  $\lp\| \nabla^{t}_i p \lp(\vec x^{(1:n)}\rp) \rp\|_F
/ \lp| p\lp(\vec x^{(1:n)}\rp) \rp|$ (and therefore $\| \nabla_i^t \tilde h \|_F$) will be on the order of $k^{\Theta(t)}$ (due to the tightness of \Cref{eq:derivative-decay-intro}), which is an \emph{increasing} function in $t$.
As a result, the mollified PTF $\tilde h$ simply cannot be approximated by its low-degree Taylor expansion, 
and this appears to be a substantial barrier in showing the replacement step (\Cref{eq:technique-single-replace}) for $\tilde h$.

Our key insight is that we can instead leverage the specific structure of the underlying hidden direction distribution.
In particular, we can exploit the fact that the distribution of $\vec y^{(i)}$ only differs from that of $\vec x^{(i)}$ in some randomly chosen direction $\vec v$. 
As a result, instead of taking the Taylor expansion of $\tilde h$ viewed as a multivariate function in terms of the entire $i$-th sample, we can now consider the \emph{directional} Taylor expansion of $\tilde h$ restricted to the $\vec v$ direction. That is, we view $\vec z$ as $\bar {\vec z} + \xi \vec v$ for some $\bar {\vec z}$ orthogonal to $\vec v$, and we use the Taylor expansion with respect to the scalar variable $\xi$:\looseness=-1
\begin{align*}
\tilde h\lp( \vec x^{(1:i-1)} , \bar {\vec z} + \xi \vec v ,\vec y^{(i+1:n)} \rp) = 
\sum_{t=0}^{m-1}
D_{i, \vec v}^t \tilde h\lp( \vec x^{(1:i-1)} , \bar {\vec z},\vec y^{(i+1:n)} \rp) \xi^t
+
D_{i, \vec v}^m \tilde h\lp( \vec x^{(1:i-1)} , \bar {\vec z} + \hat \xi,\vec y^{(i+1:n)} \rp) \xi^m
\, ,
\end{align*}
where $\hat \xi$ is some point in $[0, \xi]$,  and
$D_{i, \vec v}^t \tilde h$ denotes the $t$-th order
directional derivative of $\tilde h$ with respect to the $i$-th sample, i.e., $D_{i, \vec v} \tilde h = \vec v^\top \nabla_i \tilde h$.
Using the directional Taylor expansion, one can then write\looseness=-1
\begin{align*}
&\E_{ \vec z \sim \normal(\vec 0, \vec I)} 
\lp[ 
\tilde h \lp( \vec x^{(1:i-1)}, \vec z,  
\vec y^{(i+1:n)} \rp) \rp]
- \E_{ \vec z \sim \ngcav} 
\lp[ 
\tilde h \lp( \vec x^{(1:i-1)}, \vec z,  
\vec y^{(i+1:n)} \rp) \rp] \\
&= \E_{ \bar {\vec z} \sim \normal( \vec 0, \vec I - \vec v \vec v^\top ) }
\lp[ 
\sum_{t=0}^\infty
D_{i, \vec v}^t \tilde h\lp( \vec x^{(1:i-1)} , \bar {\vec z},\vec y^{(i+1:n)} \rp)
\lp( 
\E_{ \xi \sim A } \lp[ \xi^t  \rp]
- 
\E_{ \xi \sim \normal(0, 1) } \lp[ \xi^t  \rp] \rp)
\rp].
\end{align*}
Since we assume that the first degree $m$ moments of $A$ match exactly with $\normal(0, 1)$, it suffices to argue  
that the size of the $t$-th order directional derivative 
$D_{i,\vec v}^t \tilde h$ is of diminishing size as a function of $t$.\looseness=-1

\vspace{-0.4cm}

\paragraph{Derivative Decay in a Random Direction}
Towards this goal, let $\vec x^{(1:n)}$ 
be input samples following the standard Gaussian distribution,
and we will show that the directional derivative  
$  D_{i, \vec v}^{(t)} \tilde h( \vec x^{(1:n)} )$ will have small size with high constant probability, where the randomness is over $\vec v$ and $\vec x^{(1:n)}$.
With some straightforward computation, one can show that 
the above derivative 
has about the same size as 
the tensor containing all $t$-th order directional derivatives of the polynomial $p$ along the direction of $\vec v$. 
Specifically, we consider the tensor $p^{[t], \vec v}( \vec x^{(1:n)} ) \in \lp( \R^n \rp)^{\otimes t}$ defined as follows:
$$
p^{[t], \vec v}_{i_1, \ldots, i_t}(\multix)
= D_{i_1, \vec v} D_{i_2, \vec v} \cdots D_{i_t, \vec v} p(\multix).
$$
Alternatively, this directional derivative of $p$
can be written as a product between
the gradient tensor $\nabla^{t} p(\vec x^{(1:n)}) $
and the ``random direction tensor'' $\vec v^{\otimes t}$.
By the strong anti-concentration property of Gaussian, $\vec x^{(1:n)}$ satisfies \Cref{eq:derivative-decay-intro}, and $\nabla^{t} p(\vec x^{(1:n)}) $ must have size at most $k^{\Theta(t)} \lp| p(\vec x^{(1:n)}) \rp| $ with high constant probability.
Therefore, the directional derivative can be large only if the full gradient tensor $\nabla^t p( \vec x^{(1:n)} )$
correlates well with $\vec v^{\otimes t}$. 
Since the distribution of $\vec x^{(1:n)}$ is rotationally invariant (as a property of the standard Gaussian) and $\vec v$ is chosen randomly, it can be intuitively seen that the correlation will be small with high constant probability.
In particular, via some technical tensor computation, we show in \Cref{lem:derivative-decay} that
$\lp\| p^{[t], \vec v}_{i_1, \ldots, i_t}\lp( \vec x^{(1:n)} \rp) \rp\|_F   / | p( \vec x^{(1:n)} )| $
will be on the order of $d^{-t/4} k^{ O(t) }$ with high constant probability over the random choice of $\vec v$ and $\vec x^{(i)} \sim \normal(\vec 0, \vec I)$.
Consequently, if we assume that the degree $k$ is a sufficiently small polynomial in $d$, i.e., 
$k \ll d^{ \Theta(1) }$, $d^{-t/4} k^{ O(t) }$ should then be an exponentially decreasing function in $t$, implying that 
$\tilde h$ is indeed well-approximated by its low-degree directional Taylor expansion.

However, one subtle technical issue remains. 
That is, the above argument works only for the first replacement step when the input samples all follow the standard Gaussian distribution.
In later replacement steps, some of the samples have already been replaced with the hidden direction distribution, which can make the gradient correlate strongly with the hidden direction $\vec v$.
As a result, we can no longer effectively control the size of the directional derivative even for $\vec x^{(1:n)}$ satisfying the derivative decay condition in \Cref{eq:derivative-decay-intro}.

\vspace{-0.4cm}

\paragraph{Controlling Derivatives by Fine-Tuning the Mollifier}

To circumvent the issue, we proceed with the following simple modification to the mollifier function to explicitly check for derivative decay along the $\vec v$-direction: instead of 
zeroing out the inputs on which the full derivative tensor $\nabla^{t} p$ violates the strong anti-concentration property stated in \Cref{eq:derivative-decay-intro},
we now define a new mollified PTF defined directly with respect to the directional derivatives of $p$:
\begin{align}
\label{eq:actual-mollifier}
h \lp(  \vec x^{(1:n)} \rp):=
    \prod_{t=1}^k \rho \lp(   
\frac{ d^{\Theta(t)}  \lp\| p^{[t], \vec v}( \multix ) \rp \|_F^2 }{  p( \multix )^2} \rp) \sgn\lp( p( \multix )\rp) \, ,
\end{align}
where $\rho$ is as before a smooth approximation of the indicator function $\mathbbm 1\{ |z| \leq 1 \}$.
On the one hand,
as a direct implication of \Cref{lem:derivative-decay}, 
we have that 
the disagreement region between $h$ and the original PTF will be small for a randomly chosen $\vec v$ under the Gaussian distribution.
On the other hand, this simple modification ensures smoothness along the $\vec v$ direction for an arbitrary input $\vec x^{(1:n)}$ that makes the mollifier PTF $h$ non-zero.\looseness=-1

By some tedious but straightforward computation (\Cref{lem:sum-of-product}), one can show that the size of the $t$-th order directional derivative of $h$ is still roughly comparable to  $\lp \| p^{[t], \vec v}\lp( \vec x^{(1:n)} \rp) \rp \|_F / \lp| p\lp( \vec x^{(1:n)} \rp) \rp| $.
We can then do a simple case analysis.
If the ratio is large, then the term
$\rho \lp(   
\frac{ d^{\Theta(t)}  \lp\| p^{[t], \vec v}( \multix ) \rp \|_F^2 }{  p( \multix )^2} \rp)$ ensures that the entire mollified PTF evaluates to $0$.
Otherwise, we can conclude that the $t$-th order directional derivative $D_{i, \vec v}^t h$ will be roughly on the order of $ d^{ -\Theta(t)}$ (cf. \Cref{lem:mollify-derivative-decay}).
Due to the moment matching condition, 
when one computes the difference $ \E_{ \vec z \sim \normal(\vec 0, \vec I)} 
\lp[ 
h \lp( \vec x^{(1:i-1)}, \vec z,  
\vec y^{(i+1:n)} \rp) \rp]
- \E_{ \vec z \sim \ngcav} 
\lp[ 
h \lp( \vec x^{(1:i-1)}, \vec z,  
\vec y^{(i+1:n)} \rp) \rp]
$ via the directional Taylor expansion, the first degree $m$ terms in $\xi$ cancel exactly, leaving us with the dominating term
$$
D_{i, \vec v}^{m+1} h\lp( \vec x^{(1:i-1)} , \bar {\vec z},\vec y^{(i+1:n)} \rp) 
\lp(
\E_{\xi \sim A} \lp[
\xi^{m+1}
\rp] 
- 
\E_{\xi \sim \normal(0, 1)} \lp[
\xi^{m+1}
\rp]
\rp)
 \, ,
$$
which can be appropriately bounded by
$d^{ -\Theta(m)}$.\footnote{ 
Technically, for the bound to holds, we will need to assume that the support of $A$ is contained within $[d^{-c}, d^c]$ for some sufficiently small constant $c$.
To circumvent the issue, we instead show the replacement step for the truncated distribution $\bar A$, and then relate the expected value of $h$ under $\xi \sim \bar A$ back to the one under $\xi \sim A$ using the fact that $h$ is bounded between $[0, 1]$ (cf. \Cref{lem:truncation}).
} 
As long as we have $d^{ -\Theta(m) } \ll o(1/n)$, 
the replacement step will go through, and this concludes the sketch of our proof of \Cref{thm:main}.

\section{Notation}\label{sec:notation}

\subsection{Basic Notation}
We use $\mathbb{Z}_+$ for the set of positive integers and $[n]$ to denote $\{1,\ldots,n\}$. 
We use bold lowercase letters for vectors and bold uppercase letters for tensors.
We use $\vec 1_d$ for the $d$-dimensional all-one vector, $\vec 0_d$ for the $d$-dimensional all-zero vector,
and $\vec I_d$ for the $d \times d$ identity matrix. When the dimension is clear from the context, we will drop the subscript.
For a set $S$, we use $\mathcal{U}(S)$ to denote the uniform distribution on $S$.
Given a distribution $D$ in $\R^d$, we write $\vec x^{(1)}, \cdots, \vec x^{(n)} \sim D$ to denote $n$ \iid samples from $D$. For the sake of saving space, we often write $\bmultix$ to denote the sequence of vectors $\multix$.  We write $a \gg b$ to denote that $a \geq C \; b$ for some sufficiently large constant $C>0$.

\subsection{Tensor Notation}\label{sec:tensor}
We frequently use tensors in $(\R^n)^{\otimes k}$. For some $\vec A \in (\R^n)^{\otimes k}$, we denote by $\vec A_{i_1,\ldots,i_k}$ the entry in $\vec A$ indexed by $i_1,\ldots,i_k \in [n]$. For two tensors $\vec A, \vec B \in (\R^n)^{\otimes k}$, we define the inner product (or dot product) between them as$\langle\vec A, \vec B \rangle = \sum_{i_1,\ldots,i_k \in [n]} \vec A_{i_1,\ldots,i_k} \vec A_{i_1,\ldots,i_k}$. We use $\vec A^\flat$ to denote the flattened version of $\vec A$, i.e.,  the vector in $\R^{n^k}$ obtained by stacking all entries of $\vec A$ into a single vector in lexicographic order. We define the Frobenius norm of tensor $\vec A$ to be $\| \vec A\|_F := \sqrt{\langle\vec A, \vec A \rangle}$.
For two tensors $\vec A \in (\R^n)^{\otimes{k}}$ and $\vec B \in (\R^n)^{\otimes{\ell}}$ with $k > \ell$ we write $\vec A \vec B$ to denote the tensor in $(\R^n)^{\otimes(k-\ell)}$ defined as follows: 
$$(\vec A \vec B)_{i_{\ell+1},\ldots,i_k } := \sum_{i_1,\ldots,i_\ell \in [n]} \vec A_{i_1,\ldots,i_\ell} \vec B_{i_1,\ldots,i_\ell}.$$
Note that for $\vec A, \vec B \in (\R^n)^{\otimes{k}}$ it holds that $\vec A \vec B$ defined as above is the same as the inner product $\langle\vec A, \vec B \rangle$.
Moreover, using the tensor product notation, if $\vec e(1),\ldots, \vec e(n) \in \R^n$ denote the standard basis vectors, then $\vec T \, \vec e(i)$ is simply the tensor $\vec T$ restricted on the first index being equal to $i$. Similarly $\vec T \, (\vec e(i_1) \otimes \cdots \otimes \vec e(i_k))$ selects the sub-tensor with the first $k$ indices being $i_1,\ldots,i_k$.

\subsection{Derivative Notation}\label{sec:derivative}
We write $p: \R^{n \times d} \mapsto \R$ to denote a multi-variable function with $n$ variables, where each variable is a $d$-dimensional vector.
We write $\nabla_i p$ to denote the vector of partial derivatives with respect to the $i$-th argument, i.e., $\nabla_i p: \R^{n \times d} \mapsto \R^d$ and 
$\lp( \nabla_i p(\multix) \rp)_j = \frac{\partial }{ \partial \vec x^{(i)}_j }
p(\multix)
$.
Without the subscript, $\nabla p(\multix)$ denotes the usual gradient vector that takes derivatives with respect to all arguments at the same time, i.e., $\nabla p: \R^{n \times d} \mapsto \R^{n \times d}$ and 
$\lp( \nabla p(\multix) \rp)_{i,j} = \frac{\partial }{ \partial \vec x^{(i)}_j }
p(\multix)$. We also define the $t$-th order derivative tensor as follows:

\begin{align}\label{eq:t-th-derivative}
\lp( \nabla^t  p(\multix) \rp)_{i_1,j_1,\ldots,i_t,j_t} = \frac{\partial}{\partial \vec x^{(i_t)}_{j_t}} \cdots \frac{\partial}{ 
\partial \vec x^{(i_1)}_{j_1} }
p(\multix)    
\end{align}

We will also make use of directional partial derivatives
$
D_{i, \vec v} p: \R^{n \times d} \mapsto \R
$ defined as:
\begin{align*}
  D_{i, \vec v} p( \vec x^{(1)}, \cdots, \vec x^{(n)} )
= \vec v^\top \nabla_i p( \vec x^{(1)}, \cdots, \vec x^{(n)} )
\end{align*}
More generally, 
we denote by $p^{[k], \vec v}: \R^{n \times d} \mapsto (\R^n)^{\otimes k}$ the following directional derivative tensors:
$$
p^{[k], \vec v}_{i_1, \ldots, i_k}(\multix)
= D_{i_1, \vec v} D_{i_2, \vec v} \cdots D_{i_k, \vec v} p(\multix).
$$
When the unit vector $\vec v$ is clear from the context (typically inside proofs), we will just write $p^{[k]}$.

Recall the convention of multiplying two tensors of different dimensions in \Cref{sec:tensor}. 
Following that convention, 
we also note the following equivalence, which we will use sparingly in the paper:
$$
D_{i, \vec v} p^{[k], \vec v}(\multix) =  p^{[k+1], \vec v}(\multix) \vec e(i) ,
$$
where $\vec e(i)$ is the $i$-th standard basis vector.

\section{Proof of \Cref{thm:main}}

\subsection{Decay of Derivatives Restricted to a Random Direction}
\label{sec:derivative-decay}
In this subsection, we show that the 
$t$-th order directional partial derivatives of a low-degree polynomial $p: \R^{n \times d} \mapsto \R$ along a random direction $\vec v$ can be bounded from above by an exponentially decreasing function in $t$.  
The formal statement is given below.
\begin{lemma}[Derivative Decay]
\label{lem:derivative-decay}
For any $k \in \Z_+$ and $\eps \in (0,1)$, 
if $p:\R^{n \times d} \to \R$ is a degree-$k$ polynomial and $\vec v \sim \mathcal U( \mathcal S^{d-1} )$,
then the following holds with probability $0.99$ over the randomness of~$\vec v$:
\begin{align}
\label{eq:derivative-decay}
\Pr_{ \multix \sim \normal(\vec 0,\vec I)  }
\lp[ \forall t \in [k]: \;
\| p^{[t], \vec v}( \multix ) \|_F  \leq  (k/\eps)^{4t}  d^{-t/4}  |p(\multix)|
\rp]   \geq 1 - \eps. 
\end{align}
\end{lemma}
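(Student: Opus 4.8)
The goal is: for a degree-$k$ polynomial $p : \R^{n\times d}\to\R$ and $\vec v\sim\mathcal U(\mathcal S^{d-1})$, with probability $0.99$ over $\vec v$, the event $\forall t\in[k]:\ \|p^{[t],\vec v}(\multix)\|_F\le (k/\eps)^{4t}d^{-t/4}|p(\multix)|$ holds with probability $\ge 1-\eps$ over $\multix\sim\normal(\vec 0,\vec I)$.

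Let me think about what ingredients are available. The "strong anti-concentration" inequality (eq. (\ref{eq:slow-growth}) referenced in the intro, though I haven't seen it stated in this excerpt — I'll assume it's a standard result of the Kane/O'Donnell-Zuckerman/Kelley-Meka type) says roughly: for a degree-$k$ polynomial and any $0\le s\le k-1$, $\Pr[\|\nabla^s p(\multix)\|_F \le \eta\|\nabla^{s+1}p(\multix)\|_F]\le O(k^2\eta)$. Chaining over $s=0,\dots,t-1$ with $\eta = \delta$ and a union bound gives: with probability $1-O(k^3\delta)$, simultaneously for all $t\in[k]$, $|p(\multix)|\ge \delta^{t}\cdot c_t\|\nabla^t p(\multix)\|_F$ for appropriate constants — equivalently $\|\nabla^t p(\multix)\|_F \le \delta^{-t}\,\mathrm{poly}(k)\cdot|p(\multix)|$. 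So the full-gradient tensors are controlled up to a factor $(k/\eps)^{\Theta(t)}$ relative to $|p|$. This handles the "vertical" direction.

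The remaining gain of $d^{-t/4}$ must come from the randomness of $\vec v$: we have $p^{[t],\vec v}(\multix) = \nabla^t p(\multix)\cdot \vec v^{\otimes t}$ where $\nabla^t p(\multix)$ is viewed as a tensor in $(\R^n)^{\otimes t}\otimes(\R^d)^{\otimes t}$ and we contract only the $d$-indices against $\vec v^{\otimes t}$. So $\|p^{[t],\vec v}(\multix)\|_F^2 = \langle \nabla^t p(\multix), \nabla^t p(\multix)\cdot(\vec v\vec v^\top)^{\otimes t}\rangle$ (contracting the $d$-side). The plan is: \emph{first} condition on $\multix$, then take the expectation over $\vec v$. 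For a fixed tensor $\vec T\in(\R^d)^{\otimes t}$ (playing the role of $\nabla^t p(\multix)$ with the $\R^n$-indices fixed), $\E_{\vec v}\|\vec T \vec v^{\otimes t}\|^2 = \E_{\vec v}\langle \vec T, \vec T(\vec v\vec v^\top)^{\otimes t}\rangle$. Since $\E[(\vec v\vec v^\top)^{\otimes t}]$ is (a constant times) the projection onto symmetric tensors divided by roughly $\binom{d+t-1}{t}\approx d^t/t!$, one gets $\E_{\vec v}\|\vec T\vec v^{\otimes t}\|^2 \lesssim \frac{t!}{d^t}\|\vec T_{\mathrm{sym}}\|_F^2 \le \frac{t!}{d^t}\|\vec T\|_F^2$, i.e., a $d^{-t}$ contraction in expectation (the $d^{-t/4}$ in the statement is far weaker, with slack presumably used elsewhere). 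Summing over the free $\R^n$-indices, $\E_{\vec v}\|p^{[t],\vec v}(\multix)\|_F^2 \lesssim (t!/d^t)\|\nabla^t p(\multix)\|_F^2$. Then by Markov over $\vec v$ (after also taking expectation over $\multix$, or more carefully: first a high-probability event over $\multix$ via the chaining above, then Markov over $\vec v$ on the conditional second moment), one concludes $\|p^{[t],\vec v}(\multix)\|_F \le (k/\eps)^{\Theta(t)}d^{-t/2}|p(\multix)|$ for all $t$ simultaneously.

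The delicate point — and the main obstacle — is the \emph{order of quantifiers}: the lemma wants a statement of the form ``$\forall$ most $\vec v$: $\Pr_{\multix}[\cdots]\ge 1-\eps$'', not ``$\Pr_{\vec v,\multix}[\cdots]\ge 1-\eps$''. To get this I would: (i) apply the chaining/strong-anti-concentration bound to get a set $G\subseteq\R^{n\times d}$ with $\Pr_{\multix}[\multix\in G]\ge 1-\eps/2$ on which $\|\nabla^t p(\multix)\|_F\le (k/\eps)^{\Theta(t)}|p(\multix)|$ for all $t$; (ii) for the $\vec v$-randomness, consider the quantity $\Phi(\vec v):=\Pr_{\multix}\big[\exists t:\ \|p^{[t],\vec v}(\multix)\|_F > (k/\eps)^{4t}d^{-t/4}|p(\multix)|\big]$ and bound $\E_{\vec v}[\Phi(\vec v)]$ by splitting $\multix\in G$ vs $\multix\notin G$; on $G$, use the second-moment-over-$\vec v$ computation together with Fubini and Markov-in-$\multix$ inside; the $d^{-t/2}$ vs requested $d^{-t/4}$ slack absorbs the Markov losses and the union bound over $t\in[k]$ and the $(k/\eps)^{\Theta(t)}$ factors. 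Then Markov over $\vec v$ gives $\Pr_{\vec v}[\Phi(\vec v)>\eps]\le \E_{\vec v}[\Phi(\vec v)]/\eps \le 0.01$, which is exactly the desired ``$0.99$ over $\vec v$'' conclusion. I expect the bookkeeping of the exponents (making sure the $(k/\eps)^{4t}$ on the RHS genuinely dominates all the accumulated $\mathrm{poly}(k)$, $t!^{1/2}\le t^{t/2}$, union-bound, and Markov factors for every $t\le k$ simultaneously) to be the fiddly part, but conceptually the proof is: strong anti-concentration to kill the vertical direction, second moment of random-direction contraction to extract $d^{-\Theta(t)}$, then Markov over $\vec v$ to flip the quantifier order.
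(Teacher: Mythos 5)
Your overall plan is the same as the paper's: use the Kelley--Meka-style strong anti-concentration (Fact~\ref{fact:strong-anti-conc}) to control $\|\nabla^t p\|_F$ in terms of $|p|$, compute the second moment over $\vec v$ of the contracted tensor to extract a $d^{-\Theta(t)}$ gain, and flip quantifiers via Markov over $\vec v$. Two points deserve comment.

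First, a minor imprecision in the second-moment computation: for $t\ge 2$ the matrix $\E_{\vec v}[(\vec v^{\otimes t})^\flat(\vec v^{\otimes t})^{\flat,\top}]$ is \emph{not} a multiple of the projection onto symmetric tensors (e.g.\ for $t=2$ it is $\mathrm{Sym}+(\vec I^\flat)(\vec I^\flat)^\top$ up to normalization, with an extra trace-contraction term), so the ``divide by $\binom{d+t-1}{t}$'' picture is not literally correct. Your conclusion that the operator norm is $O(t)^t/d^t$ is nonetheless right; the paper instead bounds the \emph{Frobenius} norm of this matrix via Isserlis' theorem, getting only $(2t/e)^t d^{-t/2}$ for $\E_{\vec v}\|p^{[t],\vec v}\|_F^2$, which is weaker than your estimate but still suffices for the stated $d^{-t/4}$.

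Second, and this is a genuine gap: the final step ``$\Pr_{\vec v}[\Phi(\vec v)>\eps]\le \E_{\vec v}[\Phi(\vec v)]/\eps\le 0.01$'' does not hold. You have $\E_{\vec v}[\Phi(\vec v)]=\Pr_{\vec v,\multix}[\text{bad}]\ge \Pr_{\vec v,\multix}[\text{bad}\cap G^c]$. On $G^c$ there is no control on $\|\nabla^t p(\multix)\|_F/|p(\multix)|$; near the zero set of $p$ this ratio is arbitrarily large, so for most $\vec v$ the bad event holds there. Since $\Pr_{\multix}[G^c]$ can be $\Theta(\eps)$ (Fact~\ref{fact:strong-anti-conc} is essentially tight), $\E_{\vec v}[\Phi]$ can be $\Theta(\eps)$, and your Markov bound gives only $\Theta(1)$, not $0.01$. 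The fix is to apply Markov over $\vec v$ only to $\Phi_G(\vec v):=\Pr_{\multix}[\text{bad}\cap G]$ against the threshold $\eps/2$ (which works, since $\E_{\vec v}[\Phi_G]$ has no $G^c$ floor and can be driven below $0.005\eps$), and then add $\Phi(\vec v)\le \Phi_G(\vec v)+\Pr[G^c]\le\eps$ on the good $\vec v$-event. The paper sidesteps the issue by a cleaner device: it works with the ratio $\|p^{[t],\vec v}(\multix)\|_F/\|\nabla^t p(\multix)\|_F$, which is pointwise at most $1$ (Cauchy--Schwarz against $\vec v^{\otimes t}$), so both nested expectations are finite without any $G/G^c$ split; it then applies Markov over $\vec v$ to the $\multix$-expectation of this ratio, Markov over $\multix$ to the ratio itself, union-bounds over $t$, and only at the very end union-bounds with Fact~\ref{fact:strong-anti-conc} to pass from $\|\nabla^t p\|_F$ to $|p|$.
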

Our starting point is the so called 
strong anti-concentration properties of Gaussian distributions used frequently in the PRG literature.
Specifically, it states that the output of a polynomial is usually not too small compared to its derivative.
As a simple corollary of this property, we obtain another interesting property of polynomials that come in handy in showing \Cref{lem:derivative-decay}.
That is, the sizes of the higher order derivative tensor of a polynomial grow rather slowly (notably, the growth rate is independent of the input dimension).
\begin{fact}[Slow Growth of Derivatives; Lemma 1.6 in \cite{kelley2022random}]\label{fact:strong-anti-conc}%
For any $k \in \Z_+$ and $\eps \in (0,1)$, if
     $p:\R^{n \times d} \to \R$ is a degree-$k$ polynomial, it holds that
    \begin{align}
\label{eq:slow-growth}
\Pr_{ \vec x^{(1)}, \cdots , \vec x^{(n)} \sim \normal  }
\lp[ \exists t \in [k]: \quad
\|  \nabla^t p(\vec x^{(1)}, \cdots, \vec x^{(n)}) 
\|_F  >
\lp( k^{3t} / \eps^t \rp)
  |p(\multix)|
\rp]   \leq O(\eps).
\end{align}
\end{fact}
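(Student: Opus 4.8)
The plan is to deduce the Fact from the scalar ``strong anti-concentration'' inequality via a telescoping argument over derivative orders. Set $\lambda := k^{3}/\eps$. It suffices to prove that for each fixed $t \in \{0,1,\ldots,k-1\}$,
\begin{align*}
\Pr_{\multix \sim \normal}\lp[ \| \nabla^{t+1} p(\multix) \|_F > \lambda \, \| \nabla^{t} p(\multix) \|_F \rp] \le O(k/\lambda) .
\end{align*}
Indeed, union-bounding this over the $k$ values of $t$ shows that with probability $\ge 1 - O(k^{2}/\lambda) \ge 1 - O(\eps)$ the inequality $\| \nabla^{t+1} p(\multix) \|_F \le \lambda \, \| \nabla^{t} p(\multix) \|_F$ holds simultaneously for all such $t$; on that event, chaining the inequalities starting from $\nabla^{0} p = p$ gives $\| \nabla^{t} p(\multix) \|_F \le \lambda^{t} |p(\multix)| = (k^{3t}/\eps^{t})\,|p(\multix)|$ for every $t \in [k]$, which is exactly \Cref{eq:slow-growth}.

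To prove the single-level bound, identify $\R^{n\times d}$ with $\R^{N}$, $N := nd$, so that a sample $\multix \sim \normal$ is a standard Gaussian $\vec g \sim \normal(\vec 0, \vec I_N)$, and flatten the derivative tensor into $\vec q := (\nabla^{t} p)^{\flat} : \R^{N} \to \R^{M}$ (with $M = N^{t}$), whose coordinates are polynomials of degree at most $k-t \le k$. Reindexing entries gives $\| \vec q(\vec g) \|_2 = \| \nabla^{t} p(\vec g) \|_F$ and $\| \nabla \vec q(\vec g) \|_F = \| \nabla^{t+1} p(\vec g) \|_F$, where $\nabla \vec q \in \R^{M\times N}$ denotes the Jacobian. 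Hence the single-level bound is equivalent, taking $\eta := 1/\lambda$, to the following \emph{dimension-free} (i.e.\ independent of $M$, which can be enormous) vector-valued strong anti-concentration statement: for any $\vec q : \R^{N} \to \R^{M}$ whose coordinates are polynomials of degree at most $k$,
\begin{align*}
\Pr_{\vec g \sim \normal(\vec 0, \vec I_N)}\lp[ \| \vec q(\vec g) \|_2 \le \eta \, \| \nabla \vec q(\vec g) \|_F \rp] \le O(k\eta) .
\end{align*}
This reduces the Fact to the scalar case, which is the one ingredient we take as given (the ``strong anti-concentration property'' referred to in the text): for any degree-$k$ polynomial $q : \R^{N} \to \R$ and $\delta \in (0,1)$, $\Pr_{\vec g}[\, |q(\vec g)| \le \delta \, \| \nabla q(\vec g) \|_2 \,] \le O(k\delta)$.

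To pass from the scalar to the vector-valued inequality I would use a random projection. Draw $\vec u \sim \normal(\vec 0, \vec I_M)$ and set $s_{\vec u}(\vec g) := \la \vec u, \vec q(\vec g) \ra$, which for each fixed $\vec u$ is a scalar polynomial of degree at most $k$. Fix $\vec g$ and write $J := \nabla \vec q(\vec g) \in \R^{M\times N}$, so $\nabla s_{\vec u}(\vec g) = J^{\top} \vec u$. Then: (i) since $s_{\vec u}(\vec g) \sim \normal(0, \| \vec q(\vec g) \|_2^{2})$ over $\vec u$, we have $\Pr_{\vec u}[\, |s_{\vec u}(\vec g)| \le 4 \| \vec q(\vec g) \|_2 \,] \ge 0.99$; and (ii) $\| \nabla s_{\vec u}(\vec g) \|_2^{2} = \vec u^{\top} J J^{\top} \vec u$ is distributed as $\sum_i \sigma_i^{2} z_i^{2}$ with $\sigma_i$ the singular values of $J$ and $z_i$ i.i.d.\ $\normal(0,1)$, so it has mean $\sum_i \sigma_i^{2} = \| J \|_F^{2} = \| \nabla \vec q(\vec g) \|_F^{2}$ and second moment at most $3(\sum_i \sigma_i^{2})^{2}$ (using $\sum_i \sigma_i^{4} \le (\sum_i \sigma_i^{2})^{2}$), whence Paley--Zygmund yields $\Pr_{\vec u}[\, \| \nabla s_{\vec u}(\vec g) \|_2^{2} \ge \tfrac12 \| \nabla \vec q(\vec g) \|_F^{2} \,] \ge \tfrac1{12}$. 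Both events thus hold for at least a $\tfrac1{20}$-fraction of $\vec u$, for every fixed $\vec g$. Consequently, on the event $\| \vec q(\vec g) \|_2 \le \eta \, \| \nabla \vec q(\vec g) \|_F$ a $\ge \tfrac1{20}$-fraction of $\vec u$ satisfy $|s_{\vec u}(\vec g)| \le 4 \| \vec q(\vec g) \|_2 \le 4\eta \, \| \nabla \vec q(\vec g) \|_F \le 6\eta \, \| \nabla s_{\vec u}(\vec g) \|_2$; averaging over $\vec g$, then over $\vec u$, and applying the scalar inequality to each $s_{\vec u}$ gives $\tfrac1{20}\Pr_{\vec g}[ \| \vec q(\vec g) \|_2 \le \eta \| \nabla \vec q(\vec g) \|_F ] \le \E_{\vec u} \Pr_{\vec g}[\, |s_{\vec u}(\vec g)| \le 6\eta \| \nabla s_{\vec u}(\vec g) \|_2 \,] \le O(k\eta)$, as claimed.

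The main obstacle is exactly making the vector-valued anti-concentration step dimension-free: $\nabla^{t} p$ lives in a space of dimension $(nd)^{t}$, so any bound degrading with that dimension is useless downstream, and the naive approach of controlling $\| \nabla s_{\vec u} \|_2$ via Markov fails to give a lower bound with constant probability. Replacing Markov by Paley--Zygmund is what saves the argument—it works precisely because the relative second moment of $\sum_i \sigma_i^{2} z_i^{2}$ is bounded irrespective of how the ``energy'' $\sum_i \sigma_i^{2}$ is distributed across the singular values. Everything else is telescoping plus constant bookkeeping, with the scalar strong anti-concentration inequality invoked as a black box.
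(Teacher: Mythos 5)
The paper never proves \Cref{fact:strong-anti-conc}: it is imported as a black box (Lemma~1.6 of \cite{kelley2022random}), so there is no in-paper argument to compare yours against. Taken on its own terms, your reduction is sound. The telescoping bookkeeping is right: with $\lambda=k^{3}/\eps$, a per-level failure probability $O(k/\lambda)=O(\eps/k^{2})$ union-bounded over the $k$ levels and then chained from $\nabla^{0}p=p$ reproduces exactly the $(k^{3t}/\eps^{t})$ thresholds in \Cref{eq:slow-growth}. The key step --- making the vector-valued statement dimension-free by projecting $\vec q=(\nabla^{t}p)^{\flat}$ along a Gaussian $\vec u$ and lower-bounding $\|\nabla s_{\vec u}\|_2^2=\vec u^{\top}JJ^{\top}\vec u$ via Paley--Zygmund (relative second moment $\leq 3$, so constant probability $\geq 1/12$ independently of $M=(nd)^{t}$) --- is correct, as is the Fubini/averaging step; the only loose ends (derivative tensors that vanish identically, strict versus non-strict inequalities, $\vec u$ for which $s_{\vec u}\equiv 0$) are measure-zero or trivial-case issues that do not affect the argument.

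The one substantive caveat is where the difficulty actually lives. Your entire proof is conditional on the scalar gradient-relative anti-concentration bound $\Pr_{\vec g}\lp[\,|q(\vec g)|\leq\delta\|\nabla q(\vec g)\|_2\,\rp]\leq O(k\delta)$, which is not an elementary fact but a deep theorem in its own right (due to Kane, and precisely the engine behind the cited lemma); the usual Carbery--Wright bound gives only $\eps^{1/k}$-type dependence, and the linear dependence on $\delta$ is essential for recovering the stated $k^{3t}/\eps^{t}$ form rather than a weaker $(\mathrm{poly}(k)/\eps^{2})^{t}$. You flag this honestly, and your argument is robust to the exact polynomial-in-$k$ factor (a bound $O(k^{2}\delta)$, which is the level-wise form the paper's \Cref{sec:technical-overview} attributes to the literature, works just as well after adjusting $\lambda$ by a constant power of $k$). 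So what you have is a correct and instructive reduction of the tensor-level, chained statement to the scalar first-derivative statement --- essentially a reconstruction of how such lemmas are derived in the PRG literature --- rather than a self-contained proof, whereas the paper simply cites the chained statement wholesale.
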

As argued in \Cref{sec:technical-overview}, a direct application of the strong anti-concentration property is not enough for our purposes. 
Instead, we need to leverage the specific structure of the distribution in the NGCA problem: the distribution $\ngca$ is non-Gaussian only along a single randomly chosen direction $\vec v$, and is the same as the standard Gaussian in every orthogonal direction.
Our main result in this subsection is that if we fix a low-degree polynomial $p$ and take $\vec v$ to be some random unit vector, the directional derivative tensor $p^{[k], \vec v}$ will instead be shrinking with high probability. Intuitively, this is because $ p^{[t], \vec v}( \multix )$ is just a 
\emph{random} sub-part of the entire tensor $\nabla^t p(\vec x_1, \cdots,  \vec x_n)$ that appeared in \Cref{fact:strong-anti-conc}.

\begin{proof}[Proof of \Cref{lem:derivative-decay}]
Fix some arbitrary $\multix \in \R^d$.
For convenience, define the tensor $\vec M :=\nabla^t p(\multix)$, and $\vec N:=p^{[t], \vec v}(\multix)$.
Recall that the full degree-$t$ gradient tensor
$\vec M $ lives in the space $( \R^{n \times d} )^{\otimes t}$. 
We can therefore label each entry of 
$\vec M$ by some indices
$i_1, j_1, \cdots, i_t, j_t$, where $i_1, \cdots i_t \in [n]$ and $j_1, \cdots j_t \in [d]$.
On the other hand, 
the projected derivative tensor $\vec N $ lives in the space $( \R^{n} )^{\otimes t}$. 
We can therefore label each entry of $\vec N$ by some indices $i_1, \ldots i_t$, where $i_1, \ldots i_t \in [n]$.
One can then verify using the definitions from \Cref{sec:derivative} that the entries of $\vec N$ and $\vec M$ satisfy the following relationship:
\begin{align*}
\vec N_{i_1, \cdots, i_t}
&= 
\sum_{j_1 \in [d]} \vec v_{j_1}
\frac{ \partial }{
\partial \vec x^{(i_1)}_{j_1} }
\sum_{j_2 \in [d]} \vec v_{j_2}
\frac{ \partial }{
\partial \vec x^{(i_2)}_{j_2} }
\cdots
\sum_{j_t \in [d]} \vec v_{j_t}
\frac{ \partial }{
\partial \vec x^{(i_t)}_{j_t} }
p (\multix) \\
&=
\sum_{j_1, \cdots ,j_t \in [d]} \vec v_{j_1}
\cdots \vec v_{j_t}
\; 
\frac{ \partial }{
\partial \vec x^{(i_1)}_{j_1}}
\cdots
\frac{ \partial }{
\partial \vec x^{(i_t)}_{j_t}}
p (\multix) \\
&=
\sum_{j_1, \ldots, j_t \in [d]} \vec v_{j_1}
\cdots \vec v_{j_t}
\; 
\vec M_{i_1, j_1, \ldots i_t, j_t}.
\end{align*}

\noindent Let $\vec v^{\otimes t}$ be the $t$-fold tensor product of $\vec v$ with itself, i.e., the tensor with entries:
$$
\left( \vec v^{\otimes t} \right)_{ j_1, \cdots, j_t  }
:=   \vec v_{j_1} \cdots \vec v_{j_t} \, ,
$$
and $\vec M^{(i_1, \cdots, i_t)} \in \lp(  \R^d \rp)^{\otimes t}$
be the sub-tensor of $\vec M$ of the form:
\begin{align}\label{eq:M-subtensor}
    \vec M^{(i_1, \cdots, i_t)}_{j_1, \cdots, j_t}
:= \vec M_{  i_1, j_1, \cdots, i_t, j_t }.
\end{align}

We can then write
\begin{align}
 \E_{ \vec v \sim \usphere } \lp[ \|  \vec N  \|_F^2 \rp]
&= 
\sum_{i_1, \cdots, i_t \in [n]}
\E_{ \vec v \sim \usphere }
\lp[ 
\lp( \sum_{j_1, \cdots j_t} \vec v_{j_1}
\cdots \vec v_{j_t} 
\; 
\vec M_{i_1, j_1, \cdots i_t, j_t} \rp)^2 \rp]
\nonumber \\
&= 
\sum_{i_1, \cdots, i_t \in [n]}
\E_{ \vec v \sim \usphere }
\lp[ 
  \left\langle  \vec v^{\otimes t}, \vec M^{(i_1, \cdots, i_t)} \right\rangle^2 \rp] \nonumber  \\
&= 
\sum_{i_1, \cdots, i_t \in [n]}
\lp( \vec M^{(i_1, \cdots, i_t)} \rp)^{\flat, \top}
\E_{ \vec v \sim \usphere } \lp[ 
 \left( \vec v^{\otimes t} \right)^{\flat}  \left( \vec v^{\otimes t}  \right)^{\flat, \top} \rp]
\lp( \vec M^{(i_1, \cdots, i_t)} \rp)^{\flat}.
\label{eq:vector-matrix-flat}
\end{align}

\noindent We claim the following bound on the matrix $  
 \E_{ \vec v \sim \usphere } \lp[ \left( \vec v^{\otimes t} \right)^{\flat}  \left( \vec v^{\otimes t}  \right)^{\flat, \top}   \rp]\in \R^{  d^t  \times d^t}$ that appeared earlier:
\begin{claim}\label{eq:W(v)-bound}
    Define $\vec W(\vec v):= 
 \E_{ \vec v \sim \usphere } \lp[ \left( \vec v^{\otimes t} \right)^{\flat}  \left( \vec v^{\otimes t}  \right)^{\flat, \top}   \rp]\in \R^{  d^t  \times d^t}$. 
 Then we have
 \begin{align*}
\| \vec W(\vec v) \|_F \leq    \lp(\frac{2t}{e \sqrt{d}} \rp)^{t}   \;.
\end{align*}
\end{claim}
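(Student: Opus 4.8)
The plan is to view $\vec W$ as the second moment matrix of the random vector $\vec u \eqdef (\vec v^{\otimes t})^{\flat}\in\R^{d^t}$, that is $\vec W = \E_{\vec v\sim\usphere}[\vec u\,\vec u^{\top}]$, and to collapse its Frobenius norm to a one-dimensional moment of a uniform unit vector. Since $\vec W$ is symmetric and positive semidefinite, $\|\vec W\|_F^2=\tr(\vec W^{\top}\vec W)=\tr(\vec W^2)$. Introducing a second independent copy $\vec v'\sim\usphere$ and writing $\vec u'\eqdef(\vec v'^{\otimes t})^{\flat}$, expanding $\vec W^2=\E_{\vec v}[\vec u\vec u^{\top}]\,\E_{\vec v'}[\vec u'\vec u'^{\top}]$ and taking the trace gives $\tr(\vec W^2)=\E_{\vec v,\vec v'}\big[\langle\vec u,\vec u'\rangle^2\big]$. (Equivalently, and more in the hands-on style of the surrounding proof, one can write $\|\vec W\|_F^2=\sum_{\vec a,\vec b\in[d]^t}\vec W_{\vec a,\vec b}^2$, substitute $\vec W_{\vec a,\vec b}=\E_{\vec v}\big[\prod_i\vec v_{a_i}\vec v_{b_i}\big]$, and fold the two $d^t$-fold sums back into inner products; this reaches the same expression without invoking the matrix trace.) The key algebraic identity is that flattening intertwines tensor powers with scalar powers of the inner product, $\langle\vec u,\vec u'\rangle=\langle\vec v^{\otimes t},\vec v'^{\otimes t}\rangle=\langle\vec v,\vec v'\rangle^{t}$, so that $\|\vec W\|_F^2=\E_{\vec v,\vec v'\sim\usphere}\big[\langle\vec v,\vec v'\rangle^{2t}\big]$.

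The second step is to exploit rotational invariance of the uniform measure on $\mathcal S^{d-1}$: conditioned on $\vec v'$, the inner product $\langle\vec v,\vec v'\rangle$ has the same law as the first coordinate $\vec v_1$ of a uniform random unit vector, hence $\|\vec W\|_F^2=\E_{\vec v\sim\usphere}[\vec v_1^{2t}]$. This moment is classical: writing $\vec v=\vec g/\|\vec g\|$ with $\vec g\sim\normal(\vec 0,\vec I)$ and using that the direction $\vec g/\|\vec g\|$ is independent of the magnitude $\|\vec g\|$, one gets $\E[\vec v_1^{2t}]=\E[\vec g_1^{2t}]/\E[\|\vec g\|^{2t}]=\frac{(2t-1)!!}{d(d+2)\cdots(d+2t-2)}\le\frac{(2t-1)!!}{d^{t}}$.

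The final step is a routine estimate of the double factorial. Using $(2t-1)!!=\frac{(2t)!}{2^{t}\,t!}$ and Stirling's bounds (the polynomial Stirling factors being dominated), one obtains $(2t-1)!!\le\big(\tfrac{2t}{e}\big)^{2t}$, whence $\|\vec W\|_F^2\le\frac{(2t/e)^{2t}}{d^{t}}$ and therefore $\|\vec W\|_F\le\big(\tfrac{2t}{e\sqrt d}\big)^{t}$, as claimed. I do not expect a genuine obstacle: the only points that need care are (i) the index bookkeeping behind the flattening identity $\langle\vec v^{\otimes t},\vec v'^{\otimes t}\rangle=\langle\vec v,\vec v'\rangle^{t}$ and the reduction of $\|\vec W\|_F^2$ to $\E[\langle\vec v,\vec v'\rangle^{2t}]$, and (ii) carrying the Stirling estimate on $(2t-1)!!$ out with enough precision to land the constant $e$ in the denominator (for the smallest values of $t$ one just checks the numeric inequality directly).
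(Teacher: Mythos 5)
Your approach is genuinely different from the paper's and is in fact cleaner. The paper bounds $\|\vec W\|_F$ entrywise: it compares the uniform-sphere measure to the Gaussian by conditioning on $\|\vec v\|_2$, then applies Isserlis' theorem and bounds the resulting sum of squared Gaussian moments by the number of matchings, ultimately reaching $\|\vec W\|_F^2 \leq ((2t-1)!!)^2/d^{t}$. You instead observe that $\|\vec W\|_F^2 = \tr(\vec W^2) = \E_{\vec v,\vec v'}[\langle \vec v^{\otimes t},\vec v'^{\otimes t}\rangle^2] = \E[\langle\vec v,\vec v'\rangle^{2t}] = \E[\vec v_1^{2t}]$ by rotational invariance, yielding the \emph{exact} value $\|\vec W\|_F^2 = (2t-1)!!\big/\bigl(d(d+2)\cdots(d+2t-2)\bigr)$. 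This is quadratically sharper in the combinatorial factor — you need only $(2t-1)!!$, where the paper needs $((2t-1)!!)^2$ — and the trace/second-copy identity replaces the somewhat ad hoc Gaussian-conditioning and Cauchy--Schwarz-over-matchings steps.

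One caveat worth making explicit, since you flagged it as a loose end: the closing estimate $(2t-1)!! \le (2t/e)^{2t}$ does hold for every integer $t\ge 2$ (at $t=2$ it reads $3 \le (4/e)^4 \approx 4.69$, and the margin only widens), but it \emph{fails} at $t=1$, where $(2t-1)!!=1 > (2/e)^2 \approx 0.54$. In fact the Claim as stated is false at $t=1$: your exact formula gives $\|\vec W\|_F = 1/\sqrt{d}$, which strictly exceeds $2/(e\sqrt d)$. This is not a defect of your argument; the paper's own final step asserts $t! < (t/e)^t$, which is backwards from Stirling and so wrong for \emph{all} $t$, meaning the paper's bound $(2t-1)!! \le (2t/e)^t$ also fails. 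The discrepancy is a universal constant, and in the downstream proof of Lemma~\ref{lem:derivative-decay} the factor $(2t/e)^{t/2}d^{-t/4}$ is immediately wrapped in an $O(\cdot)$ and then absorbed into $(k/\eps)^{4t}$, so nothing breaks; but if one wants the Claim to be literally true one should enlarge the constant, e.g.\ replace $(2t/(e\sqrt d))^t$ by $\bigl((2t-1)!!\bigr)^{1/2}d^{-t/2}$ or by $(2t/\sqrt d)^t$.
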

We defer the proof of the above claim to the end of this subsection. Before that, we show how to conclude the proof of \Cref{lem:derivative-decay} using the claim.
Combining \Cref{eq:W(v)-bound,eq:vector-matrix-flat} gives the following (recall that $\vec M^{(i_1, \cdots, i_t)}$ denotes the tensor from  \Cref{eq:M-subtensor}):
\begin{align*}
\E_{ \vec v \sim \usphere } \lp[ \|  \vec N  \|_F^2 \rp]
&\leq 
\sum_{i_1, \cdots, i_t \in [n]}
\| \vec M^{(i_1, \cdots, i_t)}  \|_F^2
\|\vec W(\vec v) \|_F 
\\
&\leq  (2t/e)^{t} d^{-t/2} \sum_{i_1, \cdots, i_t \in [n]} \| \vec M^{(i_1, \cdots, i_t)}  \|_F^2 
\tag{\Cref{eq:W(v)-bound}}
\\
&=  (2t/e)^{t} d^{-t/2} \| \vec M \|_F^2.
\end{align*}
By Jensen's inequality, we have that
\begin{align}
\label{eq:N-M-relation}
\E_{\vec v \sim \usphere} \lp[ \|  \vec N  \|_F  \rp]
\leq \lp( 2t/e \rp)^{t/2} d^{-t/4} \| \vec M \|_F.
\end{align}
Recall that at the beginning of the proof we fixed some arbitrary $\multix$ and defined the quantities $\vec N: = p^{[t], \vec v}\lp( \multix \rp), \vec M:= \nabla^t p\lp( \multix \rp)$ accordingly.
Therefore, \Cref{eq:N-M-relation} implies that
\begin{align}\label{eq:intermediate}
\E_{\vec v \sim \usphere} \lp[ \| p^{[t], \vec v}\lp( \multix \rp)  \|_F  \rp]
\leq \lp( 2t/e \rp)^{t/2} d^{-t/4} \| \nabla^t p\lp( \multix \rp) \|_F.
\end{align}
Define the event $\mathcal{E}_{\multix} = \left\{ \|p^{[t], \vec v}\lp( \multix \rp)  \|_F \neq 0, \nabla^t p\lp( \multix \rp) \neq 0 \right\}$ with respect to a sequence of points $\multix$. 
Then we have that
\begin{align*}
&\E_{ \vec v \sim \usphere  }
\lp[ 
\E_{ \multix \sim \normal(\vec 0, \vec I) } \lp[ 
\frac{
\| p^{[t], \vec v}\lp( \multix \rp)  \|_F}{
\| \nabla^t p\lp(  \multix \rp) \|_F  }  \Biggm\vert \mathcal{E}_{\multix}
\rp]
\rp] 
\\
&= \E_{ \multix \sim \normal(\vec 0, \vec I) }
\lp[ 
\E_{\vec v \sim \usphere} \lp[ 
\frac{
\| p^{[t], \vec v}\lp( \multix \rp)  \|_F}{
\| \nabla^t p\lp(  \multix \rp)  \|_F} \Biggm\vert \mathcal{E}_{\multix}
\rp] \rp] \\
&\leq \lp( 2t/e \rp)^{t/2} d^{-t/4}. \tag{using \Cref{eq:intermediate} and the definition of $\mathcal{E}_{\multix}$}
\end{align*}
Applying Markov's inequality on $\vec v$ then gives that
\begin{align}
\label{eq:x-expectation-bound}
\E_{ \multix \sim \normal(\vec 0, \vec I) } \lp[ 
\frac{
\| p^{[t], \vec v}\lp( \multix \rp)  \|_F}{
\| \nabla^t p\lp(  \multix \rp) \|_F  } \Biggm\vert \mathcal{E}_{\multix}
\rp]
\leq O(1) \; \lp( 2t/e \rp)^{t/2} d^{-t/4}
\end{align}
with high constant probability.
In the remaining analysis, we condition on some $\vec v$ such that \Cref{eq:x-expectation-bound} holds. 
Note that 
in the case that the event $\mathcal{E}_{\multix}$ does not hold
then we still have
$\lp \| p^{[t], \vec v}\lp( \multix \rp)  \rp \|_F 
\leq O \lp( \lp( 2t/e \rp)^{t/2} d^{-t/4} \rp) \; \lp \| \nabla^t p\lp(  \multix \rp) \rp \|_F$ as both sides of the inequality are zero.
We thus have that
\begin{align*}
&\Pr_{ \multix \sim \normal(\vec 0, \vec I) }
\lp[ 
\lp \| p^{[t], \vec v}\lp( \multix \rp)  \rp \|_F 
\leq O \lp( \eps^{-1} \lp( 2t/e \rp)^{t/2} d^{-t/4} \rp) \; \lp \| \nabla^t p\lp(  \multix \rp) \rp \|_F
\rp]\\
&\geq 
\Pr_{ \multix \sim \normal(\vec 0, \vec I) }
\lp[ 
\lp\| p^{[t], \vec v}\lp( \multix \rp)  \rp\|_F 
\leq O \lp( \eps^{-1} \lp( 2t/e \rp)^{t/2} d^{-t/4} \rp) \; \lp \| \nabla^t p\lp(  \multix \rp) \rp \|_F \Biggm\vert \mathcal{E}_{\multix}
\rp]\\
&\geq 1 - \eps / 2,
\end{align*}
where we used that the event $\mathcal{E}_{\multix}$ happens with probability 1 (this is because the complement of the event amounts to $\multix$ being the exactly equal to the roots of a polynomial), and Markov's inequality.

By a union bound over $t \in [k]$, we further get that
\begin{align*}
    &\Pr_{ \multix \sim \normal(\vec 0, \vec I) }
\lp[\forall  t \in [k]:
\lp \| p^{[t], \vec v}\lp( \multix \rp)  \rp \|_F 
\leq O \lp( \eps^{-1} k \lp( 2t/e \rp)^{t/2} d^{-t/4} \rp) \; \lp \| \nabla^t p\lp(  \multix \rp) \rp \|_F
\rp]\\
&\geq 1 - \eps/2 .
\end{align*}

Combining this with \Cref{eq:slow-growth} and the union bound
then gives that
\begin{align*}
&\Pr_{ \multix \sim \normal(\vec 0, \vec I) }
\lp[ \forall  t \in [k]:
\lp\| p^{[t], \vec v}\lp( \multix \rp)  \rp\|_F 
\leq O \lp( \eps^{-1} k \lp( 2t/e \rp)^{t/2} d^{-t/4} \rp) \;
k^{3t} (2/\eps)^t
\lp|  p\lp(  \multix \rp) \rp|
\rp]\\
&\geq 1 - \eps.    
\end{align*}
We can finally simplify the expression $O \lp( \eps^{-1} k \lp( 2t/e \rp)^{t/2} d^{-t/4} \rp)$ that appears above as follows:
\begin{align*}
    O \lp( \eps^{-1} k \lp( 2t/e \rp)^{t/2} d^{-t/4} \rp) \;
k^{3t} (2/\eps)^t \leq (k/\eps)^{4t}  d^{-t/4}
\end{align*}
which concludes
the proof of \Cref{lem:derivative-decay}.
\end{proof}

We conclude this section by showing \Cref{eq:W(v)-bound}:

\begin{proof}[Proof of \Cref{eq:W(v)-bound}]
To show that,  we will first relate the expected value of $\vec W(\vec v)$ under $\vec v \sim \usphere$ to that under $\vec v \sim \normal(\vec 0, \vec I)$.
In particular, denote by $g: \R \mapsto \R$ the probability density function of the random variable $\| \mathbf v  \|_2$, where $ \vec v \sim \normal(\vec 0, \vec I)$.
Then we have that
\begin{align*}
\lp \| \E_{ \vec v \sim \normal(\vec 0, \vec I) }
\lp[ \vec V(\vec v)^\flat \vec V(\vec v)^{\flat, \top}  
\rp] \rp \|_F
&=
\lp\|
\int\limits_{0}^{+\infty}
\E_{ \vec v \sim \normal(\vec 0, \vec I) }
\lp[ \vec V(\vec v)^\flat \vec V(\vec v)^{\flat, \top}  \bigm\vert \|  \vec v \|_2 = b
\rp] g(\vec v) \, \d b  
\rp \|_F
\\
&=
\lp \|
\E_{ \vec v \sim \usphere }
\lp[ \vec V(\vec v)^\flat \vec V(\vec v)^{\flat, \top} \rp] 
\int\limits_{0}^{+\infty}
b^{2t} g(\vec v) \, \d b \rp \|_F\\
&\geq
\lp \|
\E_{ \vec v \sim \usphere }
\lp[ \vec V(\vec v)^\flat \vec V(\vec v)^{\flat, \top} \rp] 
\rp \|_F
d^{t} 
=  \| \vec W(\vec v) \|_F \, d^t \,
\end{align*}
where the last line used that $\int_0^{+\infty} b^{2t} g(b) \, \d b = \E_{\vec v \sim \normal(0,\vec I)}[\|\vec v\|^{2t}] \geq \E_{\vec v \sim \normal(0,\vec I)}[\|\vec v\|^{2}]^t = d^t$.
Rearranging this gives the following upper bound on the Frobenius norm of $\vec W(\vec v)$:

\begin{align}
\| \vec W(\vec v) \|_F= 
\lp \| \E_{\vec v \sim \usphere }  \lp[
 \vec V(\vec v)^{\flat} \vec V(\vec v)^{\flat, \top}
\rp] \rp \|_F
&\leq
d^{-t}
\lp \| 
\E_{\vec v \sim \normal(0, \vec I) }  \lp[
 \vec V(\vec v)^{\flat} \vec V(\vec v)^{\flat, \top}
\rp] \rp \|_F \nonumber \\
&= d^{-t}
\sqrt{
\sum_{  j_1, \cdots, j_{2t} \in [d]}
 \left( \E_{\vec v \sim \normal(0,\vec I)}\lp[  \vec v_{j_1} \cdots \vec v_{j_{2t} }    \rp] \right)^2 . \label{eq:beforeIserrlis}
} 
\end{align}
We can further bound the quantity $\E^2_{\vec v \sim \normal(\vec 0, \vec I)}\lp[  \vec v_{j_1} \cdots \vec v_{j_{2t} }    \rp]$ that appears in the right hand side above using  Iserrlis' theorem (\Cref{fact:iserrlis}) as follows ($P_k^2$ below denotes the set of all matchings among $\{1,\ldots,k\}$):
\begin{align}
    \sum_{j_1, \cdots, j_{2t} \in [d]} \left( \E\lp[  \vec v_{j_1}  \cdots \vec v_{j_{2t} }    \rp] \right)^2
    &=  \sum_{j_1, \cdots, j_{2t} \in [d]}\lp( \sum_{p \in P_{2t}^2} \prod_{ \{ k, \ell \} \in p } \E[ \vec v_{j_k} \vec v_{j_\ell}] \rp)^2 \nonumber \\
    &\leq | P_{2t}^2 |  \sum_{j_1, \cdots, j_{2t} \in [d]} \; \sum_{p \in P_{2t}^2}\; \prod_{ \{ k, \ell \} \in p} \lp( \E[ \vec v_{j_k} \vec v_{j_\ell}]\rp)^2 \nonumber \\
    &=  | P_{2t}^2 |  \sum_{p \in P_{2t}^2} \;\prod_{ \{ k, \ell \} \in p} \; \sum_{j_k, j_\ell \in [d]} \lp( \E[ \vec v_{j_k} \vec v_{j_\ell}]\rp)^2 \nonumber \\
    &= | P_{2t}^2 |  \sum_{p \in P_{2t}^2} d^t  \nonumber \\
    &=  \, \lp( (2t-1)!! \rp)^2 \, d^t \leq (2t/e)^{2t} d^t 
    \label{eq:afterIserrlis}\, , 
\end{align}
where the first line is an application of Iserrlis' theorem (\Cref{fact:iserrlis}), the second line  
uses the inequality $2 ab \leq a^2 + b^2$, 
the fourth line uses that $\E[ \vec v_{j_k} \vec v_{j_\ell}] = \1(j_k = j_\ell)$ and $|p|$ (the number of pairs within the matching) is $t$, and the last line uses that the number of all possible matchings over $[2t]$ is $(2t-1)!! < (2t)!! = 2^t t! < 2^t (t/e)^t = (2t/e)^t$. 
\noindent Combining \Cref{eq:beforeIserrlis} and \Cref{eq:afterIserrlis} then gives that $\| \vec W(\vec v) \|_F  \leq  d^{-t/2}  (2t/e)^{t}$, concluding the proof of \Cref{eq:W(v)-bound}.
\end{proof}

\subsection{Framework: Mollification, Sandwiching, and Hybrid Argument}
\label{sec:framework}
In this subsection, we lay out the high level proof strategy for our main theorem based on the ideas of mollification and the hybrid argument.

To begin with, we need a smooth function $\rho: \R \mapsto [0, 1]$ satisfying the following conditions:
\begin{align}
\label{eq:rho-property}
\rho(x) = 1 \text{ if } |x| < 1 \, ,
\rho(x) = 0 \text{ if } |x| \geq 3 \, ,
\| \rho^{(t)}(x) \|_{\infty} \leq O\lp( t^{t} \rp).
\end{align}
There are standard ways to construct such a function, deferred to \Cref{lem:smooth-function-construct} in \Cref{ap:sign}.
We then use it to define the following mollifier function $g$:
\begin{align}
\label{eq:mollifier-def}
g(\multix) = \prod_{t=1}^k \rho \lp(   
\frac{d^{ c_g t} \lp\| p^{[t]}( \multix ) \rp \|_F^2 }{p^2(\multix  )} \rp) \, ,
\end{align}
where $c_g \in (0, 1/2)$ is some constant that we will specify later. 
Intuitively, $g$ is constructed such that
if some points $\vec x^{(1:n)}$ satisfy the derivative decay condition  from \Cref{lem:derivative-decay},  then $g(\multix)$ should evaluate to $1$.
Conversely, if $g(\multix)$ evaluates to $1$,
we can infer from its definition that the weaker derivative decay condition 
$ 
\lp\| p^{[t]}( \multix ) \rp \|_F^2
\leq 3  d^{ -c_g t}  p^2(\multix  )
$ must hold for the input points.

Finally, the mollified version of the PTF is the following function:
\begin{align}
\label{eq:mollified-ptf}
h(\multix) := \sgn( p(\multix) ) g(\multix).    
\end{align}
Thanks to \Cref{lem:derivative-decay} and our construction of $g$,
we can show that $h(\multix) $ is a good approximation of $\sgn( p(\multix) )$ under the Gaussian distribution.
Moreover, since $g$ is at most $1$,
we note that $h(\cdot)$ is bounded from above by $\sgn(p(\cdot))$ pointwise. 
Combining these two observations with a sandwiching argument allows us to show the following: if $\ngca$ fools the mollified PTF $ h $ with respect to the Gaussian distribution, then $\ngca$ also fools the original PTF. 
\begin{lemma}[Sandwiching]
\label{lem:sandwich} 
Let $p: \R^{n \times d} \mapsto \R$ be a degree-$k$ polynomial, $\vec v \in \R^d$ be some vector satisfying \Cref{eq:derivative-decay} with $\eps = 0.05$, 
and $h:\R^{n \times d} \mapsto \R$ 
be the mollified PTF of $p$ defined as in \Cref{eq:mollified-ptf}.
Assume that $(k/0.05) < d^{ (1/4 - \cg/2)/4}$, where $\cg$ is the parameter used in \Cref{eq:mollifier-def}.
The following statement holds:
If 
\begin{align}
\label{eq:fool-mollify}
\lp| 
\E_{ \multiy \sim \ngcav } 
\lp[    h(\bmultiy)  \rp] 
- \E_{ \multix \sim \normal(\vec 0, \vec I) }
\lp[  h(\bmultix)   \rp]\rp| \leq \delta \, ,     
\end{align}
then it holds that
\begin{align}
\label{eq:fool-mollify2}
\lp|  \E_{ \multiy \sim \ngcav } \lp[ \sgn\lp( p ( \bmultiy ) \rp)\rp] 
- \E_{ \multix \sim \normal(\vec 0, \vec I)  }
\lp[  \sgn\lp( p ( \bmultix ) \rp)   \rp]\rp|
\leq \delta + 0.05.
\end{align}
\end{lemma}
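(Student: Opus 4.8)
I would run the standard two-sided sandwich. The goal is to exhibit functions $h_-,h_+:\R^{n\times d}\to[0,1]$ with $h_-\le \sgn(p)\le h_+$ pointwise such that each of $h_\pm$ both (i) agrees with $\sgn(p(\cdot))$ in $\normal(\vec 0,\vec I)$-expectation up to an additive $0.05$, and (ii) is fooled by $\ngcav$ with respect to $\normal(\vec 0,\vec I)$ with error $\le\delta$. Given (i) and (ii), \Cref{eq:fool-mollify2} follows by chaining: from $h_-\le\sgn(p)$ pointwise and (i), (ii) for $h_-$, $\E_{\normal}[\sgn(p)]\le \E_{\normal}[h_-]+0.05\le \E_{\ngcav}[h_-]+\delta+0.05\le \E_{\ngcav}[\sgn(p)]+\delta+0.05$; and symmetrically from $\sgn(p)\le h_+$ and (i), (ii) for $h_+$, $\E_{\ngcav}[\sgn(p)]\le \E_{\ngcav}[h_+]\le \E_{\normal}[h_+]+\delta\le \E_{\normal}[\sgn(p)]+0.05+\delta$.

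I would take $h_-=h$, the mollified PTF of $p$ from \Cref{eq:mollified-ptf}, and $h_+=h+(1-g)$, where $g$ is the mollifier of \Cref{eq:mollifier-def}; equivalently $h_+=1-h'$, where $h'=\sgn(-p)\,g$ is the mollified PTF of $-p$ (the mollifier in \Cref{eq:mollifier-def} is unchanged under $p\mapsto -p$, since it involves $p$ only via $\|p^{[t]}\|_F^2$ and $p^2$). Since $g\in[0,1]$ and $\sgn(p)\in\{0,1\}$, one checks $\sgn(p)-h_-=\sgn(p)(1-g)\in[0,1]$ and $h_+-\sgn(p)=(1-g)(1-\sgn(p))\in[0,1]$, so the pointwise sandwich holds and both gaps are dominated by $1-g$.

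The crux is (i), i.e.\ $\E_{\normal}[1-g]\le 0.05$, and this is precisely where the parameter assumption enters. Apply \Cref{lem:derivative-decay} with $\eps=0.05$: since $\vec v$ satisfies \Cref{eq:derivative-decay}, with $\normal(\vec 0,\vec I)$-probability at least $1-0.05$ we have $\|p^{[t],\vec v}(\multix)\|_F\le (k/0.05)^{4t}d^{-t/4}\,|p(\multix)|$ for all $t\in[k]$. Raising the hypothesis $(k/0.05)<d^{(1/4-\cg/2)/4}$ to the power $4t$ and multiplying by $d^{-t/4}$ gives $(k/0.05)^{4t}d^{-t/4}<d^{-\cg t/2}$, so on that event $d^{\cg t}\|p^{[t]}(\multix)\|_F^2/p^2(\multix)<1$ for every $t\in[k]$; by the first clause of \Cref{eq:rho-property} every factor of $g$ then equals $1$, i.e.\ $g(\multix)=1$. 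Since $1-g\in[0,1]$ and is $0$ on this event, $\E_{\normal}[1-g]\le 0.05$, which yields $\E_{\normal}[\sgn(p)-h_-]\le 0.05$ and $\E_{\normal}[h_+-\sgn(p)]\le 0.05$. For (ii): the hypothesis \Cref{eq:fool-mollify} is exactly the fooling bound for $h_-$, and the same hypothesis applied with $p$ replaced by $-p$ — valid since $-p$ has degree $k$ and $\vec v$ satisfies \Cref{eq:derivative-decay} for $-p$ because $(-p)^{[t],\vec v}=-p^{[t],\vec v}$ and $|-p|=|p|$ — gives the fooling bound for $h'$, hence for $h_+=1-h'$.

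I expect the inequality-chaining to be routine once (i) and (ii) are established. The main thing to get right is the parameter matching in (i): checking that the exponent $(1/4-\cg/2)/4$ together with $\eps=0.05$ is exactly the threshold below which the bound $(k/\eps)^{4t}d^{-t/4}$ from \Cref{lem:derivative-decay} beats the mollifier normalization $d^{-\cg t/2}$ for all $t\in[k]$ at once, which is what forces $g\equiv 1$ on the good event. A secondary point is that the two-sided conclusion genuinely needs the upper sandwich $h_+$, hence the fooling hypothesis for the mollified PTF of $-p$ and not just of $p$; the direction $\E_{\normal}[\sgn(p)]-\E_{\ngcav}[\sgn(p)]\le\delta+0.05$ alone would follow from \Cref{eq:fool-mollify} for $h$ by itself.
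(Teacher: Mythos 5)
Your proof is correct and follows essentially the same route as the paper's. The paper establishes the lower bound $\E_{\ngcav}[\sgn(p)] \geq \E_{\normal}[\sgn(p)] - \delta - \eps$ directly from $h\leq\sgn(p)$ pointwise plus the bound $\E_\normal[(1-g)\sgn(p)]\le\eps$ coming from \Cref{lem:derivative-decay}, and then says ``repeat with $-p$'' for the other direction; you instead make the sandwich structure explicit by defining $h_-=h$ and $h_+=h+(1-g)=1-h'$ with $h'$ the mollified PTF of $-p$, and chain both directions. Mathematically these are identical; your presentation just names the upper sandwiching function explicitly, which is a slightly cleaner way of organizing the same two one-sided bounds. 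One point worth flagging (and you are actually more explicit about it than the paper): both proofs need \Cref{eq:fool-mollify} to hold not only for $h$ but also for the mollified PTF of $-p$. As stated, the lemma's hypothesis only mentions $h$ itself, so both proofs implicitly strengthen the hypothesis to cover $-p$; in the context where the lemma is invoked in the proof of \Cref{thm:main}, this is harmless because \Cref{prop:replacement-step} applies to arbitrary degree-$k$ polynomials, so the fooling bound for the mollified PTF of $-p$ is available for the same reason it is available for that of $p$.
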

\begin{proof}
For $\multix \in \R^d$ define the event $$\mathcal{E}_{\vec v} (\bmultix) = \lp\{\forall t \in [k] \;\;
\| p^{[t], \vec v}( \multix ) \|_F  \leq  d^{-\cg t/2}  |p(\multix)| \rp\}.$$

Fix $\eps = 0.05$.
Recall that we assume $\vec v$ is chosen such that the derivative decay condition in \Cref{eq:derivative-decay} holds.
For convenience, we restate the condition below.
\begin{align}\label{eq:probbound}
    \Pr_{ \multix \sim \normal(\vec 0, \vec I)   }
\lp[ \forall t \in [k] \;\;
\| p^{[t], \vec v}( \multix ) \|_F  \leq (k/\eps)^{4t}  d^{-t/4} |p(\multix)|
\rp]   \geq 1 - \eps.
\end{align}
Using our assumption $k/\eps = (k/0.05) < d^{(1/4-\cg/2)/4}$, we have that $(k/\eps)^{4t}d^{-t/4} < d^{-\cg t/2}$. Thus a simplified version of \Cref{eq:probbound} holds, and in particular implies that:
\begin{align}\label{eq:derivatives-event}
  \Pr_{ \multix \sim \normal(\vec 0, \vec I)   }
\lp[ \mathcal{E}_{\vec v}(\bmultix)
\rp]   \geq 1 - \eps.  
\end{align}

We will first show that $|\E_{\multix \sim \normal(\vec 0, \vec I) }[h(\bmultix)] - \E_{\multix \sim \normal(\vec 0, \vec I) }[\sign (p(\bmultix))]| \leq \eps$ for any degree-$k$ polynomial $p: \R^{n \times d} \mapsto \R$. 
Then, we will use that to show \Cref{eq:fool-mollify2}. 
To see the first claim, using the definition of $g(\bmultix)$ (\Cref{eq:mollifier-def}) and \Cref{eq:derivatives-event}
we have that
\begin{align}
    &\lp| \E_{\multix \sim \normal(\vec 0, \vec I) }[h(\bmultix)] - \hspace{-13pt}\E_{\multix \sim \normal(\vec 0, \vec I) }[\sign (p(\bmultix))] \rp| \\
    &= \lp|  \E_{\multix \sim \normal(\vec 0, \vec I) }\lp[(g(\bmultix) - 1)\sign (p(\bmultix))\rp] \rp| \notag \\
    &= \lp|  \E_{\multix \sim \normal(\vec 0, \vec I) }\lp[(g(\bmultix) - 1)\sign (p(\bmultix)) \cdot \1\lp(\mathcal E_{\vec v}^c(\bmultix)\rp) \rp] \rp| \label{eq:indicator}\\
    &\leq \Pr_{\multix \sim \normal(\vec 0, \vec I) }\lp[\mathcal E_{\vec v}^c(\bmultix)\rp] \leq \eps, \label{eq:firstclaim}
\end{align}
where \Cref{eq:indicator} used that $g(\bmultix)$ can be different than $1$ only under the complement of the event $\mathcal E_{\vec v}(\bmultix)$.

We can now show \Cref{eq:fool-mollify2}, which states that the two expectations (under $\normal(\vec 0, \vec I) $ and $\ngca$) of $\sign(p(\multix))$ are close to each other in absolute value. We will show both a lower bound and an upper bound on their difference, which together yield the bound in absolute value. We start with the lower bound. Using that $g(\bmultix) \leq 1$ and $\sign ( p ( \bmultix ) ) \in \{0,1\}$, we have the pointwise relationship $\sign \lp( p \lp( \bmultix \rp) \rp) \geq \sign \lp( p \lp( \bmultix \rp) \rp)g(\bmultix)$. 
In particular, this implies that:
\begin{align*}
    \E_{\multiy \sim \ngca}&\lp [\sign \lp( p \lp( \bmultiy \rp) \rp) \rp] \\
    &\geq \E_{\multiy \sim \ngca}\lp [  \lp( \sign \lp( p \lp( \bmultiy \rp) \rp)  \rp) g(\bmultix)  \rp]  \\
    &\geq \E_{\multix \sim \normal(\vec 0, \vec I) }\lp [  \sign \lp( p \lp( \bmultix \rp) \rp)  g(\bmultix)  \rp] - \delta \tag{by \Cref{eq:fool-mollify}}\\
    &\geq \E_{\multix \sim \normal(\vec 0, \vec I) }\lp [  \sign \lp( p \lp( \bmultix \rp) \rp)    \rp]  - \delta - \eps  \tag{by \Cref{eq:firstclaim}}.
\end{align*}

We can prove the other direction $\E_{}[\sign (p(\bmultiy))] \leq \E_{}[\sign (p(\bmultix))] +  \delta + \eps$ by repeating the same argument with $-p$ in place of $p$.
This concludes the proof of \Cref{lem:sandwich}.
\end{proof}
Given the above sandwiching lemma, it then suffices for us to bound from above the difference
$|\E_{\bmultiy) \sim \ngca}[h(\bmultiy)] - \E_{\bmultix \sim \normal(\vec 0, \vec I)}[h(\bmultix)]|$.
We will show this via the hybrid argument.
In particular, let $\bmultix$ be \iid samples from the standard Gaussian $\normal(\vec 0, \vec I)$ and $\bmultiy$ be \iid samples from the hidden direction distribution $\ngca$.
In the $i$-th replacement step of the hybrid argument, we compare the expected values of 
$h(\vec x^{(1)}, \cdots, \vec x^{(i)}, \vec y^{(i+1)}, \cdots, \vec y^{(n)})$ and
$h(\vec x^{(1)}, \cdots, \vec y^{(i)}, \vec y^{(i+1)}, \cdots, \vec y^{(n)})$.
We show that the difference between the expected values is on the order of $d^{ - \cg m / 2  }$.

\begin{restatable}[Replacement Step]{proposition}{MAINPROPOSITION}
\label{prop:replacement-step}
For any $c \in (0, 1/4)$, $d,m,k \in \Z_+$ such that 
 $m$ is even, and $d > \max( m^{C/c}, k^{C /c} )$ 
 for some sufficiently large constant $C$, 

if $p: \R^{n \times d} \mapsto \R$ is a degree-$k$ polynomial,  $\vec v \in \R^d$ is a unit vector satisfying \Cref{eq:derivative-decay} with $\eps = 0.05$, $A$ is a one-dimensional distribution that matches the first $m$ moments with $\normal(0, 1)$, and
$h:\R^{n \times d} \mapsto \R$ is the mollified PTF from \Cref{eq:mollified-ptf},
then the following holds: For every $i \in [n]$
\begin{align}
\label{eq:mollified-replacement-step}
\lp| \E  
\lp[  h( \vec x^{(1)},
\cdots , \vec x^{(i)}, \vec y^{(i+1)}, \cdots , \vec y^{(n)} \rp]
-  
\E  
\lp[  h( \vec x^{(1)},
\cdots , \vec y^{(i)}, \vec y^{(i+1)}, \cdots , \vec y^{(n)} \rp] \rp|
\leq   d^{ - \cg m /2 +  c m   } \, ,
\end{align}
where $\cg$ is the parameter used in \Cref{eq:mollified-ptf},
$\vec x^{(1)}, \cdots, \vec x^{(n)} \sim \ngca$, and $\vec y^{(1)}, \cdots, \vec y^{(n)} \sim \normal(\vec 0,\vec I)$.
\end{restatable}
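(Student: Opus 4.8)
The plan is to prove the single-step bound \eqref{eq:mollified-replacement-step} by writing each of the two expectations as an expectation over the ``side variables'' $\vec x^{(1:i-1)}$, $\vec y^{(i+1:n)}$ and $\bar{\vec z} \sim \normal(\vec 0, \vec I - \vec v \vec v^\top)$ of a one-dimensional quantity in the hidden-direction coordinate $\xi$, then Taylor-expanding the mollified PTF $h$ in $\xi$ around $\xi = 0$ and using the fact that $A$ matches the first $m$ moments of $\normal(0,1)$ to kill all terms of order $\le m$. Concretely, conditioning on $\bar{\vec z}$ and the side samples, both $\vec x^{(i)}$ and $\vec y^{(i)}$ decompose as $\bar{\vec z} + \xi \vec v$, with $\xi \sim \normal(0,1)$ in the Gaussian case and $\xi \sim A$ in the alternative case; so the left-hand side of \eqref{eq:mollified-replacement-step} equals
\[
\Bigl| \E_{\bar{\vec z}, \text{side}} \Bigl[ \E_{\xi \sim \normal(0,1)}\bigl[ h(\cdots \bar{\vec z} + \xi \vec v \cdots) \bigr] - \E_{\xi \sim A}\bigl[ h(\cdots \bar{\vec z} + \xi \vec v \cdots) \bigr] \Bigr] \Bigr|.
\]
First I would establish that, on the event that $h \ne 0$, the function $\xi \mapsto h(\cdots \bar{\vec z} + \xi \vec v \cdots)$ is $C^\infty$ and its $t$-th derivative $D_{i,\vec v}^t h$ has magnitude at most $d^{-\Theta(t)}$ — this is exactly the content (modulo bookkeeping) of the chain of lemmas alluded to in the technical overview (\Cref{lem:sum-of-product}, \Cref{lem:mollify-derivative-decay}): expanding $D_{i,\vec v}^t h$ via the product rule distributes derivatives among the $\rho$-factors and $\sgn(p)$, each factor contributes ratios of the form $\|p^{[s],\vec v}\|_F / |p|$, and wherever such a ratio would exceed the threshold $\approx d^{-c_g s /2}$ baked into \eqref{eq:mollifier-def}, the corresponding $\rho$-factor vanishes, forcing $h = 0$; hence on $\{h \ne 0\}$ every such ratio is $O(d^{-c_g s/2})$ and the $\|\rho^{(j)}\|_\infty \le O(j^j)$ bound from \eqref{eq:rho-property} controls the combinatorial blow-up, yielding $|D_{i,\vec v}^t h| \le d^{-c_g t / 2} \poly(k,t)^{t} \le d^{-(c_g/2 - c)t}$ using $d > k^{C/c}$.

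Second, with this derivative bound in hand, I would Taylor-expand $h$ in $\xi$ to order $m$ with Lagrange remainder: $h(\cdots\bar{\vec z}+\xi\vec v\cdots) = \sum_{t=0}^{m} \frac{1}{t!} D_{i,\vec v}^t h(\cdots\bar{\vec z}\cdots)\, \xi^t + \frac{1}{(m+1)!} D_{i,\vec v}^{m+1} h(\cdots\bar{\vec z}+\hat\xi\vec v\cdots)\, \xi^{m+1}$ for some $\hat\xi$ between $0$ and $\xi$. Since the first $m$ moments of $A$ and $\normal(0,1)$ agree, the difference of the two inner expectations over $\xi$ collapses to the difference of the remainder terms, so the left-hand side of \eqref{eq:mollified-replacement-step} is bounded by $\E[ |D_{i,\vec v}^{m+1} h| \cdot (\E_{\xi\sim A}|\xi|^{m+1} + \E_{\xi\sim\normal}|\xi|^{m+1}) ] / (m+1)!$. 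The Gaussian $(m+1)$-st absolute moment is $\le m^{O(m)}$, and combined with $|D_{i,\vec v}^{m+1}h| \le d^{-(c_g/2 - c)(m+1)}\le d^{-c_g m/2 + cm}$ (after absorbing $\poly$ factors into the $d^{cm}$ slack via $d > m^{C/c}$), this gives the claimed $d^{-c_g m /2 + cm}$ bound — provided we also handle the moments of $A$.

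The main obstacle is precisely the $(m+1)$-st moment of $A$: unlike the Gaussian, $A$ is an arbitrary moment-matching distribution and could have a heavy tail, so $\E_{\xi\sim A}|\xi|^{m+1}$ need not be polynomially bounded; moreover, the Lagrange remainder is evaluated at $\bar{\vec z} + \hat\xi \vec v$, and the derivative bound $|D_{i,\vec v}^{m+1}h| \le d^{-\Theta(m)}$ was derived under the Gaussian anti-concentration event and needs to hold uniformly along the segment, including for large $\hat\xi$. The fix — foreshadowed in the excerpt's footnote — is to first prove the replacement step for a truncated version $\bar A$ of $A$ supported on a bounded window $[d^{-c}, d^c]$ (so $|\xi|^{m+1} \le d^{c(m+1)}$ deterministically and the segment $[\bar{\vec z}, \bar{\vec z}+\hat\xi\vec v]$ stays in a controlled region), and then transfer the bound from $\bar A$ back to $A$ using only that $h \in [0,1]$ together with $d_{\mathrm{TV}}(A,\bar A)$ being tiny — this is the role of \Cref{lem:truncation}. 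Assembling: truncate, Taylor-expand, cancel low-order moments by the matching assumption, bound the remainder via the mollifier-enforced derivative decay, and untruncate; the slack $d^{cm}$ in the exponent is exactly what pays for the truncation error, the Gaussian/bounded moment factors, and the $\poly(k,m)^m$ overhead from the product rule.
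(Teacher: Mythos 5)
Your outline tracks the paper's proof almost step for step: condition on the side samples and the component of the replaced sample orthogonal to $\vec v$, truncate the hidden-direction coordinate and transfer back using $h \in [0,1]$ (\Cref{lem:truncation}), Taylor-expand $h$ along $\vec v$, use moment matching to handle the low-order terms, bound the remainder via the directional derivative decay of the mollified PTF (\Cref{lem:sum-of-product}, \Cref{lem:mollify-derivative-decay}), and absorb the $(k+m)^{O(m)}$ factors using $d > \max(k^{C/c}, m^{C/c})$. The one genuinely different ingredient is your justification of the derivative decay: you argue pointwise that either some $\rho$-factor's argument exceeds its cutoff, in which case every term of the product-rule expansion of $D_{i,\vec v}^t h$ contains a derivative of that same $\rho$-factor and hence vanishes, or else all ratios $\| p^{[s],\vec v}\|_F / |p|$ are $O(d^{-\cg s/2})$ at that very point. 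This is a legitimate and arguably leaner route than the paper's, which instead introduces well-behaved point sets (\Cref{def:well-behave}), proves that the ratio bounds at a single $\xi^*$ propagate to the whole truncated interval (\Cref{lem:nearby-derivative-decay}), and splits into the case where $h$ vanishes identically on the interval versus the well-behaved case. If you take your route, state the vanishing observation explicitly: a bound only ``on $\{h \neq 0\}$'', as you phrase it, is not enough, since the Lagrange remainder is evaluated at $\hat\xi$ where $h$ may be zero; what you need (and what your mechanism actually gives) is that at every point either $D_{i,\vec v}^t h = 0$ or the ratios are controlled.

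Two quantitative points need repair, both of which the paper resolves through \Cref{lem:truncation}. First, after truncation the first $m$ moments of $\bar A$ and the truncated Gaussian no longer cancel exactly, so ``cancel low-order moments by the matching assumption'' must be replaced by the approximate-matching bound $|\E_{\bar A}[\xi^t] - \E_{\bar\normal}[\xi^t]| \leq m^{O(t)} d^{-\ct(m-t)}$, multiplied by the derivative bound $(k+t)^{O(t)} d^{-\cg t/2}$ at the expansion center, and summed over $t \leq m-1$. Second, and relatedly, the truncation radius you propose (roughly $d^{c}$ for a small constant $c$, echoing the paper's informal footnote) is too small: the mass of $A$ outside that window is only bounded by $m^{O(m)} d^{-cm}$ (and can genuinely be that large, e.g.\ for distributions with an atom near $d^{c}$ as in \Cref{prop:A-construction}), which dwarfs the target $d^{-\cg m/2 + cm}$ since $\cg \approx 1/2$ in the final application; so the $d^{cm}$ slack does not pay for it. One must instead truncate at radius $d^{\ct}$ with $\ct = \cg/2 - c/2$, which simultaneously makes the tail mass, the approximate-moment error, and the remainder term (whose $|\xi|^{m+1}$ factor costs only an extra $d^{\ct}$ over the Gaussian-size $m$-th moment of $\bar A$) all at most $d^{-\cg m/2 + cm}$. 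With these two adjustments your argument is correct and coincides with the paper's proof in structure.
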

Our main result follows immediately from 
\Cref{lem:sandwich} and \Cref{prop:replacement-step}.
\begin{proof}[Proof of \Cref{thm:main}]
Let $c^*$ be the constant parameter in the statement of \Cref{thm:main}.
Fix $\eps = 0.05$, $c = c^*/2 $, and $\cg =2 \; \lp( 1/4 - c^*/ 2\rp) $.
By \Cref{lem:derivative-decay}, we have that a randomly chosen $\vec v$ satisfies \Cref{eq:derivative-decay} with high probability $0.99$. We will condition on such a $\vec v$ in the rest of the proof.
First, combining \Cref{prop:replacement-step} (the proposition is applicable because of our assumptions $
d >  \max\lp(k^{C^* / c^*}, m^{C^*/c^*}\rp)
$ in the statement of \Cref{thm:main} and $c = c^* / 2$) for each position $i \in [n]$ with the triangle inequality yields that
\begin{align*}
\lp| 
\E_{ \multiy \sim \ngcav } 
\lp[    h(\bmultiy)  \rp] 
- \E_{ \multix \sim \normal(\vec 0, \vec I) }
\lp[  h(\bmultix)   \rp]\rp| \leq n d^{ - \cg m /2 + c m   }.
\end{align*}
Since we have $\cg = 2(1/4 - c^*/2)$ and $c = c^*/2$, 
the right hand side can be further bounded from above by $n d^{ -(1/4 - c^*) m}$, which is at most $0.05$ by our assumption that $n \ll d^{ (1/4 - c^*) m}$.
We can then apply the sandwiching lemma (\Cref{lem:sandwich}) with $\delta = 0.05$.\footnote{
The lemma is applicable since the assumption $k < d^{ c^*/C^* }$ from \Cref{thm:main} implies that $ (k/0.05) < d^{  (1/4 - \cg/2) / 4 } $ as long as 
$\cg = 2(1/4 - c^*/2)$ and $C^*$ is sufficiently large.
}
This yields that
\begin{align}\label{eq:almost}
 \lp|  \E_{ \multiy \sim \ngcav } \lp[ \sgn\lp( p ( \bmultiy ) \rp)\rp] 
-\hspace{-10pt} \E_{ \multix \sim \normal }
\lp[  \sgn\lp( p ( \bmultix ) \rp)   \rp]\rp|
\leq \delta + \eps =  0.1 .   
\end{align}

So far we have shown that \Cref{eq:almost} holds with probability $0.99$. From this, we can complete the proof as follows. First, denote by $\cE$ the event in \Cref{eq:almost}. 
We have the following by Jensen's inequality:
\begin{align}
    &\lp| \E_{\substack{\vec v \sim  \usphere\\ \multix \sim \ngcav}}
\lp[  \sgn(p(\vec x^{(1)}, \cdots, \vec x^{(n)})) \rp]
-  
\E_{ \multix \sim \normal(\vec 0, \vec I)}    
\lp[  \sgn(p(\vec x^{(1)}, \cdots, \vec x^{(n)})) \rp] \rp| \\
&\leq \E_{\vec v \sim  \usphere} \left[ \lp|  \E_{ \multiy \sim \ngcav } \lp[ \sgn\lp( p ( \bmultiy ) \rp)\rp] 
-\hspace{-10pt} \E_{ \multix \sim \normal(\vec 0, \vec I)  }
\lp[  \sgn\lp( p ( \bmultix ) \rp)   \rp]\rp| \right].\label{eq:almost2}
\end{align}
Now let the random variable $$Z = \lp|  \E_{ \multiy \sim \ngcav } \lp[ \sgn\lp( p ( \bmultiy ) \rp)\rp] 
-\hspace{-10pt} \E_{ \multix \sim \normal(\vec 0, \vec I)  }
\lp[  \sgn\lp( p ( \bmultix ) \rp)   \rp]\rp|$$ for brevity. We can further bound the RHS in  \Cref{eq:almost2} as follows:
\begin{align*}
    \E_{\vec v \sim  \usphere}[Z]
    &=\E_{\vec v \sim  \usphere}[Z \1(\cE)] + \E_{\vec v \sim  \usphere}[Z \1(\cE^c)] \\
    &\leq 0.1 + \E_{\vec v \sim  \usphere}[ \1(\cE^c)]\\
    &\leq 0.1 + 0.01 = 1.01 \;,
\end{align*}
where we used that under the event $\cE$ we have $Z \leq 0.1$, we also used that $Z$ is always at most $1$ since it is a difference of PTFs that take values in $\{0,1\}$ and we used that $\cE$ happens with probability at least $0.99$ over the choice of $\vec v$.

This concludes the proof of \Cref{thm:main}.
\end{proof}

\subsection{Proof of \Cref{prop:replacement-step}}

In the rest of this section, we focus on establishing a single replacement step by proving \Cref{prop:replacement-step}.
This subsection is organized as follows. In \Cref{sec:condition-and-truncation}, we introduce a truncation procedure that 
reduces the problem to the case where the non-Gaussian component $A$ has bounded support, which will be helpful later on, and argue that the truncation  will not significantly change the expectation of the mollified PTF. In \Cref{sec:control-derivatives}, we show that if the derivative of the polynomial is small at any point in the truncated interval, it is small throughout the entire interval. In \Cref{sec:taylor}, we perform a Taylor expansion of the mollified PTF, derive expressions for the derivatives in the higher-order terms, and upper bound them using the results from \Cref{sec:control-derivatives}. Finally, \Cref{sec:wrapup} concludes the proof of \Cref{prop:replacement-step} by carefully combining the results from the previous sections.

\subsubsection{Domain Truncation}
\label{sec:condition-and-truncation}
Let $A$ be a distribution on $\R$ which matches the first $m$ moments with $\normal(0, 1)$.
Note that the hidden direction distribution $\ngca$ from \Cref{def:hidden_distr} can be viewed of as the sum $\bar {\vec y} + \xi \vec v$, where
$\bar {\vec y} \sim \lnormal$ and $\xi \sim A$.
Hence, \Cref{eq:mollified-replacement-step} can be alternatively written as
\begin{align}
\label{eq:orthogonal-deocmpose}
\lp| \E  
\lp[  h( \vec x^{(1)},
\cdots , \vec x^{(i)}, \bar {\vec y} + \xi \vec v, \cdots  , \vec y^{(n)}) \rp]
-  
\E  
\lp[  h( \vec x^{(1)},
\cdots , \vec x^{(i)}, \bar {\vec y} + z \vec v, \cdots , \vec y^{(n)}) \rp] \rp|
\leq d^{ - \cg m /2 + cm } \, ,
\end{align}
where $\xi \sim A,z \sim \normal(0, 1),\bar {\vec y} \sim \lnormal,\vec x^{(1)}, \ldots, \vec x^{(i)} {\sim} \ngca$, and $\vec y^{(i+1)}, \cdots, \vec y^{(n)} {\sim} \normal(\vec 0, \vec I)$.

To show \Cref{eq:orthogonal-deocmpose}, we will condition on fixed values for all  random variables except from $\xi$ and $z$
and show that
\begin{align}
\label{eq:Y-fool-h}
\lp| 
\E_{\xi \sim A}  
\lp[  h( \vec x^{(1)},
\cdots , \vec x^{(i)}, \bar {\vec y} + \xi \vec v, \cdots , \vec y^{(n)}) \rp]
-  
\E_{z \sim \normal(0, 1)}  
\lp[  h( \vec x^{(1)},
\cdots , \vec x^{(i)}, \bar {\vec y} + z \vec v, \cdots , \vec y^{(n)}) \rp] \rp|
\leq d^{ - \cg m /2 + cm }.
\end{align}
\Cref{eq:orthogonal-deocmpose} will then follow from \Cref{eq:Y-fool-h} via an averaging argument.
It turns out that the arguments of the rest of the subsection (in particular the part that analyzes the Taylor expansion in \Cref{sec:taylor}) will be easier if $\xi$ and $z$ are bounded random variables. 
Fortunately, the two are both concentrated around $0$, and we can therefore truncate them to the interval $[- d^{- \ct }, d^{\ct}  ]$ for some constant $\ct$.
In particular, we argue that after the truncation we still have an approximate moment matching condition, and the mass of the non-Gaussian component $A$ outside the truncated interval cannot be too large.

\begin{lemma}[Domain Truncation]\label{lem:truncation}
For any $d,m \in \mathbbm Z_+$, where $m$ is even and $d \gg m$, and every $\ct > 0$, 
if $A$ is a distribution on $\R$ which matches the first $m$ moments with $\normal(0, 1)$, 
then for all $t \in [ m-1 ]$ it holds that
\begin{align}\label{eq:truncation1}
    \lp| \E_{x \sim  A}[ x^t \mid x \in [-d^{\ct}, d^{\ct}]]
    - 
    \E_{y \sim \normal(0, 1)}[ y^t 
    \mid y \in [-d^{\ct}, d^{\ct}]
    ] 
    \rp| \leq  m^{O(t)} d^{- \ct (m-t) }.
\end{align}
Moreover, it holds
\begin{align}
    \Pr_{x \sim A}[x \not\in[-d^{\ct}, d^{\ct}] ] \leq 
    m^m 
    d^{- \ct m}.\label{eq:truncation2}
\end{align}
\end{lemma}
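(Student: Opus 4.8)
The plan is to derive both parts from moment matching together with Markov's inequality applied to $x^m$, which is legitimate precisely because $m$ is even, so $x^m \geq 0$. For \Cref{eq:truncation2}: since $A$ matches the first $m$ moments of $\normal(0,1)$, we have $\E_{x\sim A}[x^m] = \E_{y\sim\normal(0,1)}[y^m] = (m-1)!! \leq m^m$, and Markov's inequality on the nonnegative variable $x^m$ gives $\Pr_{x\sim A}[|x|>d^{\ct}] = \Pr_{x\sim A}[x^m > d^{\ct m}] \leq (m-1)!!\,d^{-\ct m} \leq m^m d^{-\ct m}$, which is exactly the claim.

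\textbf{Reducing \Cref{eq:truncation1} to tail estimates.} Write $I = [-d^{\ct},d^{\ct}]$, $p_A = \Pr_{x\sim A}[x\in I]$, $p_N = \Pr_{y\sim\normal(0,1)}[y\in I]$, and abbreviate $\E_A = \E_{x\sim A}$, $\E_N = \E_{y\sim\normal(0,1)}$; fix $t\in[m-1]$ (and note the statement is only meaningful when $p_A>0$). Since $t < m$, moment matching gives $\E_A[x^t\1(x\in I)] = \E_A[x^t] - \E_A[x^t\1(x\notin I)] = \E_N[y^t] - \E_A[x^t\1(x\notin I)]$, and the same identity holds for the Gaussian, so $\E_A[x^t\1(x\in I)] - \E_N[y^t\1(y\in I)] = \E_N[y^t\1(y\notin I)] - \E_A[x^t\1(x\notin I)]$, reducing everything to two tail integrals. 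To bound them, observe that on $|x|>d^{\ct}$ and for $t<m$ we have $|x|^t \leq d^{-\ct(m-t)}x^m$; integrating and invoking the matched $m$-th moment yields $|\E_A[x^t\1(x\notin I)]|,\,|\E_N[y^t\1(y\notin I)]| \leq (m-1)!!\,d^{-\ct(m-t)} \leq m^m d^{-\ct(m-t)}$. Taking $t=0$ gives $|p_A-1|,\,|p_N-1| \leq m^m d^{-\ct m}$, and overall $\big|\E_A[x^t\1(x\in I)] - \E_N[y^t\1(y\in I)]\big| \leq 2m^m d^{-\ct(m-t)}$. The key point throughout is to never estimate odd absolute moments of $A$ (which moment matching does not control), but to bound every tail contribution by $d^{-\ct(m-t)}x^m$ and use only the matched even moment $\E_A[x^m]=(m-1)!!$.

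\textbf{From unnormalized to conditional moments, and the main obstacle.} It remains to divide by $p_A,p_N$. If $d^{\ct m} < 2m^m$ then, since $|x|,|y|\leq d^{\ct}$ on $I$, both conditional moments have absolute value at most $d^{\ct t}$, so their difference is at most $2d^{\ct t} \leq 4m^m d^{-\ct(m-t)}$ and the claim holds trivially; otherwise $p_A,p_N\geq 1/2$, and we use
\[
\left|\frac{\E_A[x^t\1(x\in I)]}{p_A} - \frac{\E_N[y^t\1(y\in I)]}{p_N}\right| \leq \frac{\big|\E_A[x^t\1(x\in I)] - \E_N[y^t\1(y\in I)]\big|}{p_A} + \big|\E_N[y^t\1(y\in I)]\big|\cdot\frac{|p_A-p_N|}{p_A p_N},
\]
where the first term is $O(m^m)\,d^{-\ct(m-t)}$ by the previous paragraph, and in the second $|\E_N[y^t\1(y\in I)]| \leq \E_N[|y|^t] + m^m d^{-\ct(m-t)} = O(m^m)$ while $|p_A-p_N| \leq |p_A-1|+|p_N-1| \leq 2m^m d^{-\ct m}$, giving $O(m^{2m})\,d^{-\ct m} \leq O(m^{2m})\,d^{-\ct(m-t)}$. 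Combining yields a bound of the form $\poly(m)^m\,d^{-\ct(m-t)}$, which is absorbed by the claimed $m^{O(t)}d^{-\ct(m-t)}$ (and, as in the application, is negligible against the $d^{\Omega(m)}$ slack available there); the hypothesis $d\gg m$ only serves to make these estimates nontrivial. The only genuinely delicate step is this normalization together with the degenerate-regime split $d^{\ct m}<2m^m$; the tail estimates and the de-conditioning via moment matching are routine.
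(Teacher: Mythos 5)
Your proposal is correct and uses the same basic ingredients as the paper's proof (truncation, Markov via the matched $m$-th moment, and de-conditioning), but arranges them differently. The paper bounds, for each distribution separately, the gap $\lp|\E[x^t \mid x\in I]-\E[x^t]\rp|$ by splitting off the tail, controls the tail term with H\"older's inequality $\lp|\E_{x\sim A}[x^t\1(x\notin I)]\rp|\le(\E_{x\sim A}[x^m])^{t/m}\Pr_{x\sim A}[x\notin I]^{1-t/m}$, and then concludes via moment matching of the full moments and the triangle inequality; you instead use moment matching up front to turn the difference of unnormalized truncated moments into a difference of two tail integrals, bound each tail by the pointwise inequality $|x|^t\le d^{-\ct(m-t)}x^m$ on $\{|x|>d^{\ct}\}$ (so only the matched even moment $\E_{x\sim A}[x^m]$ is ever used), and handle the passage to conditional expectations with an explicit difference-of-ratios identity plus a case split on whether $d^{\ct m}\ge 2m^m$. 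Your arrangement has a small advantage: it never touches absolute moments $\E_{x\sim A}[|x|^t]$ for odd $t$, whereas the paper's bound on its first term equates $\E_{x\sim A}[|x|^t]$ with $\E_{x\sim\normal(0,1)}[|x|^t]$ ``by moment matching,'' which raw moment matching does not literally give for odd $t$ (it is easily repaired via $\E_{x\sim A}[|x|^t]\le(\E_{x\sim A}[x^m])^{t/m}$, but your route sidesteps the issue entirely, as you note). One caveat, which you flag yourself and which applies equally to the paper's own proof: both arguments really produce an $m^{O(m)}$ prefactor (your normalization term is $O(m^{2m})d^{-\ct m}$; the paper's first term is $m^{O(m)}d^{-\ct m}$, and even its H\"older term works out to $m^{m/2}d^{-\ct(m-t)}$), which matches the stated $m^{O(t)}$ only after absorbing into the $d$-versus-$m$ slack available in the downstream application (which only needs $m^{O(m)}\le d^{cm/2}$); this looseness is shared with the paper, not introduced by you, so it is not a genuine gap.
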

\begin{proof}[Proof of \Cref{lem:truncation}] 
We start with \Cref{eq:truncation1}.
Let us first denote $I = [-d^{\ct}, d^{\ct}]$ and $p := \Pr_{x \sim A}[x \not \in I]$. 
We can write $$\E_{x \sim A}[x^t | x \in I] = \frac{\E_{x \sim A}[x^t \1(x \in I)]}{1-p} =  \frac{\E_{x \sim A}[x^t]}{1-p} -  \frac{\E_{x \sim A}[x^t \1(x \not \in I)]}{1-p}.$$ Rearranging, we have that
\begin{align} \label{eq:moment_error}
\lp| \E_{x \sim A}[x^t | x \in I]  - \E_{x \sim A}[x^t]   \rp| \leq p \lp| \E_{x \sim A}[x^t | x \in I ]  \rp| + \lp| \E_{x \sim A}[x^t  \1(x \not \in I)]  \rp|.
\end{align}
We can upper bound $p$ using the higher-order Chebyshev's inequality and the moment matching property of $A$:
\begin{align}
    p := \Pr_{x \sim A}[ |x| > d^{\ct}] \leq \frac{\E_{x \sim A}[|x|^m]}{d^{\ct m}} 
    = \frac{\E_{x \sim \normal(0,1)}[|x|^m]}{d^{\ct m}}
    \leq \frac{m^{m/2}}{d^{\ct m}} \label{eq:temp43}
\end{align}
 Using this bound on $p$, the first term in the RHS of  \Cref{eq:moment_error} is 
\begin{align}
p \lp| \E_{x \sim A}[x^t | x \in I ]  \rp|
    &\leq p \cdot  \E_{x \sim A}[|x|^t | x \in I ] \leq \frac{p}{1-p}\E_{x \sim A}[|x|^t] \nonumber \\
    &= \frac{p}{1-p}\E_{x \sim \normal(0,1)}[|x|^t]
    \leq m^{O(m)} d^{- \ct m},
    \label{eq:rhs-term1}
\end{align}
where we used the definition of conditional expectation, the moment matching property of $A$ and that $d \gg m$.
The second term in the RHS of \Cref{eq:moment_error} can be upper bounded using Holder's inequality as follows:
\begin{align}
    \lp| \E_{x \sim A}[x^t  \1(x \not \in I)]  \rp| 
    &\leq 
    \lp( \E_{ x \sim A } \lp[  x^m \rp] \rp)^{t/m}
    p^{1 - t/m} \nonumber \\
    &=  \lp( \E_{ x \sim \normal(0, 1) } \lp[  x^m \rp] \rp)^{t/m} p^{1 - t/m} \nonumber \\
    &\leq  m^{O(t)} d^{- \ct \; (m-t) }.
    \label{eq:rhs-term2}
\end{align}
Combining \Cref{eq:rhs-term1,eq:rhs-term2} then gives that $\lp| \E_{x \sim A}[x^t | x \in I]  - \E_{x \sim A}[x^t]   \rp| \leq 
m^{O(t)} d^{- \ct (m-t)}
$. 
Repeating the same steps (using $\normal(0,1)$ in place of $A$), it can be shown that 
$$
\lp| \E_{y \sim \normal(0,1)}[y^t | y \in I]  - \E_{y \sim \normal(0,1)}[y^t]   \rp| \leq m^{O(t)} d^{- \ct (m-t)}
$$ as well. 
The rest of the proof of \Cref{eq:truncation1} then follows from 
the triangle inequality and the fact that $\E_{y \sim \normal(0,1)}[y^t] = \E_{x \sim A}[x^t]$.
Finally, we note that \Cref{eq:truncation2} has already been shown in \Cref{eq:temp43}.
This completes the proof of \Cref{lem:truncation}.
\end{proof}

In the rest of the subsection, we focus in bounding the difference in expected values under the truncated distributions. 
In particular, our goal is to show that
\begin{align}
\label{eq:truncate-Y-fool-h}
\lp| 
\E_{\xi \sim \bar A}  
\lp[  h( \vec x^{(1)},
\cdots, \vec x^{(i)}, \bar {\vec y} + \xi \vec v, \cdots, \vec y^{(n)}) 
\rp]
-  
\E_{z \sim \bar \normal(0, 1)}  
\lp[  h( \vec x^{(1)},
\cdots, \vec x^{(i)}, \bar {\vec y} + z \vec v, \cdots, \vec y^{(n)}) 
\rp] \rp|
\leq d^{ - \cg m + c/2} \, ,
\end{align}
where $\bar A$ and $\bar \normal(0, 1)$ correspond to the distributions $A$ and $\normal(0, 1)$ conditioned on the domain $[ -d^{\ct}, d^{\ct} ]$ respectively.

\subsubsection{Controlling Derivatives of Nearby Points}\label{sec:control-derivatives}
Recall that the mollifier $g$ (\Cref{eq:mollifier-def}) gives zero value to points where the derivatives of the polynomial $p$ fail to decay at the desired rate.
Fix some points $\vec x^{(1)}, \cdots, \vec x^{(i-1)}, \bar {\vec y}, \vec y^{(i+1)}, \cdots,  \vec y^{(n)} \in \R^d$.
If it happens to be the case that for all $\xi \in 
[ -d^{ \ct }, d^{\ct} ]$ there exists some $t \in [k]$ such that
$$
\lp \| p^{[t]}\lp( \interpolate{\xi} \rp) \rp\|_F^2
\geq 
3 d^{- \cg t } p^2 \lp( \interpolate{\xi} \rp), 
$$
it follows by the definition of the mollifier that the mollified PTF will be zero 
over the entire truncated domain, i.e., for all $\xi \in [-d^{\ct}, d^{\ct}]$ we will have
$$
\E_{\xi \sim \bar A}  
\lp[  h( \vec x^{(1)},
\cdots, \vec x^{(i)}, \bar {\vec y} + \xi \vec v, \cdots \vec y^{(n)}) 
\rp]
=   
\E_{z \sim \bar \normal(0, 1)}  
\lp[  h( \vec x^{(1)},
\cdots, \vec x^{(i)}, \bar {\vec y} + z \vec v, \cdots \vec y^{(n)}) 
\rp]= 0.
$$
Consequently,  \Cref{eq:truncate-Y-fool-h} will hold trivially.

Hence, it suffices to consider the complementary case: there exists some $\xi^* \in 
[ -d^{ \ct }, d^{\ct} ]$ such that the desired derivative decay holds for all $t \in [k]$.
We formalize this complementary condition in the definition of a \emph{well-behaved} point set below.
\begin{definition}[Well-Behaved Point Set]
\label{def:well-behave}
Let $p:\R^{d\times n}\to \R$ be a polynomial on $n$ points of $\R^d$, $I \subseteq \R$ be an interval, and $\cg > 0$ be a parameter. Let $\vec x^{(1)}, \cdots, \vec x^{(i-1)}, \bar {\vec y}, \vec y^{(i+1)}, \cdots,  \vec y^{(n)}  \in \R^d$.
We say that these points form a well-behaved point set at position $i$  (with respect to $p,I$ and $\cg$) if there exists some $\xi^* \in 
I$ such that 
\begin{align*}
\forall t \in [k] \;\; 
\lp \| p^{[t]}\lp( \interpolate{\xi^*} \rp) \rp\|_F^2
\leq 
3 d^{- \cg t } \lp| p \lp( \interpolate{\xi^*} \rp) \rp|^2.
\end{align*}
\end{definition}
Throughout this section, we will always use $I = [-d^{\ct}, d^{\ct}]$ for the interval, and $\cg$ will be the same parameter as in the definition of the mollified PTF (\Cref{eq:mollifier-def}).
A key technical lemma we will prove is that if we condition on a well-behaved point set, the derivatives must also be ``approximately'' well-behaved for all $\xi$ in the truncated domain.
\begin{lemma}[Derivative Decay of Nearby Points]
\label{lem:nearby-derivative-decay}
Let $p:\R^{d\times n}\to \R$ be a polynomial on $n$ points of $\R^d$, and $\cg,\ct >0$ be parameters satisfying $\cg/2 - \ct \geq \Omega(1)$.
Let $\vec x^{(1)}, \cdots, \vec x^{(i-1)}, \bar {\vec y}, \vec y^{(i+1)}, \cdots  \vec y^{(n)}  \in \R^d$ be a well-behaved point set at position $i$ with respect to $p$,  interval $I = [-d^{ \ct }, d^{ \ct }]$, and $c_g$.
Then it holds that
\begin{align}
\label{eq:less-well-behave}
\forall t \in [k] \;\; \lp \| p^{[t]} \lp( \interpolate{\xi} \rp) \rp \|_F^2
\leq 
O(1) d^{ - \cg t }  p^2 \lp(  \interpolate{\xi} \rp)
\end{align}
for all $\xi \in I$.
\end{lemma}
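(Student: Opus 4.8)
The plan is to exploit that, once we freeze every coordinate except the scalar $\xi$ controlling the $\vec v$-component of the $i$-th point, both $p$ and each of its directional-derivative tensors become \emph{univariate} polynomials in $\xi$, and $\tfrac{d}{d\xi}$ acts on them exactly as the operator $D_{i,\vec v}$. Concretely, for a fixed index tuple $(i_1,\dots,i_t)\in[n]^t$, using the identity $D_{i,\vec v}\,p^{[s],\vec v}(\multix)=p^{[s+1],\vec v}(\multix)\,\vec e(i)$ recalled in \Cref{sec:derivative}, the map $\xi\mapsto p^{[t],\vec v}_{i_1,\dots,i_t}(\interpolate{\xi})$ is a polynomial of degree at most $k-t$ whose $s$-th derivative is $\xi\mapsto p^{[t+s],\vec v}_{i,\dots,i,i_1,\dots,i_t}(\interpolate{\xi})$, with $s$ leading copies of the index $i$; moreover $p^{[\ell],\vec v}\equiv 0$ once $\ell>k$, since a composition of more than $k$ first-order operators annihilates a degree-$k$ polynomial. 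First, then, I would invoke Taylor's theorem — exact, because the function is a polynomial — at the distinguished point $\xi^\ast\in I$ supplied by the well-behaved hypothesis (\Cref{def:well-behave}), obtaining
\[
p^{[t],\vec v}_{i_1,\dots,i_t}(\interpolate{\xi})
=\sum_{s=0}^{k-t}\frac{(\xi-\xi^\ast)^s}{s!}\,p^{[t+s],\vec v}_{i,\dots,i,i_1,\dots,i_t}(\interpolate{\xi^\ast}).
\]

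Next I would read the right-hand side as a linear combination $\sum_{s}\frac{(\xi-\xi^\ast)^s}{s!}\,\vec T_s$ of tensors $\vec T_s\in(\R^n)^{\otimes t}$, each $\vec T_s$ being a slice of $p^{[t+s],\vec v}(\interpolate{\xi^\ast})$ and hence satisfying $\|\vec T_s\|_F\le\|p^{[t+s],\vec v}(\interpolate{\xi^\ast})\|_F$. Since $1\le t+s\le k$ throughout the sum, the well-behaved condition bounds this by $\sqrt3\,d^{-\cg(t+s)/2}\,|p(\interpolate{\xi^\ast})|$; plugging in $|\xi-\xi^\ast|\le 2d^{\ct}$ (both points lie in $I=[-d^{\ct},d^{\ct}]$) and applying the triangle inequality for $\|\cdot\|_F$ gives
\[
\big\|p^{[t],\vec v}(\interpolate{\xi})\big\|_F
\le \sqrt3\,\big|p(\interpolate{\xi^\ast})\big|\,d^{-\cg t/2}\sum_{s\ge 0}\frac{\big(2d^{\,\ct-\cg/2}\big)^s}{s!}
=\sqrt3\,\big|p(\interpolate{\xi^\ast})\big|\,d^{-\cg t/2}\exp\!\big(2d^{\,\ct-\cg/2}\big).
\]
Because $\cg/2-\ct\ge\Omega(1)$, the quantity $2d^{\,\ct-\cg/2}$ is at most $1$ once $d$ is large, so the exponential contributes only an $O(1)$ factor and $\|p^{[t],\vec v}(\interpolate{\xi})\|_F^2\le O(1)\,d^{-\cg t}\,p(\interpolate{\xi^\ast})^2$.

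Finally, to replace $p(\interpolate{\xi^\ast})^2$ by $p(\interpolate{\xi})^2$, I would run the identical Taylor-plus-triangle-inequality argument on the scalar $q(\xi):=p(\interpolate{\xi})$, whose $s$-th derivative at $\xi^\ast$ is $p^{[s],\vec v}_{i,\dots,i}(\interpolate{\xi^\ast})$ and is again bounded by $\sqrt3\,d^{-\cg s/2}|q(\xi^\ast)|$; this yields $|q(\xi)-q(\xi^\ast)|\le\sqrt3\,|q(\xi^\ast)|\big(\exp(2d^{\,\ct-\cg/2})-1\big)\le\tfrac12|q(\xi^\ast)|$ for $d$ large, hence $p(\interpolate{\xi})^2\ge\tfrac14\,p(\interpolate{\xi^\ast})^2$. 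Combining the last two displays gives $\|p^{[t],\vec v}(\interpolate{\xi})\|_F^2\le O(1)\,d^{-\cg t}\,p(\interpolate{\xi})^2$ for every $t\in[k]$ and every $\xi\in I$, which is \Cref{eq:less-well-behave}; the degenerate subcase $p(\interpolate{\xi^\ast})=0$ forces every tensor above to vanish identically along the line, so the bound holds trivially there.

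The only part that requires genuine care is the index bookkeeping in the first step — correctly identifying the $\xi$-derivatives of tensor entries with slices of higher-order directional-derivative tensors, and noting that these vanish past order $k$, so that all the Taylor expansions involved are finite and exact. Everything after that is routine: each additional order of differentiation along $\vec v$ costs a factor $d^{-\cg/2}$ while moving from $\xi^\ast$ to $\xi$ inside $I$ gains only a factor $d^{\ct}$ per order, so the hypothesis $\cg/2-\ct\ge\Omega(1)$ makes the net per-order factor $d^{-\Omega(1)}$ and the Taylor series sums to an $O(1)$ constant.
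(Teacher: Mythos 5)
Your proof is correct, and it takes a genuinely cleaner route than the paper's, while resting on the same core idea. The paper proves an intermediate claim by Taylor-expanding the \emph{scalar} polynomial $\xi \mapsto \|p^{[k']}(\interpolate{\xi})\|_F^2$ around $\xi^*$; bounding the $\xi$-derivatives of that quadratic form then requires the product rule (yielding a sum of up to $2^t$ inner products of the form $\langle p^{[\beta]}\vec e(i)^{\otimes\beta'}, p^{[\gamma]}\vec e(i)^{\otimes\gamma'}\rangle$) followed by Cauchy--Schwarz and the well-behaved bound, and finally combines the $k'=0$ case (to transfer $p^2(\xi^*)\to p^2(\xi)$) with the $k'\in[k]$ cases. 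You instead Taylor-expand the \emph{tensor} $\xi \mapsto p^{[t],\vec v}(\interpolate{\xi})$ entrywise, recognize each Taylor coefficient tensor $\vec T_s$ as a slice of $p^{[t+s],\vec v}(\interpolate{\xi^*})$ (so $\|\vec T_s\|_F\le\|p^{[t+s],\vec v}(\interpolate{\xi^*})\|_F$), and then close the estimate by a single application of the triangle inequality for $\|\cdot\|_F$ followed by summing the geometric-in-$d^{\,\ct-\cg/2}$ series. This bypasses the product-rule combinatorics and Cauchy--Schwarz entirely, and it directly controls $\|p^{[t],\vec v}(\interpolate{\xi})\|_F$ rather than the difference of squared norms. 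Both approaches hinge on the same geometry (each extra order of $\xi$-differentiation costs $d^{-\cg/2}$ but $|\xi-\xi^*|$ gains only $d^{\ct}$, with $\cg/2-\ct\ge\Omega(1)$ making the net factor small), and both implicitly require $d$ large enough that $d^{\ct-\cg/2}$ is a small constant; your argument is the more elementary of the two. One minor point worth being aware of: your handling of the degenerate case $p(\interpolate{\xi^*})=0$ relies on the well-behaved condition holding for $t+s\in[k]$, and since the sum runs only up to $s=k-t$ this indexing is in range; that observation is needed to conclude the tensors vanish identically on the line, and you flag it correctly.
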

\begin{proof}
By \Cref{def:well-behave}, there exists some point $\xi^*$ such that
\begin{align}
\label{eq:well-behave-restate}
\forall k' \in [k]: 
\lp \| p^{[k']}\lp( \interpolate{\xi^*} \rp) \rp \|_F^2
\leq 
3 d^{- \cg t }  p^2\lp( \interpolate{\xi^*} \rp).
\end{align}
Fix an arbitrary $\xi \in [-d^{\ct}, d^{\ct}]$.
We claim that the following holds.

\begin{claim}\label{cl:helper}
    Consider the setting and notation of \Cref{lem:nearby-derivative-decay}. For every $k' \in \{0, 1,\ldots ,k \}$, the following holds:
    \begin{align}
&\lp |
\lp\| p^{[k']} \lp( \interpolate{\xi} \rp)    \rp\|_F^2 
- \lp \| p^{[k']}\lp(  \interpolate{\xi^*} \rp) \rp\|_F^2
\rp| \nonumber \\
&\leq 
O \lp( d^{ - \cg k' + (\cg/2 - \ct)   } \rp)
\;   p^2\lp( \interpolate{\xi^*} \rp) .
\label{eq:derivative-rel-induction}
\end{align}
\end{claim}

We first show how to obtain \Cref{eq:less-well-behave} using \Cref{cl:helper}.
Applying \Cref{eq:derivative-rel-induction} with $k' = 0$ yields
\begin{align*}
&\lp |
 p^2\lp( \interpolate{\xi} \rp) 
-   p^2\lp(  \interpolate{\xi^*} \rp) \rp| \nonumber \\
&\leq 
O \lp( d^{ - (\cg/2 - \ct) } \rp)
\;  p^2\lp( \interpolate{\xi^*} \rp) .    
\end{align*}
Since we assume that $(\cg/2 - \ct) \geq \Omega(1)$,
it then follows that
\begin{align}
\label{eq:same-function-value}
    p^2\lp( \interpolate{\xi^*} \rp) 
    \leq
    (1 + o(1)) \; p^2\lp( \interpolate{\xi} \rp).
\end{align}
Applying \Cref{eq:derivative-rel-induction} for  $k' \in [k]$ gives
\begin{align*}
&\lp\| p^{[k']} \lp( \interpolate{\xi} \rp)    \rp\|_F^2 \\
&\leq 
\lp\| p^{[k']} \lp( \interpolate{\xi^*} \rp)    \rp\|_F^2 +
O \lp(  d^{-\cg (k'+0.01)} \rp)
p^2 \lp(  \interpolate{\xi^*} \rp) \\
&\leq \lp( d^{-\cg k'}
 + O \lp(  d^{-\cg k' + (\cg/2 - \ct)} \rp) \rp)
p^2 \lp(  \interpolate{\xi^*} \rp)  \tag{using \Cref{eq:well-behave-restate}}  \\
&\leq (1 + o(1)) d^{-\cg k'} \; p^2 \lp(  \interpolate{\xi} \rp)  \tag{using \Cref{eq:same-function-value} }.
\end{align*}
This proves \Cref{eq:less-well-behave} and completes the proof of \Cref{eq:less-well-behave}.
\end{proof}
It remains to show \Cref{cl:helper}. 
\begin{proof}[Proof of \Cref{cl:helper}]
We will use the Taylor expansion of $\|  p^{[k']}\lp(  \interpolate{\xi} \rp)  \|_F^2$ (which is a degree at most $2(k-k')$ polynomial) in the variable $\xi$ centered at $\xi^*$. 
In particular, Taylor's theorem gives that
\begin{align}
&\lp \|  p^{[k']}\lp(  \interpolate{\xi} \rp)  \rp \|_F^2 \nonumber \\
&= 
\lp \|  p^{[k']}\lp(  \interpolate{\xi^*} \rp)  \rp \|_F^2 \nonumber \\
&+ 
\sum_{t=1}^{2(k - k')} 
\lp(  \left. \frac{\partial^t}{\partial \xi^t} 
\lp \|  p^{[k']} \lp(  \interpolate{\xi} \rp)    \rp \|_F^2 \right|_{\xi = \xi^*} \rp)
\frac{ \lp(  \xi - \xi^* \rp)^{t } }{ t ! }.
\label{eq:p-full-taylor-expand}
\end{align}
We claim that the derivative 
$ \left. \frac{\partial^t}{\partial \xi^t} 
\lp \|  p^{[k']} \lp(  \interpolate{\xi} \rp)    \rp \|_F^2 \right|_{\xi = \xi^*}$ satisfies the following bound:
\begin{align}
\label{eq:directional-derivative-bound}
&\left. \frac{\partial^t}{\partial \xi^t} 
\lp \|  p^{[k']} \lp(  \interpolate{\xi} \rp)    \rp \|_F^2 \right|_{\xi = \xi^*} \nonumber \\
&\leq O(1) \; 2^{t} \;   d^{ - \cg (t/2 + k') } \; p^2\lp(  \interpolate{\xi^*} \rp).
\end{align}

Combining \Cref{eq:directional-derivative-bound} and  $|\xi-\xi^*|\leq d^{  \ct}$ then gives that
\begin{align*}
\lp| \left. \frac{\partial^t}{\partial \xi^t} 
\lp \|  p^{[k']} \lp(  \interpolate{\xi} \rp)    \rp \|_F^2 \right|_{\xi = \xi^*}    
(\xi - \xi^*)^t
\rp|
&\leq O(1) \; 2^t \; d^{  -  c_g (t/2 +k') + t \; \ct  } \\
&= O(1) \; 2^t \; d^{ -c_g k' - ( \cg/2 - \ct ) t }.
\end{align*}
Combining the above with \Cref{eq:p-full-taylor-expand}, 
and the fact that $\sum_t 2^t / t! = O(1)$ then gives that
\begin{align*}
&
\lp| \lp \|  p^{[k']}\lp(  \interpolate{\xi} \rp)  \rp \|_F^2
-
\lp \|  p^{[k']}\lp(  \interpolate{\xi^*} \rp)  \rp \|_F^2 \rp| \\
&\leq 
 O \lp(  d^{ -c_g k' - ( \cg/2 - \ct ) t }  
\rp) \; p^2\lp(  \interpolate{\xi^*} \rp).
\end{align*}
This concludes the proof of \Cref{eq:derivative-rel-induction}.
It remains to show \Cref{eq:directional-derivative-bound}. Using the product rule, the derivative $\left. \frac{\partial^t}{\partial \xi^t} 
\lp \|  p^{[k']} \lp(  \interpolate{\xi} \rp)    \rp \|_F^2 \right|_{\xi = \xi^*}$ is a sum of at most $2^{t}$ terms of the following form:
\begin{align*}
\lp \langle 
p^{[\beta]}\lp(  \interpolate{\xi^*} \rp)    \vec e(i)^{ \otimes \beta' }, p^{[\gamma]}\lp(  \interpolate{\xi^*} \rp) \vec e(i)^{ \otimes \gamma' } 
\rp \rangle
\, ,    
\end{align*}
where $\beta, \beta', \gamma, \gamma'$ are natural numbers such that
$ \beta -\beta' = \gamma - \gamma'$, and $\beta + \gamma = 2k' + t$.
Combining the above observation with the Cauchy–Schwarz inequality then gives the bound
\begin{align*}
&\lp| \left. \frac{\partial^t}{\partial \xi^t} 
\lp \|  p^{[k']} \lp(  \interpolate{\xi} \rp)    \rp \|_F^2 \right|_{\xi = \xi^*} \rp| \\
&\leq 
2^{t}
\max_{ \beta,\gamma: \beta + \gamma = 2k' + t }
\lp \|  p^{[\beta]}\lp(  \interpolate{\xi^*} \rp) \rp \|_F
\lp \| p^{[\gamma]}\lp(  \interpolate{\xi^*} \rp) \rp \|_F \\
&\leq
O(1)\; 2^t d^{ -\cg \lp(  k' + t/2 \rp) }  p^2( \interpolate{\xi^*} )
\tag{\Cref{eq:well-behave-restate}},
\end{align*}
where the last line used that the set of points $\vec x^{(1)}, \cdots, \vec x^{(i-1)}, \bar {\vec y}, \vec y^{(i+1)}, \cdots  \vec y^{(n)}$ is well-behaved (\Cref{def:well-behave}).
This concludes the proof of \Cref{cl:helper}.
\end{proof}

\subsubsection{Taylor Expansion of the Mollified PTF and Bounds for the Higher-Order Terms }\label{sec:taylor}

As explained in the technical overview of \Cref{sec:technical-overview}, the goal is to prove \Cref{eq:Y-fool-h} by performing a Taylor expansion of $h$. We then use the moment-matching property of the hidden direction distribution to bound the contribution of the low-order terms to the difference of expectations on the LHS of \Cref{eq:Y-fool-h} and leverage the derivative decay property of the mollifier to show that the contribution from the high-order error term is also small. The main result of this subsection formalizes the second argument that bounds the derivative appearing in the Taylor error term in \Cref{lem:mollify-derivative-decay}.
In particular, we 
consider the degree-$m$
expansion of $h$ along the direction of $\vec v$ at its $i$-th coordinate around some point $\xi^*$ that will be specified later: 
\begin{align}
h( \interpolate{\xi} ) &= \sum_{t=0}^{m-1}  \lp( 
\lp(\xi - \xi^* \rp)^t / t!\rp) D_{i,\vec v}^t h\lp(\origin\rp) \nonumber \\
&+ \lp( \lp(\xi - \xi^* \rp)^m / m!\rp) D_{i,\vec v}^m h\lp( \interpolate{\hat \xi}  \rp),
\label{eq:Taylor-Expand}
\end{align}
where $\hat \xi$ is some point between $\xi$ and $\xi^*$ which also depends on $ \vec x^{(1)}, \cdots  \bar {\vec  y}, \cdots, \vec y^{(n)}$.
Recall that $A$ has its first $m$-moments matched with $\normal(0, 1)$.
So the expected values of the first $m$ terms in \Cref{eq:Taylor-Expand} are identical under $\xi \sim A$
and $\xi \sim \normal(0, 1)$.
The rest of the section will focus on how we control the magnitude of the last term $(\lp(\xi - \xi^* \rp)^m / m!) D_{i,\vec v}^m h( \interpolate{\hat \xi}  )$.

We now proceed to control the magnitudes of the derivatives $D_{i,\vec v}^m h\lp( \interpolate{\hat \xi}  \rp)$.
\begin{lemma}[Mollified PTF Derivative Decay]
\label{lem:mollify-derivative-decay}
Let $p:\R^{n \times d} \to \R$ be a degree $k$ polynomial, $\vec v$ be a unit vector satisfying \Cref{eq:derivative-decay} with $\eps = 0.05$, and $h$ be the mollified PTF defined in \Cref{eq:mollified-ptf}. 
Let $\cg$ be the constant that appears in the definition of $h$, and $\ct>0$ be a parameter satisfying $\cg/2 - \ct \geq \Omega(1)$.
Let $\vec x^{(1)}, \cdots, \bar {\vec y}, 
\cdots, \vec y^{(n)}$ be a well-behaved point set (cf. \Cref{def:well-behave}) at position $i$, with respect to $p,[- d^{ \ct }, d^{\ct}  ]$, and $\cg$.
Then for all $t \in \mathbbm Z_+$ and  all $ \xi \in [- d^{ \ct }, d^{\ct}  ]$ it holds that 
$$
D_{i,\vec v}^t h\lp( \interpolate{\xi}  \rp) \leq 
(k+t)^{O(t)}
d^{ -  \cg t/2   }.
$$
\end{lemma}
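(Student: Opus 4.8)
The plan is to reduce the bound on the directional derivative of $h = \sgn(p)\cdot g$ to a bound on the directional derivatives of the mollifier $g$ alone, and then to control those using (i) the fact that the point set is well-behaved and (ii) Lemma~\ref{lem:nearby-derivative-decay}. First I would observe that $\sgn(p(\interpolate{\xi}))$ is \emph{locally constant} in $\xi$ whenever $h$ does not vanish: indeed, if $h\lp(\interpolate{\xi}\rp)\ne 0$ then $g\lp(\interpolate{\xi}\rp)\ne 0$, which by the definition of $g$ in \Cref{eq:mollifier-def} forces $\lp\|p^{[1]}(\interpolate{\xi})\rp\|_F^2 \le 3 d^{-\cg} p^2(\interpolate{\xi})$, so in particular $p(\interpolate{\xi})\ne 0$ at that point. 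By continuity, $p$ is then nonzero (hence $\sgn(p)$ is constant) on a neighborhood of $\xi$, so near such a point $D_{i,\vec v}^t h = \pm D_{i,\vec v}^t g$. If $h$ vanishes on all of $[-d^{\ct},d^{\ct}]$ the claimed bound is trivial since all derivatives are $0$; otherwise, since $h$ is $C^\infty$ (as $g$ is $C^\infty$ and $\sgn(p)$ is locally constant wherever $g\ne 0$), and the point set is well-behaved so there is a witness $\xi^*$, I only need to bound $|D_{i,\vec v}^t g(\interpolate{\xi})|$ for $\xi \in [-d^{\ct}, d^{\ct}]$.

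Next I would expand $D_{i,\vec v}^t g$ using the product and chain rules. Writing $g = \prod_{s=1}^k \rho(R_s)$ with $R_s(\interpolate{\xi}) := d^{\cg s}\lp\|p^{[s]}(\interpolate{\xi})\rp\|_F^2 / p^2(\interpolate{\xi})$, the $t$-th directional derivative $D_{i,\vec v}^t g$ is, by the general Leibniz/Faà di Bruno rules, a finite sum of $(k+t)^{O(t)}$ terms, each a product of factors of the form $\rho^{(a)}(R_s) \cdot \big(\text{products of } D_{i,\vec v}^{b}R_s\big)$ with the total order of differentiation equal to $t$ (this is the ``sum of products'' bookkeeping alluded to in \Cref{lem:sum-of-product}). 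Using $\|\rho^{(a)}\|_\infty \le O(a^a)$ from \Cref{eq:rho-property} and the fact that $\rho^{(a)}(R_s)=0$ unless $R_s \le 3$ (so we may assume $\lp\|p^{[s]}(\interpolate{\xi})\rp\|_F^2 \le 3 d^{-\cg s} p^2(\interpolate{\xi})$ at the evaluation point — precisely the conclusion \Cref{eq:less-well-behave} of \Cref{lem:nearby-derivative-decay}, which applies because the point set is well-behaved and $\cg/2-\ct\ge\Omega(1)$), each $R_s$ factor contributes $O(1)$, and each $D_{i,\vec v}^b R_s$ factor needs to be bounded.

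To bound $D_{i,\vec v}^b R_s$, I would differentiate the quotient $R_s = d^{\cg s}\lp\|p^{[s]}\rp\|_F^2 / p^2$ by the quotient and product rules: its $b$-th directional derivative is a sum of $O(1)^b b!$ terms each of the shape $d^{\cg s}\langle p^{[\beta]}\vec e(i)^{\otimes\beta'}, p^{[\gamma]}\vec e(i)^{\otimes\gamma'}\rangle \cdot p^{-(2+\ell)}\cdot(\text{products of }D_{i,\vec v}^{c_j}p)$ with the differentiation orders summing appropriately; the key point is that \emph{every} factor $\lp\|p^{[r]}(\interpolate{\xi})\rp\|_F$ appearing (including those coming from $D_{i,\vec v}^{c}p = p^{[c]}\vec e(i)^{\otimes c}$ when $c\ge 1$, and $|p|$ itself when $c=0$) is bounded via \Cref{eq:less-well-behave} by $O(1)\, d^{-\cg r/2}|p|$. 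Cancelling the powers of $|p|$ between numerator and denominator (this is where it matters that $h\ne 0 \Rightarrow p\ne 0$, so the division is legitimate) leaves a net factor $d^{-\cg(\text{order of differentiation})/2}$ in each $R_s$-derivative term. Multiplying the contributions across all factors in the Faà di Bruno expansion, the exponents of $d$ add up to exactly $-\cg t/2$, and all combinatorial and $\rho^{(a)}$ constants are absorbed into $(k+t)^{O(t)}$. This yields $|D_{i,\vec v}^t g(\interpolate{\xi})| \le (k+t)^{O(t)} d^{-\cg t/2}$, completing the proof.

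The main obstacle I expect is the bookkeeping in the last paragraph: tracking, through the nested product/quotient/chain rules, that the \emph{total} power of $d^{-\cg/2}$ collected equals the \emph{total} differentiation order $t$ and not something smaller — i.e., verifying that no ``free'' (undifferentiated) copy of $\lp\|p^{[s]}\rp\|_F$ that is not compensated by a $d^{-\cg s/2}$ factor can appear. This is exactly the role of \Cref{eq:less-well-behave}: it gives the decay bound uniformly for \emph{every} order $r\in[k]$ at \emph{every} $\xi$ in the interval, not just at the witness point $\xi^*$, so each such factor is controlled regardless of where it sits in the expansion. Handling the $r=0$ case ($|p|$ in the denominator) and confirming that powers of $|p|$ cancel exactly rather than leaving a residual negative power requires care, but is mechanical once the ``$h\ne0\Rightarrow p\ne0$'' observation is in place.
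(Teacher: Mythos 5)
Your proposal is correct and follows essentially the same route as the paper: reduce $D_{i,\vec v}^t h$ to $D_{i,\vec v}^t g$ (the paper's \Cref{lem:trivial-derivative}), expand $D_{i,\vec v}^t g$ by product/chain/quotient rules into $k^{O(t)}$ structured terms (\Cref{lem:sum-of-product}), bound the $\rho^{(a)}$ factors by \Cref{eq:rho-property}, and control each ratio $\|p^{[r]}\|_F/|p|$ uniformly on the interval via \Cref{lem:nearby-derivative-decay}, with the exponent bookkeeping you flag as the main obstacle being exactly the paper's ``degree growth'' property $-\kappa_j^{(t)}+\sum(\beta+\gamma)\geq t$. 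Your local-constancy justification for replacing $h$ by $g$ and your observation that the support of $\rho$ already forces $R_s\leq 3$ at the evaluation point are minor presentational variants, not a different argument.
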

The first observation is that 
computing $D_{i,\vec v}^t h\lp( \cdot  \rp)$ essentially boils down to computing the derivatives of the mollifier $D_{i,\vec v}^{t} g(\cdot)$.
\begin{claim}
\label{lem:trivial-derivative}
For all $\multix \in \R^d$, we have that
$$
D_{i,\vec v}^t h(\multix) = \sgn( p(\multix) ) 
D_{i,\vec v}^{t} g(\multix).
$$
\end{claim}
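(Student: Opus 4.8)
The plan is to prove the identity pointwise, splitting into the region $\{p \neq 0\}$ (the only interesting case) and the zero set of $p$.

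First I would observe that away from the zero set of $p$, the function $\sgn(p(\cdot))$ is locally constant (it equals $1$ on $\{p>0\}$ and $0$ on $\{p<0\}$), and the mollifier $g$ from \Cref{eq:mollifier-def} is $C^\infty$ there: it is a finite product of compositions of the smooth bump $\rho$ (which satisfies \Cref{eq:rho-property}) with the rational functions $\multix \mapsto d^{\cg t}\lp\|p^{[t]}(\multix)\rp\|_F^2/p(\multix)^2$, whose denominators $p^2$ are nonvanishing on this region. Fix $\multix$ with $p(\multix)\neq 0$ and set $c:=\sgn(p(\multix))\in\{0,1\}$. On a neighborhood of $\multix$ we have $h=c\,g$ with $c$ a constant, so by the Leibniz rule $D_{i,\vec v}^t h(\multix)=\sum_{s=0}^{t}\binom{t}{s}\,\big(D_{i,\vec v}^s[\sgn(p(\cdot))](\multix)\big)\big(D_{i,\vec v}^{t-s}g(\multix)\big)$, and every term with $s\ge 1$ vanishes because $\sgn(p(\cdot))$ is locally constant at $\multix$; what remains is exactly $\sgn(p(\multix))\,D_{i,\vec v}^t g(\multix)$, which is the claim at such points.

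Next I would handle $\multix_0$ with $p(\multix_0)=0$. Outside the (lower‑dimensional, and for the later integral-based arguments negligible) locus where $p$ and all of $p^{[1],\vec v},\dots,p^{[k],\vec v}$ vanish simultaneously, let $t_0\in[k]$ be minimal with $p^{[t_0],\vec v}(\multix_0)\neq 0$. The continuous function $\multix\mapsto d^{\cg t_0}\lp\|p^{[t_0],\vec v}(\multix)\rp\|_F^2-3\,p(\multix)^2$ is strictly positive at $\multix_0$, hence on some neighborhood $U$; on $U\cap\{p\neq 0\}$ this forces $d^{\cg t_0}\lp\|p^{[t_0],\vec v}\rp\|_F^2/p^2>3$, so the $t_0$-th factor of the product defining $g$ is $0$, and therefore $g\equiv 0$ on $U$ (extending continuously across the null set $\{p=0\}$). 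Consequently $h=\sgn(p)\,g\equiv 0$ on $U$ as well, so both sides of the identity vanish at $\multix_0$ (using $\sgn(0)=1$). Combining the two regions establishes the identity everywhere except possibly on the degenerate locus, where $g$ and $h$ may simply be taken to be $0$; this does not affect any downstream use of the claim, which always occurs inside expectations or with well-behaved point sets for which $p(\interpolate{\xi^*})\neq 0$.

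I do not expect a genuine obstacle here: the content is just "smooth factor $\times$ locally constant factor," and once the smoothness of $g$ on $\{p\neq 0\}$ and its vanishing near the zero set of $p$ are pinned down, the Leibniz computation is immediate. The one point I would be careful to state explicitly is that $g$ (hence $h$) vanishes identically in a neighborhood of every non-degenerate zero of $p$, since this is what makes $D_{i,\vec v}^t h$ well-defined at such points in the first place and is what forces both sides of the claimed equality to be zero there.
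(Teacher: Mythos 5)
Your proof is correct and takes essentially the same route as the paper's one-line argument, which invokes the product rule together with the fact that the derivatives of $\sgn(p(\cdot))$ vanish almost everywhere; you simply spell out the Leibniz computation on $\{p \neq 0\}$ and additionally check that $g$ (hence $h$) vanishes identically near the non-degenerate zeros of $p$. The measure-zero degenerate locus that you flag but do not resolve is likewise left implicit in the paper, and is harmless since the claim is only ever applied at well-behaved point sets (where $p \neq 0$) or inside expectations over continuous distributions.
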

\begin{proof}
The claim follows from the product rule and the fact that the derivatives of $\sgn( p(\multix) ) $  are $0$ almost everywhere.
\end{proof}
Writing down the exact expression of 
the derivatives of the mollifier i.e., $D_{i,\vec v}^{t} g(\multix)$, is quite tedious. 
However, we show that the derivative is the sum of at most $(2k)^t$ many terms of a specific functional form.
For presentation purposes, similarly to earlier sections, we abbreviate $\multix$ by $\vec x^{(1:n)}$.
\begin{lemma}[Unfolded Derivatives of the Mollifier]
\label{lem:sum-of-product}
Let $i \in [n], t \in [k]$, and $\vec v$ be some unit vector, and $p,\rho,g$ defined as in \Cref{eq:rho-property}.
Then $D_{i,\vec v}^{t} g(\bmultix)$ is a sum of at most $T:= k^{O(t)}$ terms where the $j$-th term is of the form:
\begin{align}
\label{eq:single-term-expansion}
\Lambda_{j, t}:=
    &\pm d^{ (c_g/2) \kappa_j^{(t)}} \lp( \prod_{ (\alpha, \alpha') \in \mathcal A_j^{(t)}  }
    \rho^{(\alpha)} \lp( 
    \frac{     d^{ c_g  \alpha' }
    \| p^{[\alpha']}(\bmultix) \|_F^2 }{
    p^2 \lp( \bmultix \rp)
    }
 \rp) \rp) \nonumber \\ 
    &\lp( \prod_{ ( \beta, \beta', \gamma, \gamma' ) \in \mathcal B_j^{(t)}  }
    \frac{    \lp \langle
    p^{[\beta]}\lp(\bmultix\rp) \; \lp( \vec e^{(i)} \rp)^{\otimes \beta'},
    p^{[\gamma]}\lp(\bmultix\rp)
    \; \lp( \vec e^{(i)} \rp)^{\otimes \gamma'}
    \rp \rangle}{  p^2(\bmultix) } \rp) \, ,
\end{align}
where $\kappa_j^{(t)} \in \mathbb Z_+$, 
$\mathcal A_j^{(t)}$ is a multiset made up of elements from $\lp( \{0\} \cup [k] \rp)^{2} $,
$\mathcal B_j^{(t)}$ is a multiset made up of elements from $\lp( \{0\} \cup [k] \rp)^{4}$ \footnote{Of course, for the expression to be well defined, we will need $\beta - \beta' = \gamma - \gamma'$ as the tensor dimensions will not match up otherwise.}, and $\vec e(i)$ is the $i$-th standard basis vector.
Moreover, $\mathcal A_j^{(t)}, \mathcal B_j^{(t)}$ satisfy
\begin{itemize}
    \item \textbf{Maximum $\rho$ derivative degree:} $\sum_{ \alpha \in \mathcal A_j^{(t)} } \alpha \leq t$ for all $\mathcal A_j^{(t)}$.
    \item \textbf{Cardinality bound: }
    $| \mathcal A_{j, t} | + | \mathcal B_{j, t} | \leq k + t$.
    \item \textbf{Degree growth:}
    $-\kappa_j^{(t)} + \sum_{ (\beta, \beta', \gamma, \gamma')
    \in \mathcal B_j^{(t)}} (\beta + \gamma) \geq t$ for every $j \in [T]$.
\end{itemize}

\end{lemma}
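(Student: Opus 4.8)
The plan is to prove \Cref{lem:sum-of-product} by induction on the derivative order $t$, peeling off one directional derivative $D_{i,\vec v}$ at a time and tracking how it acts on each building block of the expression. By \Cref{lem:trivial-derivative} it is enough to analyze $D_{i,\vec v}^t g$ for the mollifier $g$ from \Cref{eq:mollifier-def} (with $\rho$ as in \Cref{eq:rho-property}); writing $\phi_s(\bmultix) := d^{c_g s}\|p^{[s]}(\bmultix)\|_F^2 / p(\bmultix)^2$, we have $g = \prod_{s=1}^k \rho^{(0)}(\phi_s)$. This is the base case $t=0$: it is a single term of the claimed form with $\kappa^{(0)}=0$, $\mathcal A^{(0)}=\{(0,s):s\in[k]\}$, $\mathcal B^{(0)}=\emptyset$, and all three bulleted invariants hold immediately (note $|\mathcal A^{(0)}|=k$).

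For the inductive step I assume $D_{i,\vec v}^t g = \sum_j \Lambda_{j,t}$ with at most $k^{O(t)}$ terms of the form \Cref{eq:single-term-expansion} and apply $D_{i,\vec v}$ via the product rule across the $|\mathcal A_j^{(t)}|+|\mathcal B_j^{(t)}|$ factors of $\Lambda_{j,t}$. Two rules do all the work. \emph{(i) Differentiating a $\rho$-factor.} The chain rule gives $D_{i,\vec v}\,\rho^{(\alpha)}(\phi_{\alpha'}) = \rho^{(\alpha+1)}(\phi_{\alpha'})\,D_{i,\vec v}\phi_{\alpha'}$, and the quotient rule together with $D_{i,\vec v}\|p^{[\alpha']}\|_F^2 = 2\langle p^{[\alpha'+1]}\vec e(i),\,p^{[\alpha']}\rangle$ (using the identity $D_{i,\vec v}p^{[j]}=p^{[j+1]}\vec e(i)$ from \Cref{sec:derivative}) and $D_{i,\vec v}(p^2)=2\langle p^{[1]}\vec e(i),\,p^{[0]}\rangle$ writes $D_{i,\vec v}\phi_{\alpha'}$ as $d^{c_g\alpha'}$ times a difference of a single $p^{-2}$-ratio and a product of two $p^{-2}$-ratios; the essential algebraic move is to turn the $p^{-3}$ that naively appears into $\frac{\|p^{[\alpha']}\|_F^2}{p^2}\cdot\frac{\langle p^{[1]}\vec e(i),\,p^{[0]}\rangle}{p^2}$. \emph{(ii) Differentiating a ratio-factor.} Using $D_{i,\vec v}(p^{[\beta]}\vec e(i)^{\otimes\beta'})=p^{[\beta+1]}\vec e(i)^{\otimes(\beta'+1)}$ (again from \Cref{sec:derivative} plus linearity of contraction) and the same $p^{-3}$-rewriting, $D_{i,\vec v}$ of a ratio-factor is a sum of two ratio-factors with indices incremented in place, plus one term that simply multiplies in the extra ratio $\langle p^{[1]}\vec e(i),\,p^{[0]}\rangle/p^2$. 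In every case the constraint $\beta-\beta'=\gamma-\gamma'$ is preserved, and since $\deg p=k$ we have $p^{[j]}=0$ for $j>k$, so all indices stay in $\{0\}\cup[k]$. Counting, each differentiated factor spawns $O(1)$ terms, so a single $\Lambda_{j,t}$ spawns $O(|\mathcal A_j^{(t)}|+|\mathcal B_j^{(t)}|)=O(k+t)=O(k)$ new terms (using $t\le k$), keeping the total at $k^{O(t)}$.

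It then remains to verify that the three bulleted properties survive each step. The maximum-$\rho$-degree bound $\sum_\alpha\alpha\le t$ is immediate, since a higher derivative of $\rho$ is created only by hitting a $\rho$-factor, which raises exactly one $\alpha$ by exactly one, while hitting a ratio-factor leaves $\mathcal A$ untouched. For the degree-growth quantity $-\kappa+\sum_{\mathcal B}(\beta+\gamma)$, I would check in each sub-case that the power of $d^{c_g/2}$ created and the total order of the newly introduced ratios are matched so that this quantity increases by exactly one per differentiation — e.g.\ the two-ratio branch of rule (i) has $\kappa$ increase by $2\alpha'$ and $\sum(\beta+\gamma)$ increase by $2\alpha'+1$, a net change of $+1$. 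The cardinality bound $|\mathcal A_j^{(t)}|+|\mathcal B_j^{(t)}|\le k+t$ is the subtle point and is where I expect the main difficulty: $|\mathcal A|$ stays equal to its initial value $k$, so one must show $|\mathcal B|$ grows by at most one per step, which is delicate precisely because the two-ratio branch of rule (i) looks like it introduces two new ratio-factors at once. Handling this cleanly — presumably by re-organizing that branch so the redundant ratio $\|p^{[\alpha']}\|_F^2/p^2 = d^{-c_g\alpha'}\phi_{\alpha'}$ is folded back into the $\rho$-argument rather than counted separately, and then re-checking the other two invariants under that reorganization — together with the repeated rewriting of $p^{-3}$-type denominators into products of $p^{-2}$-ratios, is the real content of the proof; everything else is the product rule and the contraction identity from \Cref{sec:derivative}.
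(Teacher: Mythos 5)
Your proposal takes essentially the same inductive approach as the paper's proof: differentiate via the product rule one factor of $\Lambda_{j,t}$ at a time, use the chain and quotient rules on the $\rho$-factors, rewrite the $p^{-3}$-type terms as products of $p^{-2}$-ratios, and then verify the three invariants case by case. Your base case ($\mathcal A^{(0)}=\{(0,s):s\in[k]\}$, $\mathcal B^{(0)}=\emptyset$, $\kappa^{(0)}=0$), your two differentiation rules, and your bookkeeping of the degree-growth and maximum-$\rho$-degree invariants match the paper.

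The one point you flag as unresolved — that the two-ratio branch produced when differentiating a $\rho$-factor appears to grow $|\mathcal B|$ by two rather than one — is a genuine observation, and in fact the paper's own proof glosses over it: the paper simply asserts ``the set cardinality bound increases by $1$ due to the last term'' without addressing that the last term introduces the two ratio factors $(\alpha',0,\alpha',0)$ and $(1,1,0,0)$. So your instinct to treat this as the delicate step is correct, and your proposed repair (folding $\|p^{[\alpha']}\|_F^2/p^2 = d^{-c_g\alpha'}\phi_{\alpha'}$ back into the bounded $\rho$-side) is one workable fix, though it would require slightly enlarging the allowed factor types in the lemma statement. A simpler alternative you could note is that accepting $|\mathcal B|$ to grow by at most two per step only weakens the cardinality bound to $|\mathcal A_j^{(t)}|+|\mathcal B_j^{(t)}|\le k+2t$, which is equally sufficient for how the lemma is used downstream (the product of $O(1)$ constants over $|\mathcal B|$ ratio factors is then $O(1)^{2t}\le (k+t)^{O(t)}$, and the $T=k^{O(t)}$ term count is unaffected since $t\le k$). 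Either way, your proposal is not wrong; it just stops short of choosing between two equally adequate resolutions of a detail the paper itself did not spell out.
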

\begin{proof}
We will show this by induction on $t$.
The base case is when $t = 0$. 
This corresponds to the case where we just have one term, where $\kappa_{1}^{(0)} = 0$,  $\mathcal A_1^{(0)} = \{ (i,i) \mid i \in [k] \}$, and $\mathcal B_1^{(0)} = \emptyset$.
The properties are then immediate.

We proceed to show the inductive step.
For convenience, define
\begin{align*}
&\Lambda_{j,t}^{ -(\alpha, \alpha') }
:= \Lambda_{j, t}
 \lp( \rho^{(\alpha)} \lp( 
 \frac{  d^{c_g} \alpha' \| p^{[\alpha']} \lp( \bmultix \rp) \|_F^2 }{ p^2( \bmultix ) }
 \rp) \rp)^{-1} \,  ,     \\
 &\Lambda_{j,t}^{ -(\beta, \beta', \gamma, \gamma') }
 := 
 \Lambda_{j, t} 
 \;    
 \lp( \frac{    \lp \langle
    p^{[\beta]}\lp(\bmultix\rp) \; \lp( \vec e^{(i)} \rp)^{\otimes \beta'},
    p^{[\gamma]}\lp(\bmultix\rp)
    \; \lp( \vec e^{(i)} \rp)^{\otimes \gamma'}
    \rp \rangle}{  p^2(\bmultix) } \rp)^{-1}.
\end{align*}

By the inductive hypothesis, we have that 
$D_{i,\vec v}^{t+1} g( \multix )
= \sum_{ j } D_{i,\vec v} \Lambda_{j,t}.
$
So it remains to compute $D_{i,\vec v} \Lambda_{j,t}$.
In particular, by the product rule, we have that
\begin{align*}
D_{i,\vec v} \Lambda_{j,t}
:= 
&\sum_{  ( \alpha, \alpha' ) \in \mathcal A_{j,t} }
\Lambda_{j, t}^{ -( \alpha, \alpha' ) } 
\; D_{i,\vec v} \rho^{(\alpha)} \lp( 
 \frac{  d^{c_g} \alpha' \| p^{[\alpha']} \lp( \bmultix \rp) \|_F^2 }{ p^2( \bmultix ) }
 \rp) \\
 &+ 
 \sum_{  ( \beta, \beta', \gamma, \gamma') \in \mathcal B_{j,t} }
\Lambda_{j, t}^{ -( \beta, \beta', \gamma, \gamma' ) } 
\; D_{i,\vec v}
\frac{    \lp \langle
    p^{[\beta]}\lp(\bmultix\rp) \; \lp( \vec e^{(i)} \rp)^{\otimes \beta'},
    p^{[\gamma]}\lp(\bmultix\rp)
    \; \lp( \vec e^{(i)} \rp)^{\otimes \gamma'}
    \rp \rangle}{  p^2(\bmultix) }.
\end{align*}
We first analyze the term
$D_{i,\vec v}
\left( \lp \langle
    p^{[\beta]}\lp(\bmultix\rp) \; \lp( \vec e^{(i)} \rp)^{\otimes \beta'},
    p^{[\gamma]}\lp(\bmultix\rp)
    \; \lp( \vec e^{(i)} \rp)^{\otimes \gamma'}
    \rp \rangle \right) p^2(\bmultix) $.
\begin{align*}
&D_{i,\vec v}
\frac{    \lp \langle
    p^{[\beta]}\lp(\bmultix\rp) \; \lp( \vec e^{(i)} \rp)^{\otimes \beta'},
    p^{[\gamma]}\lp(\bmultix\rp)
    \; \lp( \vec e^{(i)} \rp)^{\otimes \gamma'}
    \rp \rangle}{  p^2(\bmultix) } \\
&= 
\frac{    D_{i,\vec v}\lp \langle
    p^{[\beta]}\lp(\bmultix\rp) \; \lp( \vec e^{(i)} \rp)^{\otimes \beta'},
    p^{[\gamma]}\lp(\bmultix\rp)
    \; \lp( \vec e^{(i)} \rp)^{\otimes \gamma'} 
    \rp \rangle}{  p^2(\bmultix) } \\
&- 
\frac{ \lp \langle
    p^{[\beta]}\lp(\bmultix\rp) \; \lp( \vec e^{(i)} \rp)^{\otimes \beta'},
    p^{[\gamma]}\lp(\bmultix\rp)
    \; \lp( \vec e^{(i)} \rp)^{\otimes \gamma'} \rp \rangle
    D_{i,\vec v} p^2(\bmultix)
    }{  p^4(\bmultix) }  \tag{Quotient rule} \\
&= 
\frac{    \lp \langle
    p^{[\beta+1]}\lp(\bmultix\rp) \; \lp( \vec e^{(i)} \rp)^{\otimes \beta'+1},
    p^{[\gamma]}\lp(\bmultix\rp)
    \; \lp( \vec e^{(i)} \rp)^{\otimes \gamma'} 
    \rp \rangle}{  p^2(\bmultix) }
+ 
\frac{    \lp \langle
    p^{[\beta]}\lp(\bmultix\rp) \; \lp( \vec e^{(i)} \rp)^{\otimes \beta'},
    p^{[\gamma+1]}\lp(\bmultix\rp)
    \; \lp( \vec e^{(i)} \rp)^{\otimes \gamma'+1} 
    \rp \rangle}{  p^2(\bmultix) } \\
&- 
\frac{ \lp \langle
    p^{[\beta]}\lp(\bmultix\rp) \; \lp( \vec e^{(i)} \rp)^{\otimes \beta'},
    p^{[\gamma]}\lp(\bmultix\rp)
    \; \lp( \vec e^{(i)} \rp)^{\otimes \gamma'} 
    \rp \rangle}{  p^2(\bmultix) } \;
\frac{ 2\lp(  p^{[1]}(\bmultix) \vec e^{(i)} \rp) p(\bmultix)  }{ p^2(\bmultix) }. \tag{Product rule}
\end{align*}

One can check that each term in the summation above is still of the desired form. 
Moreover, we have that:
\begin{itemize}
\item The total derivative degree on $\beta$ is $\beta + \gamma + 1$ while the power $\kappa_j$ in the leading constant stays unchanged, ensuring the desired potential growth.
\item Since the last line contains $3$ terms, it follows that the number of terms in $$ \sum_{  ( \beta, \beta', \gamma, \gamma') \in \mathcal B_{j, t}}
\Lambda_{j, t}^{ -( \beta, \beta', \gamma, \gamma' ) } 
\; D_{i,\vec v}
\frac{    \lp \langle
    p^{[\beta]}\lp(\bmultix\rp) \; \lp( \vec e^{(i)} \rp)^{\otimes \beta'},
    p^{[\gamma]}\lp(\bmultix\rp)
    \; \lp( \vec e^{(i)} \rp)^{\otimes \gamma'}
    \rp \rangle}{  p^2(\bmultix) }
    $$ is $3 | \mathcal B_{j,t}|$.
\item The set cardinality bound increases by $1$ due to the third term.
\end{itemize}

Next, we analyze the term $
D_{i,\vec v} \rho^{(\alpha)} \lp( 
   d^{c_g} \alpha' \| p^{[\alpha']} \lp( \bmultix \rp) \|_F^2/ p^2( \bmultix )  
\rp)
$.
\begin{align}
&D_{i,\vec v} \rho^{(\alpha)} \lp( 
 \frac{  d^{c_g \alpha'}  \| p^{[\alpha']} \lp( \bmultix \rp) \|_F^2 }{ p^2( \bmultix ) }
\rp) \\
&=
\rho^{(\alpha+1)}
\lp( 
 \frac{  d^{c_g \alpha'}  \| p^{[\alpha']} \lp( \bmultix \rp) \|_F^2 }{ p^2( \bmultix ) }
\rp)
\; D_{i,\vec v}
 \frac{  d^{c_g \alpha'}  \| p^{[\alpha']} \lp( \bmultix \rp) \|_F^2 }{ p^2( \bmultix ) }  \tag{Chain rule} \\
 &=
 d^{c_g \alpha'}
\rho^{(\alpha+1)}
\lp( 
 \frac{  d^{c_g \alpha'}\| p^{[\alpha']} \lp( \bmultix \rp) \|_F^2 }{ p^2( \bmultix ) }
\rp)
\; 
\lp( 
 \frac{    
 D_{i,\vec v} \| p^{[\alpha']} \lp( \bmultix \rp) \|_F^2 }{ p^2( \bmultix ) }  
 - 
 \frac{  \| p^{[\alpha']} \lp( \bmultix \rp) \|_F^2
 D_{i,\vec v} p^2(\bmultix)
 }{ p^4( \bmultix ) } \rp)
 \tag{Quotient rule} \\
 &= 
 d^{c_g \alpha'}
\rho^{(\alpha+1)}
\lp( 
 \frac{  d^{( c_g / 2)  2\alpha'}  \| p^{[\alpha']} \lp( \bmultix \rp) \|_F^2 }{ p^2( \bmultix ) }
\rp)
\; 
\bigg( 
 \frac{   
 2 \lp \langle p^{[\alpha'+1]} \lp( \bmultix \rp) \vec e^{(i)}, 
 p^{[\alpha']} \lp( \bmultix \rp)
 \rp \rangle }{ p^2( \bmultix ) }    \\
 &- 
 \frac{   
\lp \langle p^{[\alpha']} \lp( \bmultix \rp) \vec e^{(i)}, 
 p^{[\alpha']} \lp( \bmultix \rp)
 \rp \rangle
 }{ p^2( \bmultix ) } 
 \frac{
\lp \langle p^{[1]} \lp( \bmultix \rp) \vec e^{(i)}, 
 p \lp( \bmultix \rp) \rp \rangle 
 }{ p^2( \bmultix ) } 
 \bigg).
 \tag{Product Rule}
\end{align}
One can check that each term in the summation above is still of the desired form. 
Moreover, we have that
\begin{itemize}
\item The total derivative degree is $2 \alpha' + 1$ while the power $\kappa_j$ in the leading constant increases by $2 \alpha'$, ensuring the desired potential growth.
\item Since the above equation has $2$ additive terms, it follows that
$$
\sum_{  ( \alpha, \alpha' ) \in \mathcal A_{j,t} }
\Lambda_{j, t}^{ -( \alpha, \alpha' ) } 
\; D_{i,\vec v} \rho^{(\alpha)} \lp( 
 \frac{  d^{c_g} \alpha' \| p^{[\alpha']} \lp( \bmultix \rp) \|_F^2 }{ p^2( \bmultix ) }
 \rp)
$$
contains at most $2 | \mathcal A_{j, t} |$ terms.

\item The set cardinality bound increases by $1$ due to the last term.
\item The maximum derivative degree of $\rho$ increases by $1$.
\end{itemize}
The total number of terms in $D_{i,\vec v}^{t+1} g( \multix )$ is at most
$ \left( O(k) \right)^t  \;  \lp( 
3 | \mathcal B_{j,t}| + 2 | \mathcal A_{j, t} |
\rp)  \leq  \lp( O(k) \rp)^{t+1}$.
This concludes the inductive step as well as the proof of \Cref{lem:sum-of-product}.
\end{proof}

We are now ready to conclude the proof of \Cref{lem:mollify-derivative-decay}.
\begin{proof}[Proof of \Cref{lem:mollify-derivative-decay}]
For notational convenience, we define $\vec x^{(i)} = \bar {\vec y} + \xi \vec v$.
Recall that $g$ and $h$ are the mollifier and the mollified PTF respectively, defined in \Cref{eq:mollifier-def,eq:mollified-ptf}.
By \Cref{lem:trivial-derivative}, it holds
$$
\lp| D_{i,\vec v}^t h(\multix) \rp| \leq 
\lp| D_{i,\vec v}^{t} g(\multix) \rp|.
$$
By \Cref{lem:sum-of-product}, $D_{i,\vec v}^{t} g(\multix)$ is the sum of at most $ k^{O(t)}$ terms $\Lambda_{j, t}$ of the form given in \Cref{eq:single-term-expansion}.
We now claim that each term in the form of \Cref{eq:single-term-expansion} is at most 
$ t^{O(t)} d^{ - c_g t/2  }$.
It follows immediately that
\begin{align*}
    \lp| D_{i,\vec v}^{t} g(\multix) \rp| 
    &\leq 
t^{O(t)} k^{O(t)} d^{ - c_g t/2  } \\
&\leq (t+k)^{O(t)} d^{ - c_g t/2  }
\end{align*}
for all $t \in \mathbbm Z_+$.

It remains to show that the quantity $\Lambda_{j, t}$ in \Cref{eq:single-term-expansion} is at most $ t^{O(t)} d^{ - \cg t/2  }$.
Recall that $ \| \rho^{(\alpha)} \|_{\infty}$ is at most $ \alpha^{ O(\alpha) }$ by \Cref{eq:rho-property}.
Combining this with the maximum degree property of \Cref{lem:sum-of-product} then gives that 
\begin{align}
\label{eq:alpha-bound}
    \prod_{ (\alpha, \alpha') \in \mathcal A_j^{(t)}  }
    \rho^{(\alpha)} \lp( 
    \frac{     d^{ c_g  \alpha' }
    \| p^{[\alpha']}(\bmultix) \|_F^2 }{
    p^2 \lp( \bmultix \rp)
    }
 \rp)
\leq t^{O(t)}.
\end{align}

Next, applying the Cauchy–Schwarz  inequality gives that
\begin{align}
\label{eq:cz-decouple}
    \frac{    \lp \langle
    p^{[\beta]}\lp(\bmultix\rp) \; \lp( e^{(i)} \rp)^{\otimes \beta'},
    p^{[\gamma]}\lp(\bmultix\rp)
    \; \lp( e^{(i)} \rp)^{\otimes \gamma'}
    \rp \rangle}{  p^2(\bmultix) } 
    &\leq 
    \frac{\| p^{[\beta]}\lp(\bmultix\rp)  \|_F }{
    \lp| p(\bmultix) \rp|}
    \frac{\| p^{[\gamma]}\lp(\bmultix\rp) \|_F}{\lp| p(\bmultix) \rp|}.
\end{align}
Recall that we define $\vec x^{(i)} := \bar {\vec y} + \xi \vec v$. By the assumption of the lemma, 
$\vec x^{(1)}, \cdots, \bar {\vec y}, \cdots, \vec x^{(n)}$ form a well-behaved point set with respect to $p,[- d^{ \ct }, d^{\ct}  ]$, and $\cg$ (cf. \Cref{def:well-behave}), and $\xi \in [-d^{\ct}, d^{\ct}]$.
Therefore, \Cref{lem:nearby-derivative-decay} is applicable. This gives that
\begin{align}\label{eq:tmp5345}
    \frac{\| p^{[\beta]}\lp(\bmultix\rp)  \|_F }{
    \lp| p(\bmultix) \rp|}
    \frac{\| p^{[\gamma]}\lp(\bmultix\rp) \|_F}{\lp| p(\bmultix) \rp|}
    \leq O\lp( d^{ -  c_g \lp( \beta + \gamma\rp)/2 }\rp).
\end{align}
We will use \Cref{eq:tmp5345} to bound the remaining part of $\Lambda_{j,t}$ from \Cref{eq:single-term-expansion}; all the factors in the RHS of \Cref{eq:single-term-expansion} excluding the $\prod_{ (\alpha, \alpha') \in \mathcal A_j^{(t)}  }
    \rho^{(\alpha)} \lp( 
        d^{ c_g  \alpha' }
    \| p^{[\alpha']}(\bmultix) \|_F^2 /
    p^2 \lp( \bmultix \rp)
 \rp)$ that have already been bounded earlier. Combining \Cref{eq:tmp5345} with \Cref{eq:cz-decouple} gives the following bound for the term
 \begin{align*}
   &d^{ (c_g/2) \kappa_j^{(t)}} \prod_{ (\beta, \beta', \gamma, \gamma') \in \mathcal B_{j, t}}
    \frac{    \lp \langle
    p^{[\beta]}\lp(\bmultix\rp) \; \lp( e^{(i)} \rp)^{\otimes \beta'},
    p^{[\gamma]}\lp(\bmultix\rp)
    \; \lp( e^{(i)} \rp)^{\otimes \gamma'}
    \rp \rangle}{  p^2(\bmultix) }  \\
    &\leq O\lp( d^{ -  (c_g/2)
    \lp( \lp( \sum_{ (\beta, \beta', \gamma, \gamma') \in \mathcal B_{j, t}}
     \beta + \gamma \rp) - \kappa_j^{(t)}\rp) } \rp).     
 \end{align*}
By the degree growth property of \Cref{lem:sum-of-product}, it follows that
\begin{align}
\label{eq:beta-gamma-bound}
   d^{(c_g/2) \kappa_j^{(t)}} \prod_{ (\beta, \beta', \gamma, \gamma') \in \mathcal B_{j, t}}
    \frac{    \lp \langle
    p^{[\beta]}\lp(\bmultix\rp) \; \lp( e^{(i)} \rp)^{\otimes \beta'},
    p^{[\gamma]}\lp(\bmultix\rp)
    \; \lp( e^{(i)} \rp)^{\otimes \gamma'}
    \rp \rangle}{  p^2(\bmultix) } 
    \leq d^{ -  \cg t/2}.
\end{align}
Combining \Cref{eq:beta-gamma-bound} and \Cref{eq:alpha-bound} then gives that the expression in \Cref{eq:single-term-expansion} is at most
$ t^{ O(t) } d^{-  \cg t /2  } $.
This concludes the proof of \Cref{lem:mollify-derivative-decay}.
\end{proof}
\subsubsection{Putting Everything Together: Proof of \Cref{prop:replacement-step}}\label{sec:wrapup}
We are now ready to conclude the proof of \Cref{prop:replacement-step}, restated below for convenience:
\MAINPROPOSITION*

\begin{proof}
First,  using an averaging argument, it suffices to show  the following for  an arbitrary set of points $S_{i} := \{ \vec x^{(1)}, \cdots, \vec x^{(i-1)},\bar {\vec y}, \vec y^{(i+1)}, \cdots \vec y^{(n)} \}$:
\begin{align}
&\lp| 
\E_{\xi \sim A}  
\lp[  h( \vec x^{(1)},
\cdots , \vec x^{(i)}, \bar {\vec y} + \xi \vec v, \cdots , \vec y^{(n)}) \rp]
-  
\E_{z \sim \normal(0, 1)}  
\lp[  h( \vec x^{(1)},
\cdots , \vec x^{(i)}, \bar {\vec y} + z \vec v, \cdots , \vec y^{(n)}) \rp] \rp|
\nonumber
\\
&
\leq d^{ - \cg m / 2 + cm }
.
\label{eq:goal1}
\end{align}

Throughout the proof, we will fix $\ct = \cg/2 - c/2$, 
and set the truncated domain to be $I = [ -d^{\ct}, d^{\ct} ]$.

It then follows from \Cref{lem:truncation} that 
$$
\max \lp( \E_{ \xi \sim A } \lp[ \xi \in I  \rp]  \, , \, 
\E_{ \xi \sim \normal(0, 1) } \lp[ \xi \in I  \rp] \rp)
\leq m^m d^{ -\ct m  } \leq d^{- \cg m/2 + c m } \, ,
$$
where the last inequality follows from the assumption that 
$m < d^{c / C}$

Since the mollified PTF is constructed to be bounded from above by $1$, 
showing \Cref{eq:goal1} can be reduced to showing
\begin{align}
&\lp| 
\E_{\xi \sim \bar A}  
\lp[  h( \vec x^{(1)},\cdots, \vec x^{(i)}, \bar {\vec y} + \xi \vec v, \cdots, \vec y^{(n)}) 
\rp]
-  
\E_{\xi \sim \bar \normal(0, 1)}  
\lp[  h( \vec x^{(1)},\cdots, \vec x^{(i)}, \bar {\vec y} + z \vec v, \cdots, \vec y^{(n)}) 
\rp] \rp| \nonumber \\
&
\leq d^{ - \ct m + c m / 2  }
\, ,
\label{eq:truncate-Y-fool-h-restate}
\end{align}
where $\bar A$ and $\bar \normal(0, 1)$ are the truncated  versions of the distributions that condition on $\xi$ and $z$ being inside the interval
$[ -d^{\ct}, d^{\ct} ]$.

We will show \Cref{eq:truncate-Y-fool-h-restate} by considering two cases.
The first is when $S_{i}$ is not a well-behaved point set (\Cref{def:well-behave}) at position $i$. That is, we have that for every $\xi \in [-d^{\ct}, d^{\ct}]$, there exists some $t$ such that
$$ \lp \| p^{[t]} \lp( \vec x^{(1)}, \cdots, \bar {\vec y} + \xi \vec v, \cdots, \vec y^{(n)} \rp) \rp \|_F^2
> 3 d^{- c_g t}
 \lp( p( \vec x^{(1)}, \cdots, \bar {\vec y} + \xi \vec y, \cdots, \vec y^{(m)} ) \rp)^2.
$$
By the definition of the mollified PTF (cf. \Cref{eq:mollifier-def}), we immediately have that in this case 
$h( \vec x^{(1)},\cdots, \vec x^{(i)}, \bar {\vec y} + \xi \vec v, \cdots, \vec y^{(n)}) = 0$ for all $\xi \in [ -d^{\ct}, d^{ \ct} ]$.
Thus, \Cref{eq:truncate-Y-fool-h-restate} follows trivially in this case as the left hand side is zero.

Now consider the complementary case where $S_{i}$ is a well-behaved point set at position $i$.
By \Cref{lem:mollify-derivative-decay}, we have that 
\begin{align}
\label{eq:less-well-behave-2}
D_{i,\vec v}^t h\lp( \interpolate{\xi}  \rp) \leq 
\lp( k+t\rp)^{O(t)} d^{ -  \cg t/2  }
\end{align}
for all $t \in [m]$ and $\xi \in [-d^{\ct}, d^{\ct}]$.
In this case, we will show \Cref{eq:truncate-Y-fool-h-restate} by rewriting $h$ in terms of its Taylor expansion, and then bounding the differences between the Taylor terms.
Consider the degree-$m$ Taylor expansion of $h$ around $0$:
\begin{align}
h( \interpolate{\xi} ) &= \sum_{t=0}^{m-1}  \lp( \xi^t / t!\rp) D_{i,\vec v}^t h\lp(\origin\rp) \nonumber \\
&+ \lp( \xi^m / m!\rp) D_{i,\vec v}^m h\lp( \interpolate{\hat \xi}  \rp),     
\label{eq:Taylor-Expand-2}
\end{align}
where $\hat \xi$ is some point between $\xi$ and $\xi^*$ which also depends on $ \vec x^{(1)}, \cdots  \bar {\vec y}, \cdots, \vec y^{(n)}$ and $\xi^*$ lies in $[- d^{\ct}, d^{\ct}]$.
For convenience, we write
$ \Delta(a) := \E_{ \xi \sim \bar A }[ \xi^a ] - \E_{ \xi \sim \bar \normal(0, 1) }[ \xi^a ].$
For $t \in [m-1]$, we have that
\begin{align}
&\lp| 
D_{i,\vec v}^t h\lp(\origin\rp)
\lp( 
\E_{ \xi \sim \bar A }
\lp[  \xi^t \rp]
-
\E_{ \xi \sim \bar \normal(0, 1) }
\lp[  \xi^t \rp]
\rp)
\rp| \\
&= 
\lp| 
D_{i,\vec v}^t h\lp(\origin\rp)
\Delta(a)\rp| \\
&\leq
(m+t)^{O(t)}
d^{- c_g t/2 } \;
d^{  -\ct (m-t) }
\tag{by \Cref{eq:less-well-behave-2}, and \Cref{lem:truncation}} \\
&\leq 
(m+t)^{O(t)}
d^{ - \ct m}.  \tag{using  $\ct = \cg/2 - 0.01 $}
\end{align}
In particular, this implies that
\begin{align}
\label{eq:low-order-moment-diff-bound}
\bigg |
&\E_{ \xi \sim \bar A }
\lp[ \sum_{t=1}^{m-1}  \lp( \xi^t / t!\rp) D_{i,\vec v}^t h\lp(\origin\rp) \rp]
-
\\
&\E_{ \xi \sim \bar \normal(0, 1) }
\lp[ 
\sum_{t=1}^{m-1}  \lp( \xi^t / t!\rp) D_{i,\vec v}^t h\lp(\origin\rp)
\rp] \bigg | \\
&\leq m \; m^{O(m)}
d^{ - \ct m}
=  m^{O(m)} \; d^{ -  \ct m } \, ,
\end{align}
where we used the triangle inequality.
For the last Taylor remainder term, 
applying \Cref{eq:less-well-behave-2} again gives that
\begin{align*}
\lp( \xi^m / m!\rp)  D_{i,\vec v}^m h\lp( \interpolate{\hat \xi}  \rp) 
\leq 
\lp( \xi^m / m!\rp)  (k+m)^{O(m)} d^{ -\cg m/2 }  \, ,
\end{align*}
for all $\xi$.
In particular, this implies that the expected value of the Taylor remainder term under the distribution of $\xi \sim \bar A$
is at most
\begin{align*}
&(k+m)^{O(m)} d^{ -\cg m/2 }
\E_{ \xi \sim \bar A }    
\lp[ 
\xi^m / m! 
\rp] \\
&\leq
(k+m)^{O(m)} d^{ -\cg m/2 }
\E_{ \xi \sim A }    
\lp[ 
 \xi^m / m! 
\rp] \tag{since the mass of $A$ within the truncated interval is $1 - o(1)$ } \\
&= 
(k+m)^{O(m)} d^{ -\cg m/2 }
\E_{ \xi \sim \normal(0, 1) }    
\lp[ 
\lp( \xi^m / m!\rp)  
\rp] \tag{since we assume the degree-$m$ moment of $A$ match with $\normal(0, 1)$ } \\
&\leq 
(k+m)^{O(m)} d^{ -\cg m/2 } \tag{by the Gaussian moment bound }
\end{align*}
The same bound can be established for the Taylor remainder term under the distribution of $\xi \sim \bar \normal(0, 1)$ as well.
It then follows from the trinagle inequality that the difference between the expected value of the Taylor remainder term under $\bar \normal(0, 1)$ and $\bar A$ is at most $(k+m)^{O(m)} d^{ -\cg m/2 }$.
Combining this with \Cref{eq:low-order-moment-diff-bound,eq:Taylor-Expand-2} then shows that
\begin{align*}
&\lp| 
\E_{\xi \sim \bar A}  
\lp[  h( \vec x^{(1)},\cdots, \vec x^{(i)}, \bar {\vec y} + \xi \vec v, \cdots, \vec y^{(n)}) 
\rp]
-  
\E_{\xi \sim \bar \normal(0, 1)}  
\lp[  h( \vec x^{(1)},\cdots, \vec x^{(i)}, \bar {\vec y} + z \vec v, \cdots, \vec y^{(n)}) 
\rp] \rp| \nonumber \\
&
\leq 
(k+m)^{O(m)}
d^{ - \ct m   }
\leq d^{ - \ct m + c m /2 }
\, ,    
\end{align*}
where the last inequality follows from the assumption that
$d > \max\lp(  k^{C/c}, m^{C/c} \rp)$.
This concludes the proof of \Cref{eq:truncate-Y-fool-h-restate}, as well as \Cref{prop:replacement-step}.
\end{proof}

\newpage 
\bibliographystyle{alpha}
\bibliography{mydb.bib}

\newpage 

\appendix

\section*{Appendix}

\section{Additional Preliminaries}

\subsection{Basics of Hermite Polynomials} \label{app:Hermite_polynomials}

Hermite polynomials form a complete orthogonal basis of the vector space $L^2(\R,\normal(0,1))$ of all functions $f:\R \to \R$ such that $\E_{x\sim \normal(0,1)}[f^2(x)]< \infty$. 
	We will use the \emph{normalized probabilist's} Hermite polynomials, which have unit norm and are pairwise orthogonal with respect to the Gaussian measure, i.e., $\int_\R h_k(x) h_{m}(x) e^{-x^2/2} \d x = \sqrt{2\pi} \mathbf{1}(k=m)$ 
    These polynomials are the ones obtained by Gram-Schmidt orthonormalization of the basis $\{1,x,x^2,\ldots\}$ with respect to the inner product $\langle f,g \rangle_{\normal(0,1)}:=\E_{x \sim \normal(0,1)}[f(x)g(x)]$. Every function  $f \in L^2(\R,\normal(0,1))$ can be uniquely written as $f(x) = \sum_{i =0}^\infty a_i h_i(x)$ and we have $\lim_{n \rightarrow \infty}\E_{x \sim \normal(0,1)}[(f(x)- \sum_{i =0}^n a_i h_i(x))^2] = 0$. 
    We have the following closed form formula (see, e.g., \cite{Sze67}):
\begin{align}\label{eq:Hermite_formula}
        h_n(x) = 
            \sqrt{n!}\sum_{j=0}^{\lfloor n/2 \rfloor} \frac{(-1)^j}{j! (n-2j)! 2^j}x^{n-2j} \;.
    \end{align}
To extend the basis to $d$-dimensions, we use a multi-indices
$\vec J\in
    \N^d$ to define the $d$-variate normalized Hermite polynomial. For $\vec J = (v_1,\ldots,v_d)$ we define $h_{\vec J}(\vec x) =
    \prod_{i=1}^d h_{v_i}(\x_i)$.  The total degree of $h_{\vec J}$ is $|{\vec J}| = \sum_{v_i \in
        {\vec J}} v_i$.  Given a function $f \in L^2(\R^d,\normal(\vec 0,\vec I))$ we compute its Hermite coefficients as
\(
\hat{f}({\vec J}) = \E_{\vec x\sim \normald{n}} [f(\vec x) h_{\vec J}(\vec x)]
\)
and express it uniquely as
\(
\sum_{{\vec J} \in \N^n} \hat{f}({\vec J}) h_{\vec J}(\vec x).
\)
For more details on the Gaussian space and Hermite Analysis (especially from
the theoretical computer science perspective), we refer the reader to
\cite{Don14}.  Most of the facts about Hermite polynomials that we use in this
work are well known properties and can be found, for example, in \cite{Sze67}.

We denote by $f^{[k]}(x)$ the degree $k$ part of the Hermite expansion of $f$,
$f^{[k]} (\vec x) = \sum_{|{\vec J}| = k} \hat{f}({\vec J})\cdot h_{\vec J}(\vec x)$.
We say that a polynomial $q$ is harmonic of degree $k$ if it is
a linear combination of degree $k$ Hermite polynomials, that is $q$ can be
written as
$$ q(\vec x) = q^{[k]}(\vec x) = \sum_{{\vec J}: |{\vec J}| = k} c_{\vec J} h_{\vec J}(\vec x) \;.
$$

We will use the following fact, stating that odd degree Hermite polynomials are small around the origin.
\begin{claim}[Upper Bound on Hermite Polynomial around the Origin]\label{fact:max_coeff}
    Let $\delta \in (0, 1/2)$ be 
    such that $\delta < k^{-C}$ for some sufficiently large constant $C$, and $h_{\vec a}$ be a degree-$k$ multivariate Hermite polynomial.
    We then have that
    $ h_{\vec a}( \delta \; \vec 1 ) < 1$.
\end{claim}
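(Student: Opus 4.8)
The plan is to reduce to one dimension using the product structure of multivariate Hermite polynomials and then control a single univariate factor via the explicit formula \eqref{eq:Hermite_formula}. Writing $\vec a = (a_1,\dots,a_d)$, we have $h_{\vec a}(\vec x) = \prod_{i=1}^d h_{a_i}(\vec x_i)$, so $h_{\vec a}(\delta \vec 1) = \prod_{i=1}^d h_{a_i}(\delta)$. Only the factors with $a_i \ge 1$ are nontrivial (since $h_0 \equiv 1$), each such index has $1 \le a_i \le k := |\vec a|$, and since $k \ge 1$ (the degenerate case $k = 0$ gives $h_{\vec a}\equiv 1$ and is excluded) at least one such factor is present. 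Hence it suffices to prove a uniform one-variable bound: there is an absolute constant $C$ and a constant $c \in (0,1)$ such that for every integer $n$ with $1 \le n \le k$ and every $\delta$ with $0 < \delta < \min(1/2, k^{-C})$ one has $|h_n(\delta)| \le 1-c$. Granting this, $|h_{\vec a}(\delta\vec 1)| = \prod_{i:\,a_i\ge 1}|h_{a_i}(\delta)| \le 1-c < 1$, since every factor is at most $1-c\le 1$ and at least one is present, which yields $h_{\vec a}(\delta\vec 1) < 1$.

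To prove the one-variable bound I would expand $h_n(\delta) = \sum_m c_m\,\delta^m$ via \eqref{eq:Hermite_formula}, where $m$ ranges over $n, n-2, n-4, \dots$ down to $m_0 := n \bmod 2 \in \{0,1\}$ and $c_m = \tfrac{(-1)^{(n-m)/2}\sqrt{n!}}{((n-m)/2)!\,m!\,2^{(n-m)/2}}$. A direct computation from this formula gives the coefficient ratio $|c_{m+2}|/|c_m| = \tfrac{n-m}{(m+1)(m+2)} \le n/2$, hence consecutive terms of the expansion satisfy $|c_{m+2}\delta^{m+2}| \le \tfrac{n\delta^2}{2}\,|c_m\delta^m|$. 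Since $n \le k$ and $\delta \le k^{-C}$, the factor $\theta := \tfrac{n\delta^2}{2} \le \tfrac12 k^{1-2C}$ can be made smaller than any fixed constant by taking $C$ large (and the tail is simply empty when $n = 1$), so the geometric series is dominated by its leading term: $|h_n(\delta)| \le \tfrac{1}{1-\theta}\,|c_{m_0}|\,\delta^{m_0}$. It remains to bound the leading coefficient. For $n$ even, $|c_0| = \sqrt{\tbinom{n}{n/2}2^{-n}} \le 1/\sqrt 2$, using the elementary inequality $\tbinom{n}{n/2} \le 2^{n-1}$ valid for even $n \ge 2$; so $|h_n(\delta)| \le \tfrac{1}{(1-\theta)\sqrt 2} < 1$. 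For $n$ odd, $|c_1| = \sqrt n\,\sqrt{\tbinom{n-1}{(n-1)/2}2^{-(n-1)}} \le \sqrt n \le \sqrt k$, and $\sqrt k\,\delta \le \sqrt k\,\min(1/2,k^{-C}) \le 1/2$ for $C$ large (the bound $\delta < 1/2$ covering the finitely many small $k$ and $\delta < k^{-C}$ the rest), so $|h_n(\delta)| \le \tfrac{\sqrt k\,\delta}{1-\theta} < 1$. In both cases $|h_n(\delta)| \le 1-c$ for a suitable absolute constant $c > 0$, completing the reduction.

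The argument is quantitatively slack and has no conceptual obstacle; the main work is bookkeeping. Two points need care: verifying the coefficient-ratio identity from \eqref{eq:Hermite_formula} and the resulting geometric decay of the expansion (noting separately the trivial case $n = 1$, where there is no tail and $h_1(\delta) = \delta$ directly), and fixing one absolute constant $C$ large enough that all the multiplicative slack keeps the final product strictly below $1$ uniformly over $k \ge 1$ — which forces one to treat small $k$ (leaning on $\delta < 1/2$) and large $k$ (leaning on $\delta < k^{-C}$) separately. The facts used about Hermite polynomials — the coefficient formula, $|c_0|^2 = \tbinom{n}{n/2}2^{-n}$, and $\tbinom{n}{n/2}\le 2^{n-1}$ for even $n \ge 2$ — are all elementary.
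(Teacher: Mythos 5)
Your proof is correct and follows essentially the same route as the paper's own (terse) argument: reduce to univariate factors via the product structure, then use the explicit formula \eqref{eq:Hermite_formula} to show each factor is dominated by its constant (even degree) or linear (odd degree) term once $\delta < k^{-C}$. You simply supply the quantitative bookkeeping the paper leaves as ``not hard to verify''---and your treatment of the odd case via $|c_1|\le\sqrt{k}$ together with $\sqrt{k}\,\delta\le 1/2$ is in fact the careful way to do it, since the linear coefficient itself can exceed $1$.
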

\begin{proof}
Consider a univariate Hermite polynomial $h_k: \R \mapsto \R$.
By the explicit formula of $h_k(\delta)$ in \Cref{eq:Hermite_formula}, it follows that the polynomial is dominated by its constant term (when $k$ is even) or its linear term (when $k$ is odd)
when $\delta$ is a sufficiently small polynomial in its degree $k$. 
It is not hard to verify that the coefficient of the constant term or linear term is smaller than $1$.
It then follows that $h_k(\delta) < 1$ when $\delta$ is a sufficiently small polynomial in $k$.
Since the multivariate Hermite polynomials are just products of many univariate Hermite polynomials, it follows that
$h_{\vec a}(  \delta \; \vec 1 ) < 1$.
\end{proof}

Another useful property of Hermite polynomials is that there exists a nice recurrence relationship between the polynomial itself and its derivative.
\begin{fact}
\label{fact:recurrence-derivative}
We have that $ \frac{d}{dx} h_k(x) =   k h_{k-1}(x) $.
\end{fact}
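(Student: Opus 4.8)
The plan is to prove the recurrence directly from the explicit series expansion of the normalized probabilist's Hermite polynomials recorded in \Cref{eq:Hermite_formula}, namely
\[
h_k(x) = \sqrt{k!}\sum_{j=0}^{\lfloor k/2\rfloor} \frac{(-1)^j}{j!\,(k-2j)!\,2^j}\, x^{k-2j}.
\]
First I would differentiate this expression term by term with respect to $x$. The term with $k-2j=0$ (which occurs only when $k$ is even, at $j=k/2$) is a constant and contributes nothing, and for every surviving index the derivative of $x^{k-2j}$ is $(k-2j)\,x^{k-2j-1}$; the factor $(k-2j)$ then cancels one factor of the denominator $(k-2j)!$, leaving $(k-1-2j)!$. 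After this cancellation the summation index $j$ runs over $0,\dots,\lfloor (k-1)/2\rfloor$ and the exponent of $x$ is $(k-1)-2j$, so what remains is — up to the scalar prefactor — exactly the expansion of $h_{k-1}$ obtained from \Cref{eq:Hermite_formula} with $k$ replaced by $k-1$. Matching the two series and collecting the normalization constants in front then yields the claimed identity $\frac{d}{dx} h_k(x) = k\, h_{k-1}(x)$.

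Two alternative routes are available as cross-checks. The first uses the generating function $\sum_{k\ge 0}\frac{h_k(x)}{\sqrt{k!}}\,t^k = e^{xt - t^2/2}$ (equivalently $\sum_{k\ge 0}\frac{\he_k(x)}{k!}\,t^k = e^{xt - t^2/2}$ for the unnormalized polynomials): differentiating both sides in $x$ gives $\sum_k\frac{h_k'(x)}{\sqrt{k!}}\,t^k = t\,e^{xt - t^2/2} = \sum_k\frac{h_{k-1}(x)}{\sqrt{(k-1)!}}\,t^k$, and equating the coefficient of $t^k$ reproduces the recurrence after simplifying the factorials. The second route starts from the standard three-term recurrence for Hermite polynomials and combines it with the raising/lowering operator identities; this is less self-contained given what the paper has already set up, so I would present the explicit-formula computation as the main argument.

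The step I expect to need the most care is the re-indexing bookkeeping in the term-by-term differentiation: one must verify that, after dropping the constant term, the upper summation limit genuinely becomes $\lfloor (k-1)/2\rfloor$, that the simplification $(k-2j)/(k-2j)! = 1/(k-1-2j)!$ is valid for every index that survives (in particular the boundary term when $k$ is odd), and that no sign or power-of-two factor is lost along the way. Once the indices are aligned, collecting the scalar prefactors of the two series into the single constant multiplying $h_{k-1}$ is routine arithmetic.
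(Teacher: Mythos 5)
Your method---term-by-term differentiation of the explicit expansion in \Cref{eq:Hermite_formula}---is a perfectly reasonable route, and the paper itself gives no proof of this fact (it is quoted as a standard property, with a pointer to \cite{Sze67}), so there is no competing argument to compare against at the level of technique. The problem is the last step, where you ``collect the normalization constants'': carried out honestly, your computation does not yield the stated identity. Differentiating \Cref{eq:Hermite_formula} and re-indexing exactly as you describe gives
\[
\frac{d}{dx}h_k(x) \;=\; \sqrt{k!}\sum_{j=0}^{\lfloor (k-1)/2\rfloor} \frac{(-1)^j}{j!\,(k-1-2j)!\,2^j}\,x^{(k-1)-2j}
\;=\; \frac{\sqrt{k!}}{\sqrt{(k-1)!}}\,h_{k-1}(x)\;=\;\sqrt{k}\,h_{k-1}(x),
\]
because the prefactor of $h_{k-1}$ in \Cref{eq:Hermite_formula} is $\sqrt{(k-1)!}$, not $\sqrt{k!}$. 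Your generating-function cross-check gives the same thing: equating coefficients of $t^k$ in $\sum_{k} h_k'(x)\,t^k/\sqrt{k!} = \sum_{k\ge 1} h_{k-1}(x)\,t^{k}/\sqrt{(k-1)!}$ produces $h_k'=\sqrt{k}\,h_{k-1}$, not $k\,h_{k-1}$. A direct sanity check confirms this: from \Cref{eq:Hermite_formula}, $h_2(x)=(x^2-1)/\sqrt{2}$, so $h_2'(x)=\sqrt{2}\,x=\sqrt{2}\,h_1(x)\neq 2h_1(x)$.

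In other words, the recurrence $\frac{d}{dx}\mathrm{He}_k = k\,\mathrm{He}_{k-1}$ holds for the \emph{unnormalized} probabilist's polynomials $\mathrm{He}_k=\sqrt{k!}\,h_k$, whereas for the normalized $h_k$ the paper defines, the correct statement is $h_k'(x)=\sqrt{k}\,h_{k-1}(x)$; the fact as printed carries a spurious factor of $\sqrt{k}$, and your proof cannot ``yield the claimed identity'' because your own series manipulation refutes it. The fix is either to prove the corrected statement (your computation already does, once the constants are tracked) or to state the fact for $\mathrm{He}_k$ and convert normalizations where it is used. Neither change harms the one downstream application, \Cref{clm:Hermite-lip}: replacing the factor $\vec a_i$ by $\sqrt{\vec a_i}$ only strengthens the bound $\|\nabla h_{\vec a}(\delta\,\vec 1)\|_2^2\le k^3$.
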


The fact implies the following bound on the Lipchitzness of the multivariate Hermite polynomials are around the origin.
\begin{claim}[Lipchitz continuity of Hermite Polynomials around the Origin]
\label{clm:Hermite-lip}
Let $k \in \mathbbm Z_+$, and $\delta \in (0, 1/2)$ be at most a sufficiently small polynomial in $k$, and $h_{\vec a}: \R^n \mapsto \R$ be a multivariate degree-$k$ Hermite polynomial.
Then it holds that
$$
\lp| h_{\vec a} \lp(  \delta \vec 1_n \rp) -  h_{\vec a} \lp(  \vec 0_n \rp) \rp| \leq \sqrt{ k^3 n } \delta.
$$
\end{claim}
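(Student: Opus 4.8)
The plan is to bound the difference by integrating the directional derivative of $h_{\vec a}$ along the segment joining $\vec 0_n$ to $\delta\,\vec 1_n$. Writing $\gamma(t) = t\delta\,\vec 1_n$ for $t \in [0,1]$, the fundamental theorem of calculus gives
\begin{align*}
h_{\vec a}(\delta\,\vec 1_n) - h_{\vec a}(\vec 0_n) = \int_0^1 \big\langle \nabla h_{\vec a}(t\delta\,\vec 1_n),\, \delta\,\vec 1_n \big\rangle \, \d t = \delta \int_0^1 \sum_{j=1}^n \frac{\partial}{\partial x_j} h_{\vec a}(t\delta\,\vec 1_n) \, \d t \, .
\end{align*}
So it suffices to show that $\big|\sum_{j=1}^n \partial_j h_{\vec a}(s\,\vec 1_n)\big| \le k$ for every $s \in (0,\delta]$, after which the integral representation immediately yields the claim.

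For the derivative computation, recall $h_{\vec a}(\vec x) = \prod_{i=1}^n h_{a_i}(x_i)$ and that $\frac{d}{dx} h_{a_j}(x) = a_j\, h_{a_j - 1}(x)$ by \Cref{fact:recurrence-derivative}. Hence $\partial_j h_{\vec a}(\vec x) = a_j\, h_{\vec a - \vec e_j}(\vec x)$, where $\vec a - \vec e_j$ denotes the multi-index obtained from $\vec a$ by decreasing its $j$-th entry by one (and the derivative is identically $0$ when $a_j = 0$). Since $h_{\vec a - \vec e_j}$ is again a product of normalized univariate Hermite polynomials, it is a normalized multivariate Hermite polynomial of total degree $|\vec a| - 1 = k-1$. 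For any $s \in (0,\delta]$, the coordinates of $s\,\vec 1_n$ are all at most $\delta$, which by hypothesis is at most a sufficiently small polynomial in $k$ (hence also in $k-1$), so \Cref{fact:max_coeff} applies to $h_{\vec a - \vec e_j}$ and gives $|h_{\vec a - \vec e_j}(s\,\vec 1_n)| < 1$. Therefore
\begin{align*}
\Big| \sum_{j=1}^n \frac{\partial}{\partial x_j} h_{\vec a}(s\,\vec 1_n) \Big| \le \sum_{j=1}^n a_j\, \big| h_{\vec a - \vec e_j}(s\,\vec 1_n) \big| < \sum_{j=1}^n a_j = |\vec a| = k \, .
\end{align*}

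Plugging this bound into the integral representation gives $|h_{\vec a}(\delta\,\vec 1_n) - h_{\vec a}(\vec 0_n)| \le \delta k$, and since $k \le \sqrt{k^3 n}$ (equivalently $kn \ge 1$), the claimed bound $\sqrt{k^3 n}\,\delta$ follows, in fact with substantial room to spare. The argument is essentially routine; the only points requiring care are verifying that $h_{\vec a - \vec e_j}$ is still a genuine normalized multivariate Hermite polynomial so that \Cref{fact:max_coeff} applies verbatim, and checking that the ``sufficiently small polynomial in $k$'' hypothesis survives the drop in degree from $k$ to $k-1$ (together with the degenerate case $k=1$, where $h_{\vec a}$ is linear and the estimate is immediate). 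I do not anticipate any serious obstacle.
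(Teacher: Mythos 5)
Your proof is correct and follows essentially the same route as the paper's: compute $\partial_j h_{\vec a} = a_j\, h_{\vec a - \vec e(j)}$ via \Cref{fact:recurrence-derivative}, bound these lower-degree Hermite polynomials near the origin by \Cref{fact:max_coeff}, and control the increment along the segment from $\vec 0_n$ to $\delta \vec 1_n$. Your direct bound $\sum_j a_j = k$ on the directional derivative is in fact slightly sharper than the paper's Cauchy--Schwarz step $\lp\|\nabla h_{\vec a}\rp\|_2 \lp\|\delta \vec 1_n\rp\|_2 \leq \sqrt{k^3 n}\,\delta$, yielding $k\delta$ instead of $\sqrt{k^3 n}\,\delta$, which of course still implies the stated claim.
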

\begin{proof}
For notational convenience, 
we define $h_{ \vec b }(\vec x)$ to be the constant $0$ function when $\vec b$ contains any negative entries.

Under this notation, 
\Cref{fact:recurrence-derivative} then implies that the gradient of $h_{\vec a}$ is simply
\begin{align*}
\nabla h_{\vec a} (\vec x)
= \lp(  \vec a_1 h_{\vec a - \vec e(1)}(\vec x), \cdots,  \vec a_n h_{\vec a - \vec e(n)}(\vec x)
\rp)^\top,
\end{align*}
where $e(i)$ is the multi-index having zeroes everywhere except from the $i$-th position, where it has $1$.
Combining the above with \Cref{fact:max_coeff} then gives that
\begin{align*}
\lp \| \nabla h_{\vec a} (\delta  \vec 1) \rp\|_2^2
&\leq
\sum_{i=1}^n
k^2 h_{ \vec a - \vec e(i) }^2( \delta \vec 1 ) \\
&= 
\sum_{i: \vec a_i > 0}
k^2 h_{ \vec a - \vec e(i) }( \delta \vec 1 )^2 \\
&\leq  k^3
\, ,
\end{align*}
where the last inequality follows from \Cref{fact:max_coeff} and the fact that there can be at most $k$ positive entries in $\vec a$.
We therefore have that
\begin{align*}
\lp| h_{\vec a} \lp(  \delta \vec 1_n \rp) -  h_{\vec a} \lp(  \vec 0_n \rp) \rp|
\leq 
\| \nabla h_{\vec a}(\delta \vec 1)  \|_2 
\| \delta \vec 1 \|_2
\leq \sqrt{ k^3 n } \delta.
\end{align*}
This concludes the proof of \Cref{clm:Hermite-lip}.
\end{proof}

\subsection{Other Facts}\label{ap:sign}

\begin{fact}[Isserlis's Theorem]\label{fact:iserrlis}
	Let $(x_1,\ldots, x_k) \sim \normal(0,\vec \Sigma)$. Then, 
	\begin{align*}
		\E[x_1 \cdots x_k] = \sum_{p \in P_k^2} \prod_{\{i,j\} \in p} \E[x_i x_j] \;,
	\end{align*}
	where $P_k^2$ is the set of all matchings of $\{1,\ldots,k\}$.
\end{fact}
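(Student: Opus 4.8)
The plan is to prove the identity by induction on $k$, using Gaussian integration by parts (Stein's identity) as the engine.

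First I would dispose of the trivial parity case: if $k$ is odd, the map $x\mapsto -x$ preserves $\normal(\vec 0,\vec\Sigma)$, so $\E[x_1\cdots x_k]=-\E[x_1\cdots x_k]=0$, and $P_k^2=\emptyset$, so the right-hand side is the empty sum $0$. Hence assume $k$ is even, with base case $k=2$, where the statement reads $\E[x_1x_2]=\Sigma_{12}$ with the unique matching $\{\{1,2\}\}$. For the inductive step I would invoke the standard multivariate Stein identity: for $(x_1,\dots,x_k)\sim\normal(\vec 0,\vec\Sigma)$ and any polynomial $g$,
\[
\E[x_k\, g(x)] \;=\; \sum_{j=1}^{k}\Sigma_{kj}\,\E[\partial_j g(x)].
\]
This is just integration by parts against the Gaussian density (using that $x\,p(x)=-\vec\Sigma\,\nabla p(x)$ and that all moments are finite, so boundary terms vanish); if $\vec\Sigma$ is singular one reduces to the nondegenerate case by noting that both sides of Isserlis's identity are polynomials in the entries of $\vec\Sigma$ agreeing on the dense set of positive-definite matrices.

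Applying the identity with $g(x)=x_1\cdots x_{k-1}$ gives $\partial_k g=0$ and $\partial_j g=\prod_{\ell\in[k-1]\setminus\{j\}}x_\ell$ for $j\le k-1$, hence
\[
\E[x_1\cdots x_k] \;=\; \sum_{j=1}^{k-1}\Sigma_{kj}\;\E\Big[\textstyle\prod_{\ell\in[k-1]\setminus\{j\}}x_\ell\Big].
\]
Each expectation on the right is a product of $k-2$ coordinates, so by the induction hypothesis it equals $\sum_{p'}\prod_{\{a,b\}\in p'}\Sigma_{ab}$, the sum over matchings $p'$ of $[k-1]\setminus\{j\}$. The argument then closes by the obvious bijection: a matching $p$ of $[k]$ is precisely the choice of the partner $j\in[k-1]$ of the element $k$ together with a matching $p'$ of $[k-1]\setminus\{j\}$, and under this correspondence $\prod_{\{a,b\}\in p}\Sigma_{ab}=\Sigma_{kj}\prod_{\{a,b\}\in p'}\Sigma_{ab}$; substituting collapses the double sum to $\sum_{p\in P_k^2}\prod_{\{a,b\}\in p}\Sigma_{ab}$.

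The only genuinely non-routine point is justifying the Stein identity (and the singular-covariance reduction); the rest is bookkeeping on matchings. An essentially equivalent alternative, should one prefer an analytic route, is to expand the characteristic function $\E[e^{i\langle t,x\rangle}]=e^{-\frac12 t^\top\vec\Sigma t}$ as a power series and take $\partial_{t_1}\cdots\partial_{t_k}$ at $t=\vec 0$: only the $(k/2)$-th term of the exponential series contributes, and the coefficient of $t_1\cdots t_k$ in $\big(\sum_{a,b}\Sigma_{ab}t_at_b\big)^{k/2}$ is $2^{k/2}(k/2)!\sum_{p\in P_k^2}\prod_{\{a,b\}\in p}\Sigma_{ab}$, which exactly cancels the series prefactor $\tfrac{(-1)^{k/2}}{2^{k/2}(k/2)!}$ together with $(-i)^k$.
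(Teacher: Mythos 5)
Your proof is correct. Note, though, that the paper states Isserlis's theorem as a \emph{Fact} without supplying a proof of its own — it is invoked as a classical result (Wick's theorem) — so there is no in-paper argument to compare against. Both routes you give, the Stein-identity induction and the characteristic-function expansion, are standard textbook derivations; the small points you flag (handling the singular-covariance case by polynomial density in $\vec\Sigma$, and checking the cancellation of $(-i)^k$ against the $\tfrac{(-1)^{k/2}}{2^{k/2}(k/2)!}$ series prefactor) are exactly the places where such a proof needs care, and you treat them correctly.
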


\begin{lemma}
\label{lem:smooth-function-construct}
There exists a smooth function $\rho: \R \mapsto [0, 1]$ satisfying that (1) $\rho(x) = 1$ if  $|x| < 1 $ \, ,
(2) $\rho(x) = 0$ if $|x| \geq 3$, and $\| \rho^{(t)}(x) \|_{\infty} \leq t^{O(t)}$.
\end{lemma}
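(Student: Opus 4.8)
The plan is to build $\rho$ as the convolution of an indicator with a compactly supported smooth bump whose derivatives grow at rate $t^{O(t)}$, the bump itself being obtained as an infinite convolution of scaled uniform densities. Concretely, set $\epsilon_n := \tfrac{6}{\pi^2 n^2}$ so that $\sum_{n\ge 1}\epsilon_n = 1$, let $\phi_n := \tfrac{1}{2\epsilon_n}\mathbf{1}_{[-\epsilon_n,\epsilon_n]}$ be the uniform density on $[-\epsilon_n,\epsilon_n]$, and let $\Phi$ be the density of $X := \sum_{n\ge 1} U_n$, where the $U_n \sim \mathcal U([-\epsilon_n,\epsilon_n])$ are independent. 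Since $\sum_n \mathrm{Var}(U_n) = \sum_n \epsilon_n^2/3 < \infty$, the series converges almost surely, so $X$ is a well-defined random variable supported in $[-1,1]$; and since $\phi_1 * \phi_2$ already has a bounded density, $X$ has a density $\Phi$, which is a probability density supported on $[-1,1]$. I will then take $\rho := \mathbf{1}_{[-2,2]} * \Phi$, equivalently $\rho(x) = \Pr[\,|X-x|\le 2\,]$.

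The three required properties follow quickly. First, $\rho$ takes values in $[0,1]$ because it is a probability. Second, since $|X|\le 1$ always: for $|x|\le 1$ we have $|X-x|\le 2$ surely, so $\rho(x)=1$; for $|x|\ge 3$ we have $|X-x|\ge |x|-|X|\ge 2$, with equality only on the event $\{|X|=1\}$, which has probability zero because $X$ has a density, hence $\rho(x)=0$. Third, for the smoothness and the derivative bound I will show $\rho\in C^\infty$ with $\rho^{(t)} = \mathbf{1}_{[-2,2]} * \Phi^{(t)}$, so that $\|\rho^{(t)}\|_\infty \le \|\mathbf{1}_{[-2,2]}\|_1 \,\|\Phi^{(t)}\|_\infty = 4\,\|\Phi^{(t)}\|_\infty$; it therefore suffices to prove $\Phi\in C^\infty$ with $\|\Phi^{(t)}\|_\infty \le t^{O(t)}$.

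For the bound on $\Phi^{(t)}$, fix $t$ and factor $\Phi = \Psi * \Theta$, where $\Psi := \phi_1 * \cdots * \phi_{t+2}$ and $\Theta := \ast_{n>t+2}\phi_n$ is again a probability density. A convolution of $t+2$ indicator densities is $C^t$, so $\Psi\in C^t$, and $\Psi^{(t)} = \phi_1' * \cdots * \phi_t' * \phi_{t+1} * \phi_{t+2}$ with each $\phi_j' = \tfrac{1}{2\epsilon_j}(\delta_{-\epsilon_j}-\delta_{\epsilon_j})$ a signed measure of total variation $1/\epsilon_j$. Hence $\|\Psi^{(t)}\|_\infty \le \big(\prod_{j=1}^t 1/\epsilon_j\big)\,\|\phi_{t+1}*\phi_{t+2}\|_\infty \le \big(\prod_{j=1}^t 1/\epsilon_j\big)\tfrac{1}{2\epsilon_{t+1}}$; plugging in $\epsilon_j = 6/(\pi^2 j^2)$ gives $\prod_{j=1}^t 1/\epsilon_j = (\pi^2/6)^t (t!)^2 \le t^{O(t)}$ and $1/(2\epsilon_{t+1}) = \pi^2(t+1)^2/12 \le t^{O(1)}$, so $\|\Psi^{(t)}\|_\infty \le t^{O(t)}$. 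Since $\Theta$ is a probability density, $\|\Phi^{(t)}\|_\infty = \|\Psi^{(t)}*\Theta\|_\infty \le \|\Psi^{(t)}\|_\infty \le t^{O(t)}$, and as this holds for every $t$ we get $\Phi\in C^\infty$. (An alternative would be the classical $e^{-1/x}$-type bump, but tracking its derivatives down to the sharp $t^{O(t)}$ rate is messier than the convolution bookkeeping above.) The only place requiring genuine care is this last estimate: checking that each differentiation contributes exactly one factor $1/\epsilon_j$, so that $t$ derivatives cost $\prod_{j\le t}1/\epsilon_j = (t!)^2(\pi^2/6)^t = t^{O(t)}$; the remaining steps are routine.
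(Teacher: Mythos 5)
Your proof is correct, and it takes a genuinely different route from the paper. The paper constructs $\rho$ from the classical bump $\rho_0(x)=\exp(-1/x^2)\mathbf{1}_{x<0}$ by products and integration, and it \emph{asserts} without verification that each of these manipulations preserves the $t^{O(t)}$ derivative bound; the hardest step there, bounding $\|\rho_0^{(t)}\|_\infty$ by $t^{O(t)}$, is simply stated as an evident feature of the construction. You instead build the bump as the density $\Phi$ of $\sum_n U_n$ with $U_n \sim \mathcal U([-\epsilon_n,\epsilon_n])$, $\epsilon_n = 6/(\pi^2 n^2)$, and take $\rho = \mathbf{1}_{[-2,2]}*\Phi$. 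This trades the familiar but opaque analytic bump for a probabilistic/convolutional one where the derivative bookkeeping is explicit: peeling off $\phi_j'$ as a signed measure of total variation $1/\epsilon_j$ gives $\|\Phi^{(t)}\|_\infty \le (\pi^2/6)^t (t!)^2 \cdot O(t^2) = t^{O(t)}$ with no estimates on $e^{-1/x^2}$ needed. The properties $\rho\equiv 1$ on $[-1,1]$ and $\rho\equiv 0$ on $|x|\ge 3$ fall out of the probabilistic interpretation $\rho(x)=\Pr[|X-x|\le 2]$ with $|X|\le 1$ a.s., which you handle correctly including the measure-zero boundary case. What the paper's approach buys is brevity and familiarity; what yours buys is a self-contained, fully verified proof of the growth rate, which is arguably the only nontrivial part of the lemma. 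One minor point to tighten when writing it up: the identity $\rho^{(t)} = \mathbf{1}_{[-2,2]}*\Phi^{(t)}$ and the smoothness of $\Phi$ deserve a sentence justifying the interchange of differentiation with the infinite convolution (e.g.\ uniform convergence of the truncated convolutions together with your derivative bound at each order), but this is routine.
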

\begin{proof}
Within the context of this proof, we call a function smooth if its $t$-th order derivative is bounded from above by $t^{O(t)}$.
First, consider the function $f: \R \mapsto \R$ defined as
\begin{align*}
\rho_0(x) := \begin{cases}
    &0 \text{ if } x \geq 0  \\
    &\exp(-1/x^2) \text{ otherwise}.
\end{cases}    
\end{align*}
Then we have $\rho_0$ is a smooth function, and $\rho_0(x) = 0$ for all $x \geq 0$.
Next, define $\rho_1(x) := \rho_0(-1+x)\rho_0(-1-x)$.
$\rho_1$ is still a smooth function, and we have $\rho_1(x) = 0$ if $|x| \geq 1$, and $\rho_1(x) > 0$ if $|x| < 1$.
We can define a probability distribution supported on $[-1, 1]$ whose probability density function is exactly proportional to $\rho_1$.
Denote by $\rho_2$ the cumulative density function of this probability distribution.
$\rho_2$ remains a smooth function.
Furthermore, we have that $\rho_2(x) = 0$ if $x < -1$, and
$\rho_2(x) = 1$ if $x > 1$.
Finally, define $\rho(x) = \rho_2(2+x) \; \rho_2(2-x)$.
$\rho$ is still a smooth function, and it satisfies that 
$\rho(x) = 1$ if $|x| < 1$, and $\rho(x) = 0$ if $|x| > 3$.
This concludes the proof of \Cref{lem:smooth-function-construct}.
\end{proof}

\section{Separation between PTF Tests and LDP Tests}\label{sec:comparison}
In this section, we show that having no $\gamma$-advantageous polynomials in the sense of \Cref{def:good-ld} for a testing problem  does not necessarily rule out the existence of a good PTF test in the sense of \Cref{def:goodptf}.
In this section, we will work with  hypothesis testing where the family of distributions for the alternative hypothesis consists of only one distribution. We restate the simplified version of $\gamma$-advantageous for simple hypothesis testing below:

\begin{restatable}[$\gamma$-advantageous polynomial]{definition}{GOODPOLYNOMIALSIMPLE}\label{def:good-ld-simple}
Let $\gamma > 0$, $p: \R^{ n \times d } \mapsto \R$ be a degree-$k$, $n$-sample polynomial, and $D_{\emptyset}$ be a distribution in $\R^d$, $\cD_{\alt}$ be a distribution family in $\R^d$ and $\mu$ be the uniform distribution over $\cD_{\alt}$. We say that $p$ is a degree-$k$, $n$-sample, $\gamma$-advantageous polynomial with respect to $D_{\emptyset},\cD_{\alt}$ if:
\begin{align}
&\lp|
\E_{\multix \sim D_{\emptyset}}
\lp[  p(\vec x^{(1)}, \cdots ,\vec x^{(n)}) \rp]
- 
 \E_{ \multiy \sim D_{\alt}}
\lp[  p(\vec y^{(1)}, \cdots, \vec y^{(n}) \rp]
\rp| \notag\\
&> 
\gamma \; \max \lp( 
\sqrt{ \Var_{\multix \sim D_{\emptyset}}\lp[ 
p(\vec x^{(1)}, \cdots, \vec x^{(n)})
\rp] } 
\, , \, 
\sqrt{ \Var_{ \multiy \sim D_{\alt}}\lp[ 
p(\vec y^{(1)}, \cdots \vec y^{(n)})
\rp] } 
\rp).    \label{eq:rhs}
\end{align}
\end{restatable}

We show that even in one-dimension there exist distributions $D_{\emptyset},D_{\alt}$ for which there is no low-degree $\gamma$-advantageous polynomial, but the hypothesis problem can be easily solved with a polynomial threshold test of degree $k=1$ using $n=1$ sample.
The definition of the testing problem, and the statement of the claim is given below.
\begin{definition}[$\delta$-Gap Threshold Test under $\eps$-Gaussian Noise]\label{def:example}
Let $\delta,\eps \in (0, 1)$.
We consider the hypothesis testing problem of distinguishing between the two distributions $D_{\emptyset}$ and $D_{\alt}$ in $\R$ defined as follows:
(1) $D_{\emptyset} := (1 - \eps) p_{0} + \eps \normal(0, 1)$, where $p_{0}$ is a point mass on $0$, and
(2) $D_{\alt} := (1 - \eps) p_{\delta} + \eps \normal(0, 1)$, where $p_{\delta}$ is a point mass on $\delta$.
\end{definition}

\begin{theorem}
\label{thm:separation}
Let $\eps, \gamma, \delta \in (0, 1/2)$, and $n, k \in \mathbbm Z_+$.
Assume that 
$\sqrt{k^n \; \eps^{-n} \; k^3 n } \; \delta 
    \leq \gamma$.
Then there is no degree-$k$, $n$-sample, $\gamma$-advantageous polynomial with respect to $D_{\emptyset},D_{\alt}$ from \Cref{def:example}.
However, the $1$-sample linear threshold function $\mathbbm 1\{x>\delta/2\}$ distinguishes between $D_{\emptyset},D_{\alt}$ with probability at least $1 - \eps$.
\end{theorem}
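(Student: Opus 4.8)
The plan is to establish the two assertions in turn; the second is immediate, and the first rests on a Hermite-analytic estimate supplied by the appendix. For the PTF test: under $D_{\emptyset}$ we draw $x=0<\delta/2$ with probability $1-\eps$ (and $x\sim\normal(0,1)$ otherwise), so $\mathbbm 1\{x>\delta/2\}$ outputs $0$ with probability at least $1-\eps$; under $D_{\alt}$ we draw $x=\delta>\delta/2$ with probability $1-\eps$, so it outputs $1$ with probability at least $1-\eps$. Hence the $1$-sample linear threshold function $\mathbbm 1\{x>\delta/2\}$ distinguishes $D_{\emptyset}$ from $D_{\alt}$ with probability at least $1-\eps$.

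For the non-existence of a $\gamma$-advantageous polynomial, fix a degree-$k$ polynomial $p:\R^n\to\R$; we will show $\big|\E_{D_{\emptyset}^{\otimes n}}[p]-\E_{D_{\alt}^{\otimes n}}[p]\big|\le\gamma\max\big(\sqrt{\Var_{D_{\emptyset}^{\otimes n}}[p]},\sqrt{\Var_{D_{\alt}^{\otimes n}}[p]}\big)$, so that the strict inequality in \Cref{eq:rhs} cannot hold. We may subtract a constant so that the constant Hermite coefficient $\hat p(\vec 0)$ vanishes (changing neither side of \Cref{eq:rhs}), and if $\Var_{\normal(\vec 0,\vec I)}[p]=\sum_{\vec J}\hat p(\vec J)^2>0$ we rescale so this equals $1$; if it is $0$ then $p$ is constant and both sides vanish. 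First I would lower-bound the variances. Both $D_{\emptyset}^{\otimes n}$ and $D_{\alt}^{\otimes n}$ are mixtures that, with probability exactly $\eps^n$, coincide with $\normal(\vec 0,\vec I_n)$ (the event that all $n$ samples are drawn from the $\normal(0,1)$ component), so for $D\in\{D_{\emptyset}^{\otimes n},D_{\alt}^{\otimes n}\}$,
\begin{align*}
\Var_D[p]=\E_D\big[(p-\E_D[p])^2\big]\ge\eps^n\,\E_{\normal(\vec 0,\vec I_n)}\big[(p-\E_D[p])^2\big]\ge\eps^n\,\Var_{\normal(\vec 0,\vec I_n)}[p]=\eps^n,
\end{align*}
so the right-hand side above is at least $\gamma\,\eps^{n/2}$.

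It then remains to bound the mean difference. Conditioning on the set $S\subseteq[n]$ of coordinates drawn from the Gaussian component (identically distributed under both hypotheses: the coordinates in $S$ are \iid $\normal(0,1)$, while those outside $S$ are deterministically $\vec 0$ under $D_{\emptyset}$ and $\delta\vec 1$ under $D_{\alt}$), we obtain $\E_{D_{\emptyset}^{\otimes n}}[p]-\E_{D_{\alt}^{\otimes n}}[p]=\E_S\big[q_S(\vec 0)-q_S(\delta\vec 1_{S^c})\big]$, where $q_S(\vec y):=\E[p(\vec g,\vec y)]$ and $\vec g$ is a standard Gaussian on the coordinates in $S$. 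Expanding $p$ in the normalized Hermite basis and using $\E_{\vec g}[h_{\vec J_S}(\vec g)]=\mathbbm 1\{\vec J_S=\vec 0\}$ yields $q_S=\sum_{\vec K}\hat p((\vec 0_S,\vec K))\,h_{\vec K}$, a degree-$\le k$ Hermite polynomial in the $|S^c|\le n$ planted coordinates whose coefficient vector has squared norm at most $\sum_{\vec J}\hat p(\vec J)^2=1$. Applying \Cref{clm:Hermite-lip} term by term (legitimate because the hypothesis forces $\delta\le\tfrac{1}{2k^2}$, a sufficiently small polynomial in $k$) and then Cauchy--Schwarz over the at most $k^n$ relevant Hermite coefficients,
\begin{align*}
\big|q_S(\vec 0)-q_S(\delta\vec 1_{S^c})\big|\le\sqrt{k^3 n}\,\delta\sum_{\vec K}\big|\hat p((\vec 0_S,\vec K))\big|\le\sqrt{k^3 n}\,\delta\cdot\sqrt{k^n}=\sqrt{k^{n+3}n}\,\delta.
\end{align*}
Averaging over $S$ gives $\big|\E_{D_{\emptyset}^{\otimes n}}[p]-\E_{D_{\alt}^{\otimes n}}[p]\big|\le\sqrt{k^{n+3}n}\,\delta$, and the hypothesis $\sqrt{k^n\eps^{-n}k^3 n}\,\delta\le\gamma$ rearranges to exactly $\sqrt{k^{n+3}n}\,\delta\le\gamma\,\eps^{n/2}$; combining with the variance bound yields the claim.

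The main obstacle is the middle step: one must observe that Gaussian-averaging over the coordinates in $S$ annihilates every Hermite component of $p$ carrying a nonzero index on $S$, collapsing $p$ to a low-complexity polynomial in the remaining (planted) coordinates whose values at the nearby points $\vec 0$ and $\delta\vec 1$ differ by only $\poly(k,n)\cdot\delta$, via the near-origin smoothness of Hermite polynomials (\Cref{clm:Hermite-lip}, itself built on \Cref{fact:max_coeff} and \Cref{fact:recurrence-derivative}). The complementary and essentially free observation — that the variance under either hypothesis automatically dominates $\eps^n$ times the Gaussian variance, since each product distribution contains an $\eps^n$-weighted copy of $\normal(\vec 0,\vec I_n)$ — is what allows the minuscule shift $\delta$ to beat the $\gamma$-advantage threshold, even though a $1$-sample threshold test sees it plainly.
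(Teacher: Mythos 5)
Your proof is correct and takes essentially the same approach as the paper: the second claim is immediate, and the first is shown by conditioning on the subset $S$ of coordinates drawn from the Gaussian component, Hermite-expanding, applying the near-origin Lipschitz bound \Cref{clm:Hermite-lip}, and using Cauchy--Schwarz over the $\leq k^n$ relevant coefficients. The only (cosmetic) difference is the normalization: you scale so $\Var_{\normal(\vec 0,\vec I)}[p]=1$ and separately lower-bound both hypothesis variances by $\eps^n$, while the paper scales so $\Var_{D_\emptyset^{\otimes n}}[p]=1$ and folds the $\eps^{-n}$ into the bound $\E_{\normal}[p^2]\le\eps^{-n}$ before invoking its \Cref{clm:poly-lip-bound}; after rearranging, the two routes produce the identical inequality $\sqrt{k^n\eps^{-n}k^3 n}\,\delta\le\gamma$.
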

\begin{proof}
The fact that the linear threshold function test $\mathbbm 1\{x > \delta / 2\}$ distinguishes between $D_{\emptyset},D_{\alt}$ with probability $1 - \eps$ is immediate by the definition of the problem.

We will use the notation $D_{\emptyset}^{\otimes n}$ and $D_{\alt}^{\otimes n}$ to denote the product distribution of $n$ samples under the two hypotheses.
Now, consider a degree-$k$ polynomial $p:\R^{n} \mapsto \R$.
Since \Cref{eq:rhs} is invariant to shifting and scaling of the polynomial, we can assume without loss of generality that 
$  \Var_{ \vec x \sim \lp( D_{\emptyset} \rp)^{\otimes n} }
\lp [ p(\vec x) \rp] = 1 $ and $\E_{\vec x \sim  \lp( D_{\emptyset}\rp)^{\otimes n}}[p(\vec x)] = 0$.

Under this assumption, note that we can bound from above the 
$L_2$-norm of $p$ under the standard Gaussian distribution as
\begin{align*}
\E_{ \vec x \sim \normal(\vec 0, \vec I) }
\lp[  p^2(\vec x) \rp]    
&\leq
\eps^{-n}
\E_{ \vec x \sim \lp( D_{\emptyset} \rp)^{\otimes n} }
\lp[  p^2(\vec x) \rp]  \\
&= 
\eps^{-n}
\Var_{ \vec x \sim \lp( D_{\emptyset} \rp)^{\otimes n} }
\lp[  p(\vec x) \rp] = \eps^{-n} \, ,
\end{align*}
where the first inequality follows from the fact that 
$\vec x$ will be sampled from the $n$-dimensional standard Gaussian distribution with probability $\eps^{n}$ by the definition of $D_{\emptyset}$, and the last two equalities follow from our assumptions on $\Var_{ \vec x \sim \lp( D_{\emptyset} \rp)^{\otimes n} }
\lp [ p(\vec x) \rp]$ and $\E_{\vec x \sim  \lp( D_{\emptyset}\rp)^{\otimes n}}[p(\vec x)]$.

It then suffices for us to show that
\begin{align}
\label{eq:mean-bound}
\lp|
\E_{ \vec x  \sim \lp( D_{\emptyset} \rp)^{\otimes n}  }
\lp[  p(\vec x) \rp]
- 
\E_{\vec y \sim \lp( D_{\alt}\rp)^{\otimes n} }
\lp[  p( \vec y ) \rp]
\rp| \leq \gamma \, ,
\end{align}
for any polynomial $p$ satisfying $\E_{ \vec x \sim \normal(\vec 0, \vec I) } \lp[ p^2(\vec x) \rp] \leq \eps^{-n}$.
We first prove a structural claim.
\begin{claim}
\label{clm:poly-lip-bound}
Let $m \leq n$, and $q: \R^m \mapsto \R$ be an arbitrary degree-$k$
polynomial such that
$ \E_{ \vec x \sim \normal(\vec 0, \vec I) }
\lp[ q^2( \vec x )\rp] \leq \eps^{-n}.
$
Then it holds that
\begin{align*}
\lp|  q( \delta \vec 1_m ) - q(\vec 0_m) \rp| \leq \gamma.
\end{align*}
\end{claim}
\begin{proof}
Assume that $q(\vec x)$ admits the Hermite decomposition
$ q(\vec x) = \sum_{ \vec a \in \mathbbm N^n: |\vec a| \leq k } c_{\vec a} h_{\vec a}(\vec x)$ (see \Cref{app:Hermite_polynomials} for definitions and notation regarding Hermite polynomials).
By our assumption that $\E_{ \vec x \sim \normal(\vec 0, \vec I) }
\lp[ q^2( \vec x )\rp] \leq \eps^{-n}$, the coefficients in the Hermite decomposition should satisfy that $ \sum_{ \vec a \in \mathbbm N^n: |\vec a| \leq k } c_{\vec a}^2 \leq \eps^{-n}$.
Using \Cref{clm:Hermite-lip} and the assumption that $\delta$ is a sufficiently small polynomial in $k$, we have that 
$h_{\vec a}( \delta \vec 1_m )- h_{\vec a}(\vec 0_m)\leq \sqrt{ k^3 m }  \delta$ for all $\vec a \in \mathbbm N^n: |\vec a| \leq k$.
It follows that
\begin{align*}
    |q( \delta \vec 1_m ) - p( \delta \vec 0_m )|
    &\leq \sum_{\vec a \in \mathbbm N^n: |\vec a| \leq k } |c_{\vec a}| \left| h_{\vec a}( \delta \vec 1_m )- h_{\vec a}(\vec 0_m) \right| \\
    &\leq 
    \sum_{\vec a \in \mathbbm N^n: |\vec a| \leq k } |c_{\vec a}| \sqrt{k^3 m } \; \delta  \\
    &\leq \sqrt{k^n \; \eps^{-n} \; k^3 m } \; \delta 
    \leq \gamma
\end{align*}
where the first inequality is by the triangle inequality,
the second inequality is by the bound on $h_{\vec a}(\delta \vec 1_m) - h_{\vec a}(\vec 0_m)$, the third inequality is by Cauchy's inequality and the bound 
$\sum_{ \vec a \in \mathbbm N^n: |\vec a| \leq k } c_{\vec a}^2 \leq \eps^{-n}$, and the last inequality is by our assumption on $k,n,\delta,\gamma,\eps$.
This concludes the proof of \Cref{clm:poly-lip-bound}.
\end{proof}

Given two vectors $\vec y \in \R^m$, $\vec z \in \R^{n - m}$, and a subset of indices $S \subseteq [n]$ with cardinality $|S| = m$, 
 we define $ \vec x:= \vec y \cup_S \vec z $ as the vector that has $\vec x_i = \vec y_i$ if $i \in S$ and $\vec x_i = \vec z_i$ otherwise.
Given an arbitrary subset of indices $S \subseteq [n]$, we can then define the function
$q_S: \R^{n - |S|} \mapsto \R$ as
$$
q_S( \vec y ) :=
\E_{ \vec z \sim \normal(\vec 0, \vec I_{|S|}) }
\lp[ p \lp(  \vec z \cup_S \vec y  \rp) \rp].
$$
It is not hard to see that $q_S$ is a degree at most $k$  polynomial in $\vec y$ 
satisfying 
$\E_{ \vec y \sim \normal(\vec 0,  \vec I_{ n-|S| }) }
\lp[ 
q^2_S(\vec y)
\rp] = 
\E_{ \vec x \sim \normal(\vec 0, \vec I_{ n }) }
\lp[ 
p^2(\vec x)
\rp] $, which is at most $\eps^{-n}$.

Note that we can decompose the difference in \Cref{eq:mean-bound} by conditioning on different subsets of samples that are sampled from the Gaussian distribution. 
In particular, we have that
\begin{align*}
\lp|
\E_{ \vec x  \sim \lp( D_{\emptyset} \rp)^{\otimes n}  }
\lp[  p(\vec x) \rp]
- 
\E_{\vec y \sim \lp( D_{\alt}\rp)^{\otimes n} }
\lp[  p( \vec y ) \rp]
\rp| 
&\leq
\max_{ S \subset [n] }
\lp| 
\E_{ \vec z \sim \normal(\vec 0, \vec I_{|S|}) }
\lp[
p\lp( \vec z \cup_S \lp( \delta \vec 1_{n - |S|} \rp)   \rp)
- 
p\lp( \vec z \cup_S \vec 0_{n - |S|}  \rp)
\rp]
\rp|\\
&=
\max_{ S \subset [n] }
\lp| 
q_S( \delta \vec 1_{n - |S|} ) - q_S( \vec 0_{n - |S|} )
\rp| \leq \gamma \, ,
\end{align*}
where the equality is by the definition of $q_S$, and the last inequality follows from \Cref{clm:poly-lip-bound}.
This shows \Cref{eq:mean-bound}, and concludes the proof of 
\Cref{thm:separation}.
\end{proof}

\section{(Near-)Optimality of the Sample Lower Bound in \Cref{thm:main}}

\begin{theorem}
\label{thm:optimality}
For any $d,m \in \mathbbm Z_+$, 
there exists a univariate distribution $A$ on $\R$ that matches $m$ moments with $\normal(0, 1)$ so that for 
$n \gg  \lp( C \; d \; m \; \log d\rp)^{m/4}$, 
where $C$ is a sufficiently large constant,
and $k > 4  \log n $, 
there exists a degree-$k$ polynomial $p: \R^{n \times d} \mapsto \{0, 1\}$  that successfully distinguishes $\normal(\vec 0, \vec I)$ and $\ngcav$ 
for any unit vector $\vec v \in \R^d$
$\sgn(p(\cdot))$ with constant probability:
$
\E_{ \vec x^{(1)}, \cdots, \vec x^{(n)} \sim \normal(0, 1) }\lp[ \sgn \lp( p \lp(  \vec x^{1:n} \rp) \rp)  \rp] < 1/10$ but $\E_{\vec y^{(1)}, \cdots, \vec y^{(n)} \sim \ngcav }\lp[ 
\sgn\lp( p \lp(  \vec y^{1:n} \rp) \rp)   \rp] > 9/10$ for all unit vector $\vec v \in \R^d$.
\end{theorem}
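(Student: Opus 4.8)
The plan is to exhibit, for a carefully chosen moment‑matching $A$, a polynomial threshold function that detects an \emph{anomalously large sample norm}. The point is that, matching only $m$ moments with $\normal(0,1)$, the distribution $A$ is allowed to place a tiny amount of mass $q$ far out, at $\pm R$ with $R\asymp (dm\log d)^{1/4}$ (so $R^2\asymp\sqrt{dm\log d}$). Under the alternative, a sample $\vec y^{(i)}=\bar{\vec y}^{(i)}+\xi_i\vec v$ with $|\xi_i|=R$ has $\|\vec y^{(i)}\|_2^2=\|\bar{\vec y}^{(i)}\|_2^2+R^2\approx d+R^2$, and $R^2$ is chosen to exceed by a large constant factor the $O(\sqrt{dm\log d})$‑scale fluctuation of $\|\cdot\|_2^2$ over \emph{all} $n$ Gaussian samples (by $\chi^2$ concentration and a union bound). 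Hence under $\normal(\vec 0,\vec I)$ no sample attains such a norm whp, while under $\ngcav$ at least one does whenever $qn\gtrsim1$; since $\|\cdot\|_2^2$ is rotation invariant, the test works simultaneously for every unit vector $\vec v$.

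\textbf{Construction of $A$.} First I would take $A$ symmetric (so all odd moments automatically agree with $\normal(0,1)$), of the form $A=(1-q)B+\tfrac q2(\delta_R+\delta_{-R})$, where $B$ is chosen so that $A$ matches the first $m$ moments of $\normal(0,1)$ exactly; existence of such a $B$ follows from the theory of the truncated Hamburger moment problem, since the moment sequence required of $B$ is a small perturbation of the (strictly interior) Gaussian one. A Markov bound at order $2\lfloor m/2\rfloor$ caps $q$ at roughly $(m-1)!!/R^{m}$; using the Gauss--Hermite lower principal representation for the residual moments one can in fact take $q\ge c^{-m}m^{m/2}R^{-m}$ for an \emph{absolute} constant $c$ (the only loss over the Markov bound is exponential in $m$, not of the form $m^{-\Theta(m)}$). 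With $R=\Theta((dm\log d)^{1/4})$ this yields $q\ge c'^{-m}m^{m/2}(dm\log d)^{-m/4}$, and the hypothesis $n\gg(Cdm\log d)^{m/4}$ with $C$ a sufficiently large absolute constant gives $qn\ge100$; hence $\Pr_{\vec y^{(1:n)}\sim\ngcav}[\exists i:\ |\xi_i|=R]\ge1-e^{-100}$.

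\textbf{The test.} Set $\Delta=C_1\sqrt{dm\log d}$ with $C_1$ large enough that $n\exp(-\Delta^2/(8d))\le1/100$, and fix $R$ so that $R^2=100\Delta$. For a degree parameter $D=O(\log n)$ and $c_n=(103/101)^{2D}$, define the univariate polynomial
\[
\phi(u)\;=\;c_n\left(\frac{u-(d-2\Delta)}{R^2+3\Delta}\right)^{2D}.
\]
Since the exponent is even, $\phi\ge0$ on all of $\R$; a direct computation using $R^2=100\Delta$ shows $\phi(u)\le(3/101)^{2D}\le1/(4n)$ on $[d-\Delta,d+\Delta]$ provided $2D\log(101/3)\ge\log(4n)$ (so one may take $D\le\log n$), and $\phi(u)\ge1$ on $[d+R^2-\Delta,\,d+R^2+\Delta]$. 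Then put
\[
p(\vec y^{(1)},\ldots,\vec y^{(n)})\;=\;\sum_{i=1}^n\phi\!\left(\|\vec y^{(i)}\|_2^2\right)-\tfrac12 ,
\]
a real polynomial of degree $4D\le4\log n<k$, and take the PTF to be $\sgn(p)$.

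\textbf{Analysis.} Under the null: with probability $\ge99/100$ every $\|\vec x^{(i)}\|_2^2\in[d-\Delta,d+\Delta]$, so each term lies in $[0,1/(4n)]$ and $p\le1/4-1/2<0$, giving $\sgn(p)=0$; thus $\E[\sgn(p)]\le1/100<1/10$. Under the alternative with any fixed unit $\vec v$: all terms are $\ge0$ (even exponent), and with probability $\ge98/100$ there is an index $i^\ast$ with $|\xi_{i^\ast}|=R$ and $\|\bar{\vec y}^{(i^\ast)}\|_2^2\in[d-\Delta,d+\Delta]$ (combine the $qn\ge100$ bound from $A$ with the $\chi^2$ union bound; note $\bar{\vec y}^{(i)}$ is a standard $(d{-}1)$‑dimensional Gaussian for every $\vec v$, which is why the estimate is uniform in $\vec v$). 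On that event $\phi(\|\vec y^{(i^\ast)}\|_2^2)\ge1$, so $p\ge1-\tfrac12>0$ and $\sgn(p)=1$; thus $\E[\sgn(p)]\ge98/100>9/10$. I expect the main obstacle to be the construction of $A$: making the far mass $q$ simultaneously $\Omega(1/n)$ and compatible with \emph{exact} $m$‑moment matching forces the extremal (Gauss--Hermite) moment argument and pins down how large the absolute constant $C$ must be; the polynomial $\phi$, the $\chi^2$ concentration, and the degree bookkeeping are routine. (For $m$ very large relative to $d$ one also takes $\Delta$ to be the true $\chi^2$‑fluctuation scale $\Theta(\sqrt{dm\log(dm)})$ and lets $R^2$ grow with it, which only changes constants.)
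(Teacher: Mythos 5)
Your proposal is correct and follows essentially the same route as the paper: place a small mass ($\approx R^{-m}$ up to $\mathrm{poly}/\exp(m)$ factors) at $R \asymp (d\,m\log d)^{1/4}$ in a distribution that exactly matches the first $m$ Gaussian moments, and then detect the resulting anomalously large sample norm with a degree-$O(\log n)$ PTF built from a sum over samples of a high even power of the (recentered) squared norm. The only differences are in execution rather than substance — the paper constructs $A$ via an LP-based polynomial correction on $[-1,1]$ and analyzes $\sum_i(\|\vec x^{(i)}\|_2^2-d)^{2t}$ with hypercontractivity/Markov under the null and Chebyshev under the alternative, while you use principal representations (Christoffel-function bound) for $A$ and a windowed even-power bump with $\chi^2$ concentration plus a union bound — and your quantitative bookkeeping (including the flagged adjustment of $\Delta$ when $m$ is large relative to $d$) is sound.
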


The basic idea is to construct a distribution 
$A$ that takes values slightly larger than $d^{1/4}$ with probability almost $d^{-m/4}$ but has its first $m$ moments matched with $\normal(0, 1)$. 
If so, any sample $\vec y$ from $\ngcav$ that witnesses the extreme values of $A$ will have an unusually large norm compared to the Gaussian case.
However, since the distribution of $ \| \vec x \|^2_2$ over the standard Gaussian has its variance being approximately $d^{1/2}$, this leads to a detectable discrepancy. 
This forms the basis of our algorithm for distinguishing $\normal(\vec 0, \vec I)$ and $\ngcav$: the algorithm 
draws approximately $d^{m/4}$ many samples and simply looks for a sample with an abnormally large $\ell_2$ norm. 
After that, we show that it is not difficult to turn this algorithm into an actual PTF test with a small loss in efficiency.

To begin with, the following proposition constructs a moment-matching distribution $A$ that has a non-trivial amount of mass on some extreme value $R$.
\begin{proposition}
\label{prop:A-construction}
For any positive integer $m$, and real number $R > 0$,
there exists a distribution $A$ on $\R$ that matches $m$ moments with the standard Gaussian and satisfies that
$\Pr[A = R] \geq R^{-m} / \poly(m).$ 
\end{proposition}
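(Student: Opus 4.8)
The plan is to build $A$ as a small perturbation of the Gaussian by a point mass: take $A = w\,\delta_R + (1-w)\,B$ for a weight $w$ to be made as large as the moment constraints allow, and an auxiliary probability distribution $B$. Write $\mu_i := \E_{z\sim\normal(0,1)}[z^i]$ and suppose first that $m = 2\ell$ is even, with $m\ge 2$ (odd $m$ and the trivial small cases are handled by the same idea). Then ``$A$ matches the first $m$ moments with $\normal(0,1)$'' is equivalent to ``$B$ is a probability distribution whose $i$-th moment equals $m_i := (\mu_i - w R^i)/(1-w)$ for $i = 0,1,\dots,2\ell$'' (note $m_0 = 1$ automatically). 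By the classical Hankel criterion for the truncated Hamburger moment problem, such a $B$ exists as soon as the $(\ell+1)\times(\ell+1)$ matrix $H := (m_{i+j})_{0\le i,j\le\ell}$ is positive definite. Now $H = \tfrac1{1-w}\bigl(G - w\,\mathbf{v}\mathbf{v}^\top\bigr)$, where $G := (\mu_{i+j})_{0\le i,j\le\ell}$ is the Gram matrix of $1,x,\dots,x^\ell$ under the inner product $\langle f,g\rangle = \E_{z\sim\normal(0,1)}[f(z)g(z)]$ (hence positive definite) and $\mathbf{v} := (1,R,\dots,R^\ell)^\top$, so $H \succ 0$ precisely when $w < 1/(\mathbf{v}^\top G^{-1}\mathbf{v})$.

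The first substantive step is to recognize this threshold in closed form. Since the normalized Hermite polynomials $h_0,\dots,h_\ell$ form an orthonormal basis of $\mathrm{span}\{1,x,\dots,x^\ell\}$, writing $h_k(x) = \sum_i C_{ki}x^i$ gives $CGC^\top = \mathbf{I}$, hence $G^{-1} = C^\top C$ and $\mathbf{v}^\top G^{-1}\mathbf{v} = \|C\mathbf{v}\|_2^2 = \sum_{k=0}^\ell h_k(R)^2$. I would therefore set $w := \bigl(2\sum_{k=0}^\ell h_k(R)^2\bigr)^{-1}$, which is at most $1/2$ (since $h_0\equiv 1$) and strictly below the threshold; this makes $H\succ 0$, produces a legitimate $B$, and the resulting $A$ satisfies $\Pr[A=R]\ge w = \bigl(2\sum_{k=0}^\ell h_k(R)^2\bigr)^{-1}$ (the point mass alone contributes $w$). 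In fact the Hankel criterion is an ``if and only if'', so this value is essentially optimal, though we only need the lower bound.

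It then remains to show $\sum_{k=0}^\ell h_k(R)^2 \le \poly(m)\cdot R^m$. For this I would invoke the Christoffel--Darboux identity together with $h_k' = \sqrt{k}\,h_{k-1}$ (which follows by differentiating the explicit formula \eqref{eq:Hermite_formula}) to obtain the closed form $\sum_{k=0}^\ell h_k(R)^2 = (\ell+1)\,h_\ell(R)^2 - \sqrt{\ell(\ell+1)}\,h_{\ell-1}(R)\,h_{\ell+1}(R)$, and then estimate the individual Hermite values via \eqref{eq:Hermite_formula}: when $R\ge j$ the leading monomial of $h_j$ dominates the (geometrically shrinking) alternating series, so $|h_j(R)|\le 2R^{j}/\sqrt{j!}$. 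Substituting this for $j\in\{\ell-1,\ell,\ell+1\}$ (valid once $R\ge m$, since then $R\ge\ell+1$) and simplifying the factorials yields $\sum_{k=0}^\ell h_k(R)^2 \le 4(2\ell+1)R^{2\ell}/\ell! \le 4(m+1)R^{m}$, whence $\Pr[A=R]\ge R^{-m}/\bigl(8(m+1)\bigr) = R^{-m}/\poly(m)$.

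The main obstacle is making this last step work for all $R>0$ rather than only for $R$ large. The elementary estimate above needs $R$ large relative to $m$ (say $R\ge m$), which is exactly the regime arising in the application (there $R\approx d^{1/4}\gg m$), so it suffices for our purposes. For general $R\ge 1$ one still has $\Pr[A=R]$ attainable up to $\bigl(2\sum_{k=0}^\ell h_k(R)^2\bigr)^{-1}$, but bounding this sum by $\poly(m)\cdot R^m$ throughout requires classical asymptotics of the Hermite polynomials --- in particular their behavior in the oscillatory ``bulk'' region and near the turning point $R\asymp\sqrt m$ --- and I expect this book-keeping, rather than any conceptual difficulty, to be the hard part of a fully general statement. (When $R$ is close to $0$ the claimed bound exceeds $1$ and is vacuous, so no genuine loss is incurred by restricting attention to $R$ bounded away from $0$.) For odd $m$ one either symmetrizes $A$ (matching the even moments in the variable $x^2$, a truncated Stieltjes problem) or uses the analogous solvability criterion for the odd-order Hamburger problem, obtaining the same bound with $\ell = \lfloor m/2\rfloor$.
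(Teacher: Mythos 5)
Your proof is correct but takes a genuinely different route from the paper's. The paper constructs $A$ as the Gaussian density plus a degree-$m$ polynomial correction supported on $[-1,1]$ plus the point mass $\eps\,\delta_R$, and then defers the feasibility question (can one choose the polynomial so that the density stays non-negative on $[-1,1]$ and the moments match?) to a cited LP-duality exercise in~\cite{diakonikolas2023algorithmic}. Your approach instead writes $A = w\,\delta_R + (1-w)B$, reduces existence of $B$ to positive definiteness of a Hankel matrix via the truncated Hamburger moment criterion, and then computes the exact threshold $w^* = 1/\sum_{k\le\ell} h_k(R)^2$ using the change of basis to orthonormal Hermite polynomials (so $G^{-1} = C^\top C$) and the Christoffel--Darboux kernel. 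This buys you a self-contained argument and, moreover, an essentially sharp characterization of the extremal atom mass, which the paper's construction does not obviously attain: the $[-1,1]$ support restriction on the correction in the paper's ansatz is in principle lossy. The trade-off is that your explicit bound $\sum_k h_k(R)^2 \le \poly(m)R^m$ uses the crude ``leading monomial dominates'' estimate, which you only verify for $R\ge m$; as you correctly observe, this restriction is harmless because that is exactly the regime the application (\Cref{thm:optimality}, where $R\approx d^{1/4}\gg m$) requires, and indeed the proposition as stated cannot hold verbatim for $R$ near zero, since $R^{-m}/\poly(m)$ would then exceed $1$. Two small book-keeping points: the Hankel positive-definiteness criterion is sufficient but not quite the full if-and-only-if in the boundary-rank cases, though you only use the sufficient direction, so this does not affect correctness; and the sign in the Christoffel--Darboux term $-\sqrt{\ell(\ell+1)}h_{\ell-1}h_{\ell+1}$ can be either, so one should pass to the triangle inequality when estimating, which you implicitly do.
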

\begin{proof}
Let $\eps$ be $R^{-m}$ divided by a sufficiently large polynomial in $m$. 
We define $A$ to be a probability distribution with the following probability density function:
\begin{align*}
A(x)dx = G(x) dx + \lp( p(x) \mathbbm 1\{|x| < 1 \} dx \rp) + \eps \; \delta_{x=R}   
\end{align*}
where $G: \R \mapsto \R_+$ is the probability density function of the standard Gaussian, $\delta_{x=R}$ is a point mass at $x=R$, and $p(x)$ is some degree-$m$ polynomial we will specify later. 
It is clear that $\Pr_{x \sim A}\lp[ x=R \rp] = \eps$. 
We just need to show that there is a polynomial $p$ 
which ensures that (1) $A(x)$ is non-negative, (2) $\int_{\R} A(x) dx = 1$, and (3) $A$ match enough moments with the standard Gaussian $\normal(0, 1)$:
$$
\int_{\R} G(x) x^t dt = \int_\R A(x) x^t dt
$$
for all integers $0 \leq t \leq m$. 
In particular, we need some polynomial $p$ such that
\begin{align*}
 \int_{-1}^1 p(x) x^t dt = \eps R^t
\end{align*}
for all such $0 \leq t \leq m$,  and $p(x)+G(x) \geq 0$ for all $|x| \leq 1$.  
Such a polynomial can be constructed with standard techniques based on linear programming (see e.g. exercise 8.3 of \cite{diakonikolas2023algorithmic}).
This concludes the proof of \Cref{prop:A-construction}.
\end{proof}

We are now ready to prove \Cref{thm:optimality}.
\begin{proof}[Proof of \Cref{thm:optimality}]
Let $A$ be the distribution given by \Cref{prop:A-construction} with $R = C d^{1/4}$, where $C$ is some sufficiently large constant multiple of $\log^{1/4}(n)$.  
Let $t = \floor{ \log n }$. 
Define the polynomial $p: \R^{n \times d} \mapsto \R$ as
\begin{align}
p( \vec x^{(1)}, \cdots, \vec x^{(n)} ) = 
\sum_{i=1}^n \left[ \lp \| \vec x^{(i)} \rp \|_2^2 - d \right]^{2t}.    
\end{align}
If the $\vec x^{(i)}$'s are independent Gaussians, we have that $\lp \| \vec x^{(i)} \rp \|_2^2 - 1$ is a degree-$2$ polynomial with $L^2$ norm $O(\sqrt{d})$. 
Therefore, by Gaussian hypercontractivity, we have that $\E\lp[   \lp( \lp \| \vec x^{(i)} \rp \|_2^2 -1 \rp)^{2t}\rp] \leq O(t)^t d^t$. 
In particular, this implies that
$$
\E[  p\lp(\vec x^{(1:n)} \rp)] \leq O(t)^t d^t n.
$$
Note that $p$ is a non-negative polynomial.
Applying Markov's inequality therefore gives that
\begin{align*}
\Pr \lp[ 
p\lp(\vec x^{(1:n)} \rp) <  (C't)^t d^t
\rp]  \geq 9/10 \, ,
\end{align*}
where $C'$ is some sufficiently large constant.

On the other hand, suppose that $\vec x$ is drawn from $\ngcav$ conditioned on $\vec v^\top \vec x = R$ (which happens with probability at least $R^{-m}/\poly(m))$. 
Then we immediately have that 
$\| \vec x \|_2^2 = R^2 + \lp|\vec x^\perp \rp|_2^2$, where $\vec x^\perp$ is the part of $\vec x$ that is orthogonal to $\vec v$, which is distributed like $\normal(\vec 0, \vec I - \vec v \vec v^\top)$.
Therefore,
$\E\lp[  \lp \| \vec x \rp \|_2^2 - d \bigg |
\vec v^\top \vec x = R  \rp] = R^2-1$
and
$$ \Var\lp[  \lp \| \vec x \rp \|_2^2 - d \bigg | 
\vec v^\top \vec x = R \rp] 
= \Var\lp[ \lp \|\vec x^\perp \rp \|_2^2 \rp] 
= O(d).
$$
Therefore, by Chebyshev's inequality,
we have
\begin{align*}
\Pr
\lp[ \lp \| \vec x \rp \|_2^2 - d > R^2/2  \bigg | \vec v^\top \vec x = R
\rp] > 1/2 \, ,
\end{align*}
which further implies that
$$
\Pr \lp[ \lp \| \vec x \rp \|_2^2 - d > R^2/2 \rp] > (1/2)R^{-m}/\poly(m) > 10/n.
$$
Thus, if $\vec x^{(1)}, \cdots, \vec x^{(n)}$ are drawn independently from $\ngcav$, the probability that there exists some $i \in [n]$ such that 
$\lp \| \vec x^{(i)} \rp \|_2^2 -d > R^2/2$ is at least $9/10$. However, if this happens,
we immediately have that
$$
p\lp( \vec x^{(1)},\cdots, \vec x^{(n)} \rp) \geq 
\lp(  \lp \| \vec x^{(i)} \rp \|_2^2 -d \rp)^{2t} 
\geq R^{4t}/2^{2t} = C^{4t} d^t / 2^{2t} \geq T.
$$
This shows a separation between the two cases, 
and completes the proof of \Cref{thm:optimality}.    
\end{proof}

\section{Comparison with Information-Computation Gaps for NGCA from Prior Work}

\subsection{Bound on Low-Degree Likelihood Ratio}\label{sec:prior-work-ldlr}

The following result follows by combining \cite{BBHLS21}, (which shows that a lower bound on the statistical query dimension of a hypothesis testing problem implies an upper bound on the norm of the low-degree likelihood ratio) and the statistical query dimension bound from \cite{DKS17-sq}. The details of this combination can be found in \cite{diakonikolas2021statistical} (Corollary 6.4, treating $y$ as a fixed value).

\paragraph{Additional notation} For a distribution $D$ over $\mathcal{X}$, we use $D^{\otimes n}$ to denote the joint distribution of $n$ i.i.d.\ samples from $D$.
For two functions $f:\mathcal{X} \to \R$, $g: \mathcal{X} \to \R$ and a distribution $D$, we use $\langle f, g\rangle_{D}$ to denote the inner product $\E_{X \sim D}[f(X)g(X)]$.
We use $ \|f\|_{D}$ to denote $\sqrt{\langle f, f \rangle_{D} }$.
We say that a polynomial $f:\R^{n \times d} \to \R$ has sample-wise degree $(r,\ell )$ if each monomial uses at most $\ell$ different samples from $\multix$ and uses degree at most $r$ for each of them.
Let $\mathcal{C}_{r,\ell}$ be the linear space of all polynomials of sample-wise degree $(r,\ell)$ with respect to the inner product defined above.
For a function $f:\R^{n \times d} \to \R$, we use $f^{\leq r, \ell}$ to be the orthogonal projection onto $\mathcal{C}_{r,\ell}$   with respect to the inner product $\langle \cdot , \cdot \rangle_{\normal(\vec 0,\vec I)^{\otimes n}}$.  Finally, for the null distribution $\normal(\vec 0,\vec I)$ and a distribution $\ngca$ from \Cref{def:ngca_problem}, define the likelihood ratio $\overline{\mathcal{M}}_{A,\vec v}^{\otimes n}(\multix)$ to be the ratio of the pdf of $\overline{\mathcal{M}}_{A,\vec v}^{\otimes n}$ on the point $(\multix)$ divided by the pdf of $\normal(\vec 0,\vec I)$ evaluated on the point $(\multix)$. The $\chi^2$-distance between two distributions $D$ and $R$ on $\mathcal{X}$ is defined as  $\chi^2(D,R) \eqdef \int_{x \in \mathcal{X}}D^2(x)/R(x) dx - 1$.

        \begin{theorem}\label{cor:low-deg-hardness-general-problem} 
        For any $c \in (0,1/2)$ the following holds.
        There exists a subset $S$ of the $d$-dimensional unit sphere for which the following hold.
        Let a sufficiently small positive constant $c$. Let $\ngca$ denote the distribution from \Cref{def:ngca_problem} and assume that 
        $A$ matches the first $m$  moments with $\normal(\vec 0, \vec I)$ and the vector $\vec v$ is drawn from the uniform distribution over $S$. For any $d \in \Z_+$ with $d = m^{\Omega(1/c)}$, any $n \leq \Omega(d)^{(m+1)(1/2-c)}/\chi^2(A,\normal(\vec 0,\vec I))$ and any even integer $\ell < d^{c/4}$, we have that
        \begin{align}\label{eq:conclusion_ld}
            \left\| \E_{v \sim \mathcal{U}(S)}\left[ \left( \overline{\mathcal{M}}_{A,\vec v}^{\otimes n}  \right)^{\leq \infty, \ell } \right]  - 1  \right\|_{\normal(\vec 0,\vec I)^{\otimes n}} \leq 1\;.
        \end{align}
    \end{theorem}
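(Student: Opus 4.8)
The plan is to follow the two-step reduction indicated just above the statement: first unwind $\bigl\|\E_{\vec v}[(\overline{\mathcal M}_{A,\vec v}^{\otimes n})^{\le \infty,\ell}] - 1\bigr\|_{\normal(\vec 0,\vec I)^{\otimes n}}$ into a weighted sum of powers of the single-sample $\chi^2$-type correlations $\langle \overline{\mathcal M}_{A,\vec u} - 1,\overline{\mathcal M}_{A,\vec v} - 1\rangle_{\normal(\vec 0,\vec I)}$, as in the low-degree machinery of \cite{BBHLS21}, and then bound these correlations using the moment-matching hypothesis on $A$ together with the near-orthogonal family of hidden directions from the SQ lower bound of \cite{DKS17-sq}. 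Write $L_{\vec v} \eqdef \overline{\mathcal M}_{A,\vec v}$ for the single-sample likelihood ratio, so that $\E_{\vec x \sim \normal(\vec 0,\vec I)}[L_{\vec v}(\vec x)] = 1$, and set $\bar L_{\vec v} \eqdef L_{\vec v} - 1$. Since $\overline{\mathcal M}_{A,\vec v}^{\otimes n}(\multix) = \prod_{i=1}^n L_{\vec v}(\vec x^{(i)}) = \prod_{i=1}^n\bigl(1 + \bar L_{\vec v}(\vec x^{(i)})\bigr)$, expanding the product and keeping only monomials that involve at most $\ell$ of the samples — which is exactly the orthogonal projection onto $\mathcal C_{\infty,\ell}$, as the discarded terms are mean-zero in an index outside every set of $\le \ell$ samples — gives
\[
\bigl(\overline{\mathcal M}_{A,\vec v}^{\otimes n}\bigr)^{\le \infty,\ell}(\multix) \;=\; \sum_{T \subseteq [n],\; |T| \le \ell}\; \prod_{i \in T}\bar L_{\vec v}(\vec x^{(i)})\,.
\]

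Averaging over $\vec v \sim \mathcal U(S)$ and subtracting the $T = \emptyset$ term, I would then compute the squared $\normal(\vec 0,\vec I)^{\otimes n}$-norm. Because $\E_{\normal(\vec 0,\vec I)}[\bar L_{\vec v}] = 0$, the functions $\vec x \mapsto \prod_{i \in T}\bar L_{\vec u}(\vec x^{(i)})$ and $\vec x \mapsto \prod_{i \in T'}\bar L_{\vec v}(\vec x^{(i)})$ are orthogonal whenever $T \ne T'$, and for $T = T'$ their inner product equals $\langle \bar L_{\vec u},\bar L_{\vec v}\rangle_{\normal(\vec 0,\vec I)}^{\,|T|}$, so that
\[
\Bigl\|\E_{\vec v \sim \mathcal U(S)}\bigl[\bigl(\overline{\mathcal M}_{A,\vec v}^{\otimes n}\bigr)^{\le \infty,\ell}\bigr] - 1\Bigr\|_{\normal(\vec 0,\vec I)^{\otimes n}}^2 \;=\; \sum_{j=1}^{\ell} \binom{n}{j}\, \E_{\vec u,\vec v \sim \mathcal U(S)}\Bigl[\langle \bar L_{\vec u},\bar L_{\vec v}\rangle_{\normal(\vec 0,\vec I)}^{\,j}\Bigr]\,.
\]
To control the correlations, observe that $L_{\vec v}(\vec x) = g(\vec v^\top \vec x)$, where $g$ is the one-dimensional likelihood ratio of $A$ against $\normal(0,1)$; expanding $g$ in the normalized Hermite basis and applying Mehler's formula yields $\langle \bar L_{\vec u},\bar L_{\vec v}\rangle_{\normal(\vec 0,\vec I)} = \sum_{k > m}\widehat g(k)^2\,\langle \vec u,\vec v\rangle^k$, where $\sum_{k \ge 1}\widehat g(k)^2 = \chi^2(A,\normal(0,1))$ and the Hermite coefficients of orders $1,\dots,m$ vanish by the moment-matching assumption. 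Hence $|\langle \bar L_{\vec u},\bar L_{\vec v}\rangle_{\normal(\vec 0,\vec I)}| \le \chi^2(A,\normal(0,1))\,|\langle \vec u,\vec v\rangle|^{m+1}$ for distinct $\vec u,\vec v$, while this quantity equals $\chi^2(A,\normal(0,1))$ exactly when $\vec u = \vec v$.

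Finally I would take $S$ to be the near-orthogonal design of \cite{DKS17-sq}: a set of $|S| = 2^{d^{\Omega(1)}}$ unit vectors with $|\langle \vec u,\vec v\rangle| \le d^{-(1/2 - c)}$ for all distinct pairs — this is where $d = m^{\Omega(1/c)}$ is used, to keep the design exponent comfortably larger than the eventual error terms. Splitting $\E_{\vec u,\vec v}$ into the diagonal event $\vec u = \vec v$ (probability $1/|S|$) and its complement bounds the $j$-th summand by $|S|^{-1}\chi^2(A,\normal(0,1))^j + \bigl(\chi^2(A,\normal(0,1))\, d^{-(1/2 - c)(m+1)}\bigr)^j$. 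Using $\binom{n}{j} \le n^j$ and summing over $j$, the off-diagonal contribution becomes a geometric series in $n\,\chi^2(A,\normal(0,1))\,d^{-(1/2 - c)(m+1)}$, which is $O(1)$ precisely under the hypothesis $n \le \Omega(d)^{(m+1)(1/2 - c)}/\chi^2(A,\normal(0,1))$; the diagonal contribution is at most $|S|^{-1}\sum_{j \le \ell}(n\chi^2(A,\normal(0,1)))^j$, which is $o(1)$ because $\ell < d^{c/4}$ and $m = d^{O(c)}$ force $(n\chi^2(A,\normal(0,1)))^{\ell}$ to be far smaller than $|S| = 2^{d^{\Omega(1)}}$. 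Tracking the absolute constants (as in \cite[Corollary 6.4]{diakonikolas2021statistical}) makes the total at most $1$, which is \Cref{eq:conclusion_ld}.

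I expect the genuine obstacle to lie in this last step: choosing a single design $S$ that is simultaneously (i) near-orthogonal enough to push the off-diagonal geometric series below the prescribed constant and (ii) large enough to make the diagonal term negligible, all within the window $d = m^{\Omega(1/c)}$, $\ell < d^{c/4}$, $n \le \Omega(d)^{(m+1)(1/2 - c)}/\chi^2(A,\normal(0,1))$. This balancing act is the quantitative core of \cite{DKS17-sq} and of its transfer to the low-degree model via \cite{BBHLS21}, so I would import those statements rather than re-derive the construction.
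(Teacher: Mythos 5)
Your proposal is correct and is essentially the argument the paper itself relies on: the paper establishes this theorem purely by citation (combining the SQ-to-low-degree transfer of \cite{BBHLS21} with the near-orthogonal direction set and pairwise-correlation lemma of \cite{DKS17-sq}, details in Corollary 6.4 of \cite{diakonikolas2021statistical}), and your subset expansion of the projected likelihood ratio, the Hermite/moment-matching bound $|\langle \bar L_{\vec u},\bar L_{\vec v}\rangle_{\normal(\vec 0,\vec I)}|\leq \chi^2(A,\normal(0,1))\,|\langle \vec u,\vec v\rangle|^{m+1}$, and the diagonal/off-diagonal split over the near-orthogonal set $S$ is exactly the computation underlying those cited results. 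The final parameter bookkeeping you defer to the cited works is likewise everything the paper defers.
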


The quantity in the right hand side of \Cref{eq:conclusion_ld} is the norm of the low-degree likelihood ratio. This is an equivalent rewriting of the best possible advantage $\beta$ in \Cref{def:good-ld} with $D_{\emptyset} = \normal(\vec 0,\vec I)$ and $D_{\alt} = \tfrac{1}{|S|}\sum_{\vec v \in S} \ngca$. In particular the variant of that definition where the right hand side in \Cref{eq:rhs} only scales with the standard deviation under the null distribution $\normal(\vec 0,\vec I)$ instead of the maximum.
Comparing with \Cref{thm:main}, the three conditions of \Cref{thm:main} appear in some form in \Cref{cor:low-deg-hardness-general-problem}: The sample complexity condition now is $n \leq \Omega(d)^{(m+1)/(1/2-c)}/\chi^2(A,\normal(\vec 0,\vec I))$, which has a better constant $1/2$ in the exponent. 
Surprisingly, we show in \Cref{thm:optimality} that the exponent from \Cref{thm:main} is 
essentially best possible, implying that PTF tests are inherently slightly more powerful than LDPs even for the NGCA problem.
 The condition $k < d^{\Omega(1)}$ of \Cref{thm:main} corresponds to the part of the statement in \Cref{cor:low-deg-hardness-general-problem} restring $\ell < d^{c/4}$. Since the low-degree likelihood ratio uses sample-wise degree $(\infty,\ell)$ this means that the total degree of the resulting polynomial is restricted to be  at most $\ell < d^{c/4}$. Finally, the condition $d > m^{\Omega(1)}$ appears in both \Cref{thm:main} and \Cref{cor:low-deg-hardness-general-problem} with different constants.

\subsection{Hardness in the Statistical Query Model}
\label{sec:sq-prior}
In this section we restate the result from \cite{DKS17-sq} regarding the information-computation gap for NGCA within the Statistical Query model.
Before we restate the theorem, we recall the basics of the SQ model~\cite{Kearns:98, FGR+13}. 
Instead of drawing samples from the input distribution, 
SQ algorithms  are only permitted query access to the distribution via the following oracle:
\begin{definition}[STAT Oracle] \label{def:stat}
Let $D$ be a distribution in $\R^d$. A statistical query is a bounded function $f: \R^d \to [-1,1]$. 
For $\tau>0$, the $\mathrm{STAT}(\tau)$ oracle responds to the query $f$ 
with a value $v$ such that $|v - \E_{\bx \sim D}[f(\bx)] \leq \tau$. 
We call $\tau$ the tolerance of the statistical query.
\end{definition}

An information-computation gap in this model for a learning problem $\Pi$ is typically of the following form: 
any SQ algorithm for $\Pi$ must either make a large number of queries $q$ 
or at least one query with small tolerance $\tau$. 
When simulating a statistical query in the standard PAC model 
(by averaging i.i.d.\ samples to approximate expectations), 
the number of samples needed for a $\tau$-accurate query 
can be as high as $\Omega(1/\tau)$. Thus, we can intuitively interpret 
an SQ lower bound as a tradeoff between runtime of $\Omega(q)$ 
or a sample complexity of $\Omega(1/\tau^{2})$.

The statement for NGCA is the following.

\begin{theorem}[See, e.g., Proposition 8.14 in \cite{diakonikolas2023algorithmic}]\label{thm:SQ}
    For any constant $c \in (0,1/2)$ and any $m,d \in Z_+$ with $d \geq ((m+1)\log d)^{2/c}$ the following hold. If $A$ is a distribution on $\R$ that matches the first $m$ moments with $\normal(0,1)$ and $\ngca$ denotes the distribution from \Cref{def:hidden_distr}, then any SQ algorithm for distinguishing between $\ngca$ and $\normal(\vec 0,\vec I)$ (when $\vec v$ is unknown to the algorithm) requires either $2^{\Omega(d^c)}$ many SQ queries or at least one query to $\mathrm{STAT}$ with accuracy $\tau \leq 2 d^{-(m+1)(1/4-c/2)}\sqrt{\chi^2(A,\normal(0,1)}$.
\end{theorem}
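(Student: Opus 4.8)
The plan is to prove this via the standard statistical-query lower bound methodology for NGCA of \cite{DKS17-sq}: reduce the claim to a bound on the pairwise correlations of the likelihood ratios of the hidden-direction distributions $\{\ngcav\}$, establish that bound by a short Hermite computation, exhibit an exponentially large family of nearly orthogonal directions, and feed these two ingredients into the generic SQ-dimension lower bound for decision problems.

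First I would compute the correlations. For a unit vector $\vec v$, the likelihood ratio $\mathcal L_{\vec v}(\vec x) \eqdef \ngcav(\vec x)/\normal(\vec 0,\vec I)(\vec x)$ depends on $\vec x$ only through $u\eqdef\langle\vec v,\vec x\rangle$ and equals $\phi(u)$, where $\phi\eqdef A/\normal(0,1)$ is the one-dimensional likelihood ratio; this is because $\ngcav$ is a standard Gaussian on $\vec v^{\perp}$. Expanding $\phi$ in the normalized probabilist's Hermite basis, $\phi = \sum_{j\ge 0}\hat\phi_j h_j$, the hypothesis that $A$ matches the first $m$ moments of $\normal(0,1)$ is equivalent to $\hat\phi_0 = 1$ and $\hat\phi_j = 0$ for $1\le j\le m$, and moreover $\sum_{j>m}\hat\phi_j^2 = \E_{u\sim\normal(0,1)}[\phi^2(u)]-1 = \chi^2(A,\normal(0,1))$. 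Using the identity $\E_{\vec x\sim\normal(\vec 0,\vec I)}[h_j(\langle\vec v,\vec x\rangle)\,h_k(\langle\vec v',\vec x\rangle)] = \1(j=k)\,\langle\vec v,\vec v'\rangle^{j}$ for unit vectors (the diagram / Hermite--Mehler formula), I get
\[
\bigl\langle \mathcal L_{\vec v}-1,\ \mathcal L_{\vec v'}-1\bigr\rangle_{\normal(\vec 0,\vec I)} \;=\; \sum_{j>m}\hat\phi_j^2\,\langle\vec v,\vec v'\rangle^{j},
\]
hence $\bigl|\langle \mathcal L_{\vec v}-1,\mathcal L_{\vec v'}-1\rangle_{\normal(\vec 0,\vec I)}\bigr| \le |\langle\vec v,\vec v'\rangle|^{m+1}\,\chi^2(A,\normal(0,1))$ for distinct $\vec v,\vec v'$, while for $\vec v=\vec v'$ the value is exactly $\chi^2(A,\normal(0,1))$.

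Next I would produce the direction set: a standard probabilistic argument (random points on $\mathcal S^{d-1}$ are near-orthogonal, $\Pr[|\langle\vec v,\vec v'\rangle|>t]\le e^{-\Omega(dt^{2})}$) or an explicit algebraic construction yields a set $S\subseteq\mathcal S^{d-1}$ with $|S| = 2^{\Omega(d^{c})}$ and $|\langle\vec v,\vec v'\rangle|\le d^{-(1/2-c)}$ for all distinct $\vec v,\vec v'\in S$; the hypothesis $d\ge((m+1)\log d)^{2/c}$ is precisely what makes the packing size, the correlation exponent, and the query-count exponent mutually compatible. Plugging $t\le d^{-(1/2-c)}$ into the correlation bound, the family $\{\ngcav:\vec v\in S\}$ with reference $\normal(\vec 0,\vec I)$ has pairwise correlation at most $\gamma\eqdef d^{-(m+1)(1/2-c)}\chi^2(A,\normal(0,1))$ and diagonal value $\beta\eqdef\chi^2(A,\normal(0,1))$. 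Invoking the generic SQ lower bound for decision problems in terms of this correlation structure (the statistical-dimension argument of \cite{FGR+13}, as applied to NGCA in \cite{DKS17-sq}), any SQ algorithm distinguishing $\normal(\vec 0,\vec I)$ from a uniformly random $\ngcav$, $\vec v\sim\mathcal U(S)$, must either make $2^{\Omega(d^{c})}$ queries or issue at least one query with tolerance $\tau = O(\sqrt\gamma) = O\bigl(d^{-(m+1)(1/4-c/2)}\sqrt{\chi^2(A,\normal(0,1))}\bigr)$, which after tracking constants gives the stated bound; since the construction uses only hidden directions lying in $S$, it a fortiori rules out SQ algorithms that succeed for every hidden direction $\vec v$, as in \Cref{def:hidden_distr}.

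The step I expect to be the main obstacle is the last one: extracting the exact numerical constants from the generic SQ framework. One must verify that the per-query tolerance threshold it produces is $\Theta(\sqrt\gamma)$ uniformly in $|S|$ (rather than degraded by a factor depending on the cardinality), that the query-count bound genuinely survives as $2^{\Omega(d^{c})}$ after the reduction, and that the packing radius $d^{-(1/2-c)}$ can be simultaneously matched to the target exponents $(m+1)(1/4-c/2)$ in $\tau$ and $d^{c}$ in the query count --- this compatibility is exactly the role of the constraint $d\ge((m+1)\log d)^{2/c}$. By contrast, the Hermite correlation computation and the construction of $S$ are routine.
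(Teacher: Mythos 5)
Your proposal is sound: the Hermite/Ornstein--Uhlenbeck correlation bound $\sum_{j>m}\hat\phi_j^2\langle\vec v,\vec v'\rangle^j \le |\langle\vec v,\vec v'\rangle|^{m+1}\chi^2(A,\normal(0,1))$, the $2^{\Omega(d^c)}$-size packing of nearly orthogonal directions, and the generic pairwise-correlation SQ lower bound (with the cardinality entering only the query count, not the tolerance) is precisely the argument behind this statement. Note that the paper itself does not prove \Cref{thm:SQ} --- it imports it verbatim from prior work (\cite{DKS17-sq}, Proposition 8.14 of \cite{diakonikolas2023algorithmic}) --- and your reconstruction coincides with that standard proof, including the role of the condition $d \geq ((m+1)\log d)^{2/c}$ in keeping the query-count bound exponential after dividing by the correlation ratio.
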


The interpretation of the above is a trade-off between exponential runtime and sample complexity at least $d^{(m+1)(1/2-c)}/\chi^2(A,\normal(0,1)$. As shown in \cite{diakonikolas2023sq}, the dependence on $\chi^2(A, \mathcal{N}(0,1))$ was merely an artifact of the original analysis and can be removed, at the cost of a larger constant in the exponent in the sample complexity. As one can see \Cref{thm:SQ} uses the same three assumptions as \Cref{thm:main} up to differences in the constant and polylog factors. In the particular, the assumption $d \geq ((m+1)\log d)^{2/c}$ in \Cref{thm:SQ} ensures that $2^{d^{c/2}} \geq d^{(m+1)(c-1/2)}$, i.e., both the runtime and the sample complexity are at least $d^{(m+1)(c-1/2)}$.

\section{Applications to Learning Mixture Models and Robust Statistics}
\label{sec:concrete_apps}

In this section, we show that \Cref{thm:main} implies strong lower bounds against PTF tests for a range of problems in machine learning theory and robust statistics.

Note that since PTF produces binary output, all the lower bounds will be for the testing version of the corresponding statistical estimation problems.

Similar to the approach taken in \cite{diakonikolas2024sum}, we first define two meta testing problems that can be instantiated to model the testing version of various statistical estimation problems under the total-variation corruption model and the Hubert Contamination model.

\begin{definition}[TV-Corruption Model]\label{def:tv_corruption}
Let $\mathcal{D}$ be a set of
distributions. We define $\cB_{TV}(\tau, \cN(\vec 0, \Id_d), \cD)$ to be the following hypothesis testing problem: Given $n$ i.i.d.\ samples $ \{ \vec x^{(1)}, \cdots, \vec x^{(n)}  \} \subseteq \R^d$ drawn from one of the following two distributions, the goal is to determine which one generated the samples: (a) $\cN(\vec 0, \Id_d)$; and (b) $D'$ such that $d_{TV}(D', D) \leq \tau$ for $D$ drawn uniformly at random from $\cD$.
\end{definition}

\begin{definition}[Huber Contamination Model]\label{def:huber_contamination}
Let $\mathcal{D}$ be a family of distributions. We define $\cB_{\huber}(\tau, \cN(\vec 0, \Id_d), \cD)$ to be the following hypothesis testing problem: Given $n$ i.i.d.\ samples $ \{ \vec x^{(1)}, \cdots, \vec x^{(n)}  \} \subseteq \R^d$ drawn from one of the following two distributions, determine which one generated the samples: (a) $\cN(\vec 0, \Id_d)$; (b) $D'$, which is $(1-\tau) D + \tau B$, where $D$ is drawn uniformly at random from $\cD$ and $B$ is an arbitrary distribution possibly dependent on $D$. 
\end{definition}

We start with the testing version of robust mean estimation of isotropic Gaussian distribution.
\begin{problem}[Hypothesis-Testing-Robust-Mean-Estimation with Identity Covariance]
\label{prob:RME_id}
Let $\tau > 0$ and $B = O(\log^{1/2}(1/\tau))$ be a parameter. The problem is $\cB_{TV}(\tau, \cN(0, \Id_d), \cD)$ (\Cref{def:tv_corruption}), where every $D \in \cD$ is of the form $\cN(\bmu_D, \Id_d)$ and $\norm{\bmu_D}\geq \Omega(\tau \log(1/\tau)^{1/2}) / B^2)$. 
\end{problem}
It is shown in \cite{DKS17-sq} that the above testing problem can be reduced to NGCA whose  non-Gaussian component $A$ matches $m = B$ many moments. 
The lower bound against PTF tests then follows.
\begin{corollary}
Let $h: \R^{n \times d} \mapsto \R$ be a degree-$k$ PTF. If $h$ solves the hypothesis testing problem in \Cref{prob:RME_id}, then we must either have
$n \geq d^{ B(1-c^*)/4}$ or $k \geq  d^{ \Omega(1) }$, where $c^*$ is some small constant.
\end{corollary}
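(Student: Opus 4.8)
The plan is to bootstrap this from \Cref{thm:main} via the standard embedding of NGCA into robust mean testing due to \cite{DKS17-sq}. First I would recall that embedding: for $m = B$ (and, if $B$ is odd, for $m = B-1$, which is still matched since matching the first $B$ moments with $\normal(0,1)$ implies matching the first $B-1$), there is a distribution $A$ on $\R$ with bounded second moment matching the first $m$ moments with $\normal(0,1)$ such that the hidden-direction distribution $\ngca$ lies within total variation distance $\tau$ of $\normal(\bmu_{\vec v},\Id_d)$, where $\bmu_{\vec v}$ is a scalar multiple of $\vec v$ with $\norm{\bmu_{\vec v}} \geq \Omega(\tau\log^{1/2}(1/\tau))/B^2$ (see also Chapter~8 of \cite{diakonikolas2023algorithmic}). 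Consequently, when $\vec v \sim \usphere$, the pair $(\normal(\vec 0,\Id_d),\,\ngcav)$ is a legitimate instance of the alternative family of \Cref{prob:RME_id}: the null distribution is identical, and $\ngcav$ is a valid $\tau$-TV perturbation of an isotropic Gaussian with the required mean separation, so it is among the alternatives an adversary is allowed to present.

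Next I would translate "solves" into an expectation gap. Since $h = \sgn(p)$ is a degree-$k$ PTF that solves \Cref{prob:RME_id} with high constant probability — say it accepts the null and rejects every allowed alternative each with probability at least $2/3$ — it in particular succeeds on the specific alternative $\ngcav$ with $\vec v \sim \usphere$. Unwinding the success probabilities gives
\begin{align*}
\E_{\multix \sim \normal(\vec 0,\Id)}\!\left[h(\bmultix)\right] \leq 1/3,
\qquad
\E_{\vec v \sim \usphere,\ \multix \sim \ngcav}\!\left[h(\bmultix)\right] \geq 2/3,
\end{align*}
so the quantity on the left-hand side of \Cref{eq:thm_conclusion} is at least $1/3 > 0.11$.

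Finally I would apply \Cref{thm:main} in contrapositive form. Given a target small constant $c^*$ for the corollary, invoke \Cref{thm:main} with its parameter set to $c^*/4$ (which lies in $(0,1/4)$), with $m = B$ (even, after the adjustment above), and with $d$ large enough that $B < d^{(c^*/4)/C^*}$ — automatic since $B = O(\log^{1/2}(1/\tau))$ is polylogarithmic while $d$ is polynomially large. The conclusion $|\cdots|\leq 0.11$ of \Cref{thm:main} is violated, so one of its hypotheses (i)--(iii) must fail; hypothesis (i) holds by construction, and since $m = B < d^{(c^*/4)/C^*}$, failure of (ii) forces $k \geq d^{(c^*/4)/C^*} = d^{\Omega(1)}$, while failure of (iii) gives $n \geq d^{(1/4 - c^*/4)B} = d^{B(1-c^*)/4}$. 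Either way the corollary follows. The only step requiring genuine care — rather than being pure bookkeeping on top of \Cref{thm:main} — is pinning down the direction and parameters of the \cite{DKS17-sq} reduction: that the moment-matching NGCA instance is literally an instance of the robust-mean alternative family, that the mean separation is $\Omega(\tau\log^{1/2}(1/\tau)/B^2)$ as in \Cref{prob:RME_id}, and that the hidden direction can be taken uniform on the full sphere so that the hypotheses of \Cref{thm:main} apply verbatim. That verification is exactly what is carried out in \cite{DKS17-sq}, so I would simply cite it and let \Cref{thm:main} do the rest.
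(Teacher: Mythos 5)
Your proposal is correct and follows essentially the same route as the paper: the paper's argument is precisely the \cite{DKS17-sq} reduction of \Cref{prob:RME_id} to NGCA with a component matching $m = B$ moments, followed by an application of \Cref{thm:main} in contrapositive. Your write-up simply makes explicit the bookkeeping (the expectation gap exceeding $0.11$ and the choice of constants) that the paper leaves implicit.
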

The second application is on robust mean estimation of distributions with bounded $m$-moments.
The testing version of the problem is as follows (see Section 6 from \cite{diakonikolas2022robust} for the justification).
\begin{problem}[Hypothesis-Testing-Robust-Mean-Estimation with Bounded $m$-th Moments]
\label{prob:mean-bdd-mom}
Let $m$ be a positive integer and $\tau\in(0, 1)$. Hypothesis-Testing-RME-Bounded-$m$-Moments is the problem $\cB_{\huber}(\tau, \cN(\vec 0, \Id_d), \cD)$ (\Cref{def:huber_contamination}) where each $D \in \cD$ satisfies the following: (i) the mean vector $\bmu$ satisfies $\| \bmu \| \geq \Omega(\frac{1}{m} \tau^{1-1/m})$; (ii) $D$ has subgaussian tails of order $m$, i.e., for all $\vec v \in \R^d$ and $1 \leq i \leq m$, $\E_{\vec x \sim D}[\abs{\vec v^\top (\vec x - \bmu)}^i]^{1/i} \leq O(\sqrt{i})$. 
\end{problem}
It is shown in \cite{diakonikolas2022robust} that the problem can be reduced to NGCA whose non-Gaussian component $A$ matches $m$ many moments with the standard Gaussian.
We therefore obtain the following lower bound against PTF tests.
\begin{corollary}
Let $h: \R^{n \times d} \mapsto \R$ be a degree-$k$ PTF. If $h$ solves the hypothesis testing problem in \Cref{prob:mean-bdd-mom}, then we must either have
$n \geq d^{ m(1 - c^*) / 4}$ or $k \geq  d^{ \Omega(1) }$, where $c^*$ is some small constant.
\end{corollary}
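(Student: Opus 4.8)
The plan is to obtain the corollary as a direct consequence of \Cref{thm:main} together with the known reduction of robust mean estimation with bounded $m$-th moments to NGCA. First I would recall that reduction, following \cite[Section~6]{diakonikolas2022robust}: for every sufficiently small $\tau$ there is a one-dimensional distribution $A$ that matches its first $m$ moments with $\normal(0,1)$ and admits a decomposition $A=(1-\tau)A_1+\tau A_2$, where $A_1$ has mean $\Omega(\tfrac1m\tau^{1-1/m})$ and subgaussian tails of order $m$ and $A_2$ is arbitrary. Lifting along a hidden direction, for every unit $\vec v$ we get $\ngcav=(1-\tau)\,\mathcal M_{A_1,\vec v}+\tau\,\mathcal M_{A_2,\vec v}$, and $\mathcal M_{A_1,\vec v}$ has mean of norm $\Omega(\tfrac1m\tau^{1-1/m})$ and subgaussian tails of order $m$ in every direction; hence $\ngcav$ is an admissible alternative for \Cref{prob:mean-bdd-mom}, with planted distribution $\mathcal M_{A_1,\vec v}$ and contamination $\mathcal M_{A_2,\vec v}$. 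Taking $\cD=\{\mathcal M_{A_1,\vec v}\}_{\vec v\in \mathcal S^{d-1}}$ with the uniform prior over $\vec v$, the instance $\cB_{\huber}(\tau,\normal(\vec 0,\Id),\cD)$ of \Cref{prob:mean-bdd-mom} is exactly the NGCA testing problem of \Cref{def:ngca_problem} for this $A$ (if \cite{diakonikolas2022robust} instead uses a fixed affine reparametrization $T$ of each sample, one composes with $T$ below; since $T$ acts identically on every sample, this leaves the degree of the PTF unchanged).

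Given this, suppose $h=\sgn(p)$ is a degree-$k$ PTF on $n$ samples that solves \Cref{prob:mean-bdd-mom} with high constant probability, i.e.\ $\Pr_{\mathrm{null}}[h(\bmultix)=1]$ and $\Pr_{\mathrm{alt}}[h(\bmultix)=0]$ are both at most some constant $\eps_0<0.44$. Then, with $A$ as above,
\[
\Big|\,\E_{\vec v\sim\usphere,\ \multix\sim\ngcav}\!\big[h(\bmultix)\big]-\E_{\multix\sim\normal(\vec 0,\Id)}\!\big[h(\bmultix)\big]\,\Big|\ \geq\ 1-2\eps_0\ >\ 0.11 .
\]
I would then invoke \Cref{thm:main}, applied with $m$ (which we may take even by replacing it with $m-1$ when it is odd, absorbing the $O(1)$ change of exponent into the constant): since the advantage above exceeds $0.11$, at least one of the hypotheses (i)--(iii) of \Cref{thm:main} must fail, and hypothesis (i) holds, so either $\max(k,m)\geq d^{c^\ast/C^\ast}$ or $n\geq d^{(1/4-c^\ast)m}$.

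Finally I would clean up the two cases. In the first, since $m$ is a constant we have $m<d^{c^\ast/C^\ast}$ for all large $d$, whence $k\geq d^{c^\ast/C^\ast}=d^{\Omega(1)}$. In the second, rewriting $(1/4-c^\ast)m=\tfrac{m}{4}(1-4c^\ast)$ and relabeling $4c^\ast$ as the small constant named in the statement gives $n\geq d^{m(1-c^\ast)/4}$, which is the claimed dichotomy. I expect essentially all the substance to lie in the reduction of the first paragraph — concretely, in verifying that the NGCA hard instance built from $A$ genuinely lands in the bounded-moment alternative family $\cD$ with a sufficiently large planted mean — but that verification is precisely the content of \cite{diakonikolas2022robust}; granting it, the corollary reduces to bookkeeping with the constant $c^\ast$ and the exponents.
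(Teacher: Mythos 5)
Your proposal is correct and follows essentially the same route as the paper: invoke the reduction from~\cite{diakonikolas2022robust} that encodes the bounded-$m$th-moment robust mean testing problem as an NGCA instance whose non-Gaussian component matches $m$ moments, observe that a PTF solving \Cref{prob:mean-bdd-mom} would have advantage exceeding $0.11$ on that NGCA instance, and then apply the contrapositive of \Cref{thm:main} (with the minor parity and constant bookkeeping you correctly handle). The paper states this in one sentence; you have simply filled in the intermediate steps, including the useful observation that any affine reparametrization in the reduction is sample-wise and so preserves the PTF degree.
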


The third application is on Gaussian list-decodable mean estimation. The testing version is given below.
See \cite{diakonikolas2018list} for a thorough walkthrough of this problem, and the reduction between its learning version and the testing version.
\begin{problem}[Hypothesis-Testing-List-Decodable-Mean-Estimation]
\label{prob:LDME}
Given $\tau\in(0,\frac{1}{2})$ and positive integer $m\geq 2$, the hypothesis-testing-LDME is the problem $\cB_\huber(1-\tau, \cN(\vec 0, \Id_d), \cD)$ (\Cref{def:huber_contamination}), where every $D \in \cD$ has the form $\cN(\bmu_D, \Id_d)$ for some $\bmu_D\in\R^d$ whose $\ell_2$-norm is at least $\Omega((m\tau)^{-1/m})$. 
\end{problem}
It is shown in \cite{diakonikolas2018list} that this problem can be reduced to NGCA whose non-Gaussian component matches $m$ moments with the standard Gaussian. We hence obtain the following lower bound against PTF tests.
\begin{corollary}
Let $h: \R^{n \times d} \mapsto \R$ be a degree-$k$ PTF. If $h$ solves the hypothesis testing problem in \Cref{prob:LDME}, then we must either have
$n \geq d^{ m(1 - c^*) / 4}$ or $k \geq  d^{ \Omega(1) }$, where $c^*$ is some small constant.   
\end{corollary}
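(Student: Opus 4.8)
The plan is to deduce this corollary from the NGCA lower bound of \Cref{thm:main} together with the reduction of \cite{diakonikolas2018list} from list-decodable mean estimation to NGCA. That reduction shows that the testing problem of \Cref{prob:LDME} with parameters $\tau$ and $m$ contains, as a sub-family of alternatives, instances that coincide (up to negligible total-variation error) with the hidden-direction distribution $\ngcav$ of \Cref{def:hidden_distr} for a fixed one-dimensional component $A$ matching the first $m$ moments with $\normal(0,1)$ and with $\vec v$ drawn uniformly from the unit sphere, while the null hypothesis $\normal(\vec 0,\vec I)$ is left unchanged. Consequently, a PTF tester that solves \Cref{prob:LDME} in particular solves the NGCA problem of \Cref{def:ngca_problem} for this $A$, with the same degree and the same number of samples.

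Given this, the argument I would run is short. Suppose $h=\sgn(p(\cdot))$ is a degree-$k$ PTF solving \Cref{prob:LDME}, so it distinguishes the two hypotheses with probability at least $2/3$, hence its acceptance probability under the alternative exceeds that under the null by at least $1/3$. By the previous paragraph, the same $h$ is a degree-$k$ PTF for NGCA with component $A$, so
\begin{align*}
\left| \E_{\substack{\vec v \sim \usphere\\ \multix \sim \ngcav}}\lp[ \sgn(p(\bmultix)) \rp] - \E_{\multix \sim \normal(\vec 0,\vec I)}\lp[ \sgn(p(\bmultix)) \rp] \right| \;\ge\; \tfrac{1}{3} \;>\; 0.11 \, .
\end{align*}
This violates the conclusion of \Cref{thm:main}, so at least one hypothesis of \Cref{thm:main} must fail. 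Treating $m$ as even (the odd case is absorbed by matching $m-1$ moments, which only changes constants) and noting that the statement is vacuous unless $m=d^{o(1)}$, the failure of $\max(k,m)<d^{c^*/C^*}$ forces $k\ge d^{c^*/C^*}=d^{\Omega(1)}$, while the failure of $n<d^{(1/4-c^*)m}$ forces $n\ge d^{(1/4-c^*)m}$. Finally, using the constant $c^*/4$ in \Cref{thm:main} in place of $c^*$ (still an arbitrarily small positive constant) rewrites the exponent as $(1/4-c^*/4)m=m(1-c^*)/4$, which is exactly the claimed sample bound, completing the dichotomy.

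This is essentially bookkeeping once the reduction of \cite{diakonikolas2018list} is granted, so the only obstacle worth flagging is the (minor) check that the reduction does not inflate the algorithm's resources: it produces an NGCA tester of the same PTF degree $k$ and the same sample size $n$, which is immediate here since passing to a sub-family of alternatives does not alter the tester, and would remain true even if the reduction instead precomposed each sample with a fixed affine map. The remaining point — aligning the exponent's constant with the ``$1/4$'' of \Cref{thm:main} — is the cosmetic renaming carried out above.
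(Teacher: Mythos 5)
Your proposal is correct and follows essentially the same route as the paper: invoke the reduction of \cite{diakonikolas2018list} showing that \Cref{prob:LDME} contains the NGCA instance of \Cref{def:ngca_problem} with an $m$-moment-matching component $A$, then apply \Cref{thm:main} and rename the constant to turn the exponent $(1/4-c^*)m$ into $m(1-c^*)/4$. The paper gives no more detail than this, so your additional bookkeeping (same degree and sample size under the reduction, even-$m$ and small-$m$ caveats) matches its intended argument.
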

The last application is on learning mixtures of $k$ Gaussians.
It is shown in \cite{DKS17-sq} that this learning problem can be reduced from the following testing problem.
\begin{problem}[Hypothesis-Testing-$m$-GMM]\label{prob:gmm_learning}
Let $0<\gamma <1$. Hypothesis-Testing-$m$-GMM is the problem $\cB_{\huber}(0, \cN(0, \Id_d), \cD)$ (\Cref{def:huber_contamination}), where every $D \in \cD$ is a mixture of $m$ Gaussians such that each pair of the Gaussians are $1-\gamma$ apart in total variation and $d_{TV}(D, \cN(\vec 0, \Id_d)) \geq \frac{1}{2}$. 
\end{problem}
It has been shown in the same work that the testing problem can be further reduced from NGCA whose non-Gaussian component matches $2m-1$ moments with $\normal(0, 1)$.
We therefore obtain the following lower bound.
\begin{corollary}
Let $h: \R^{n \times d} \mapsto \R$ be a degree-$k$ PTF. If $h$ solves the hypothesis testing problem in \Cref{prob:gmm_learning}, then we must either have
$n \geq d^{ m(1 - c^*) / 2}$ or $k \geq  d^{ \Omega(1) }$, where $c^*$ is some small constant.   
\end{corollary}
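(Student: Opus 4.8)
The plan is to deduce this corollary directly from the main result \Cref{thm:main} via the moment-matching reduction of \cite{DKS17-sq} from NGCA to the Gaussian-mixture testing problem. First I would recall precisely what the reduction supplies: a univariate distribution $A$ that is itself a uniform mixture of $m$ Gaussians, matches its first $2m-1$ moments with $\normal(0,1)$, satisfies $\dtv(A,\normal(0,1)) \ge 1/2$, and whose components are pairwise $(1-\gamma)$-separated in total variation. Lifting $A$ along an arbitrary unit vector $\vec v$ as in \Cref{def:hidden_distr}, the distribution $\ngcav$ is a mixture of $m$ $d$-dimensional Gaussians inheriting the same pairwise separation, with $\dtv(\ngcav,\normal(\vec 0,\vec I)) = \dtv(A,\normal(0,1)) \ge 1/2$. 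Hence every $\ngcav$, $\vec v \in \Sp^{d-1}$, is an admissible alternative distribution for \Cref{prob:gmm_learning}, and the sub-instance of that problem in which the alternative family is $\{\ngcav : \vec v \in \Sp^{d-1}\}$ with $\vec v$ Haar-uniform is exactly the NGCA testing problem of \Cref{def:ngca_problem} for this $A$. Crucially, the map on samples in this reduction is the identity, so a degree-$k$ PTF solving \Cref{prob:gmm_learning} is verbatim a degree-$k$ PTF for that NGCA instance; even a non-trivial reduction of $O(1)$ degree would only rescale $k$ by a constant and hence not affect the $k \ge d^{\Omega(1)}$ alternative.

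Given this, I would argue by contradiction. Suppose $h = \sgn(p)$ is a degree-$k$ PTF that solves \Cref{prob:gmm_learning} with high constant probability while $n < d^{m(1-c^*)/2}$ and $k < d^{\Omega(1)}$. Since a test "solving the problem" succeeds against every admissible alternative, it succeeds against each $\ngcav$ and therefore, averaging over a Haar-uniform $\vec v$, we get $\big|\,\E_{\multix\sim\normal(\vec 0,\vec I)}[\sgn(p(\multix))] - \E_{\vec v\sim\usphere,\,\multix\sim\ngcav}[\sgn(p(\multix))]\,\big| > 0.11$ (unpacking the success guarantee into the advantage of \Cref{def:goodptf}). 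Now I would apply \Cref{thm:main} with the even moment parameter $m' := 2(m-1)$ — matching $2m-1$ moments certainly implies matching $2m-2$ — and a sufficiently small constant in place of $c^*$; conditions (i)–(iii) of \Cref{thm:main} hold, so it forces the above difference to be $\le 0.11$, a contradiction. Hence $n \ge d^{m(1-c^*)/2}$ or $k \ge d^{\Omega(1)}$.

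The one place I expect to spend any real care is reconciling the exponent with the table entry: what the reduction plus \Cref{thm:main} literally yield is a sample lower bound with exponent $(1/4-c^*)\cdot 2(m-1)$, and writing this as $m(1-c^*)/2$ requires absorbing the $m\mapsto m-1$ shift, the parity rounding $2m-1\mapsto 2(m-1)$, and the gap between $1/4$ and $1/4-c^*$ into the (arbitrarily small) constant $c^*$ — the same constant-renaming used throughout the comparison section, carrying no mathematical content beyond the observation that $2m-1=\Theta(m)$ is the ``effective'' number of matched moments, which is exactly why the exponent here is $\approx m/2$ rather than the $\approx m/4$ of the other tasks. Finally, I would present the other three corollaries of the section by the identical template: route through the total-variation and Huber meta-problems of \Cref{def:tv_corruption,def:huber_contamination}, invoke the corresponding reductions (to NGCA matching $B$, $m$, and $m$ moments respectively), and apply \Cref{thm:main}, changing only the moment count and hence the numerical exponent.
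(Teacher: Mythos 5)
Your proposal is correct and follows essentially the same route as the paper: the paper's justification is exactly to invoke the \cite{DKS17-sq} reduction showing \Cref{prob:gmm_learning} embeds an NGCA instance whose non-Gaussian component matches $2m-1$ moments, and then apply \Cref{thm:main}. Your additional bookkeeping about the parity adjustment ($2m-1 \mapsto 2m-2$), the identity sample map, and absorbing the resulting shifts into the small constant $c^*$ is more explicit than the paper's one-line treatment but carries the same content.
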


\end{document}